\documentclass[fullpage,11pt]{article}

\usepackage[sc]{mathpazo}
\usepackage{fullpage,amsthm}
\usepackage{graphicx} 
\usepackage{array} 
\usepackage{amsmath, amssymb, amsfonts, verbatim}
\usepackage{hyphenat,epsfig,subfigure,multirow}
\usepackage[usenames,dvipsnames]{xcolor}

\newcommand{\EAT}[1]{}

\usepackage{tcolorbox}

\definecolor{DarkRed}{rgb}{0.5,0.1,0.1}
\definecolor{DarkBlue}{rgb}{0.1,0.1,0.5}

\usepackage{hyperref}
\hypersetup{
colorlinks=true,
pdfnewwindow=true,
citecolor=ForestGreen,
linkcolor=DarkRed,
filecolor=DarkRed,
urlcolor=DarkBlue
}

\usepackage{bm}
\usepackage{url}
\usepackage{xspace} 
\usepackage[mathscr]{euscript}
\usepackage{algorithm}
\usepackage[noend]{algpseudocode}
\makeatletter
\def\BState{\State\hskip-\ALG@thistlm}
\makeatother

\usepackage{cite}
\usepackage{enumerate}

\usepackage[a4paper,margin=1in]{geometry}

\newtheorem{lemma}{Lemma}[section]
\newtheorem{theorem}[lemma]{Theorem}

\newtheorem{corollary}[lemma]{Corollary}
\newtheorem{claim}[lemma]{Claim}

\newtheorem{definition}{Definition}
\newtheorem{problem}{Problem}
\newtheorem{remark}[lemma]{Remark}
\newtheorem*{claim*}{Claim}
\newtheorem*{problem*}{Problem}

\allowdisplaybreaks

\renewcommand{\qed}{\nobreak \ifvmode \relax \else
      \ifdim\lastskip<1.5em \hskip-\lastskip
      \hskip1.5em plus0em minus0.5em \fi \nobreak
      \vrule height0.75em width0.5em depth0.25em\fi}

\usepackage[T1]{fontenc}
\usepackage[utf8]{inputenc}

\newcommand{\ourinfo}[1]{Department of Computer and Information Science, University of Pennsylvania. Supported in part by National Science
  Foundation grants CCF-1116961, CCF-1552909, and IIS-1447470 and an Adobe research award.
\newline\noindent Email: \texttt{#1}.}
\newenvironment{tbox}{\begin{tcolorbox}[
		enlarge top by=5pt,
		enlarge bottom by=5pt,
		 boxsep=0pt,
                  left=4pt,
                  right=4pt,
                  top=10pt,
                  arc=0pt,
                  boxrule=1pt,toprule=1pt,
                  colback=white
                  ]
	}
{\end{tcolorbox}}

\renewenvironment{proof}[1][Proof]{\paragraph{#1.}\xspace}{\hfill$\qed$}

\newcommand{\ssection}[1]{\noindent\textbf{#1}}
\newcommand{\toShrink}{-.25cm}
\newcommand{\toShrinkEnu}{-.2cm}
\newcommand{\toShrinkEqn}{-.45cm}

\newenvironment{cLemma}{\vspace{\toShrink}\begin{lemma}}{\end{lemma}\vspace{\toShrink}}

\newenvironment{cEqnarray}{\vspace{\toShrink}\begin{eqnarray*}}{\end{eqnarray*}\vspace{\toShrinkEqn}\\}

\newenvironment{cEnumerate}[1][1.]{\vspace{\toShrinkEnu}\begin{enumerate}[#1]
\setlength{\itemsep}{0pt}
\setlength{\parskip}{0pt}
\setlength{\parsep}{0pt}}
{\end{enumerate}\vspace{\toShrinkEnu}}

\newcommand{\toShrinkTextbox}{0cm}
\newcommand{\cTextbox}[2]{\vspace{\toShrinkTextbox}\textbox{#1}{\vspace{\toShrinkTextbox} #2 \vspace{\toShrinkTextbox}}}


\renewcommand{\bar}[1]{\overline{#1}}

\newcommand{\etal}{{\it et al.\,}}
\newcommand{\eps}{\epsilon}

\newcommand{\card}[1]{\left\vert{#1}\right\vert}

\newcommand{\IR}{\ensuremath{\mathbb{R}}}

\newcommand{\ceil}[1]{{\left\lceil{#1}\right\rceil}}
\newcommand{\floor}[1]{{\left\lfloor{#1}\right\rfloor}}

\newcommand{\set}[1]{\ensuremath{\left\{ #1 \right\}}}

\newcommand{\poly}{\mbox{\rm poly}}

\newcommand{\REM}[1]{}
\newcommand{\Ot}{\widetilde{O}}

\newcommand{\union}{\ensuremath{\cup}}


\newcommand{\textbox}[2]{
{
\begin{tbox}
\textbf{#1}
{#2}
\end{tbox}
}
}



\setlength{\marginparwidth}{2cm}

\newcommand{\eat}[1]{}



\newcommand{\bv}[1]{\textbf{#1}}

\newcommand{\bx}{\bv{x}}



\newcommand{\app}[2]{#1|_{#2}}


\newcommand{\dtlg}{\ensuremath{\,:\!-\;}}

\newcommand{\fullcover}{\text{\sf Coverage}\xspace}
\newcommand{\cq}{\text{CQ$^{\neg}$}}

\newcommand{\UCQ}{\text{\normalfont UCQ}\xspace}

\newcommand{\ucqs}{\text{UCQ$^{\neg}$s}}

\newcommand{\qnsub}{\ensuremath{Q_{\tt NOTSUB}}}

\newcommand{\FC}{\mathcal{F}}

\newcommand{\qexp}{\ensuremath{Q_{\tt EXP}}}
\newcommand{\norm}[1]{\ensuremath{\|#1\|}}
\newcommand{\rowsample}{\mathrm{RowSample}}
\newcommand{\regressionSketch}{{\sc REG-Sketch}\xspace}

\newcommand{\Setdis}{{\small {\sf SetDisjointness}}\xspace}
\newcommand{\queryFont}[1]{{\sf #1}}
\newcommand{\regQuery}{\queryFont{regression}\xspace}
\newcommand{\countQuery}{\queryFont{count}\xspace}
\newcommand{\sumQuery}{\queryFont{sum}\xspace}
\newcommand{\avgQuery}{\queryFont{average}\xspace}

\newcommand{\quantileQuery}{\queryFont{quantiles}\xspace}
\newcommand{\stconQuery}{\queryFont{st-connectivity}\xspace}

\newcommand{\lrceil}[1]{\ensuremath{ \lceil #1 \rceil}}
\newcommand{\lrangle}[1]{\ensuremath{ \langle #1 \rangle}}
\newcommand{\row}[2]{\ensuremath{ #1_{(#2)} }}

\newcommand{\bvA}{\ensuremath{\bv{A}}}
\newcommand{\bvv}{\ensuremath{\bv{v}}}
\newcommand{\bva}{\ensuremath{\bv{a}}}
\newcommand{\bvM}{\ensuremath{\bv{M}}}
\newcommand{\bvb}{\ensuremath{\bv{b}}}
\newcommand{\bvx}{\ensuremath{\bv{x}}}
\newcommand{\bve}{\ensuremath{\bv{e}}}
\newcommand{\bvp}{\ensuremath{\mathcal{P}}}

\newcommand{\probP}{\ensuremath{\bvp=(p_1, p_2, \dots, p_n)}}
\newcommand{\probPi}{\ensuremath{\bvp_i=(p_{i,1}, p_{i,2}, \dots, p_{i,n})}}

\newcommand{\countSketch}{{\sc CNT-Sketch}\xspace}
\newcommand{\sumSketch}{{\sc SUM-Sketch}\xspace}
\newcommand{\quantilesSketch}{{\sc QTL-Sketch}\xspace}
\newcommand{\trail}[1]{\text{\sc trail}\ensuremath{(#1)}}
\newcommand{\thtrail}[2]{{\sc \lrangle{#1, #2}-trail}}
\newcommand{\hash}{\ensuremath{g}}

\newcommand{\eiT}{\ensuremath{e_i^T}}
\newcommand{\edScheme}{$(\eps, \delta)$-linear provisioning scheme\xspace}
\newcommand{\edSchemes}{$(\eps, \delta)$-linear provisioning schemes\xspace}

\newcommand{\record}{\textbf{record}\xspace}

\newcommand{\compose}{\ensuremath{\langle Q_{L}; G_{\bar{A}} ; Q_{N} \rangle}\xspace}

\title{Algorithms for Provisioning Queries and Analytics}
\author{Sepehr Assadi\thanks{\ourinfo{\{sassadi,sanjeev,yangli2\}@cis.upenn.edu}}\and
Sanjeev Khanna\footnotemark[1] \and
Yang Li\footnotemark[1] \and
Val Tannen\footnote{Department of Computer and Information Science, University of Pennsylvania. Supported in part by National Science Foundation grants IIS-1217798 and IIS-1302212.
Email: \texttt{val@cis.upenn.edu}.}
}

\date{}

\renewcommand{\hat}[1]{\widehat{#1}}
\renewcommand{\tilde}[1]{\widetilde{#1}}
\begin{document}

\maketitle

\thispagestyle{empty}
\begin{abstract}
Provisioning is a technique for avoiding repeated expensive
computations in \emph{what-if analysis}.  Given a query, an analyst
formulates $k$ \emph{hypotheticals}, each retaining some of the tuples
of a database instance, {\em possibly overlapping}, and she wishes to
answer the query under \emph{scenarios}, where a scenario is defined
by a subset of the hypotheticals that are ``turned on''.  We say that
a query admits \emph{compact} provisioning if given any database instance and
any $k$ hypotheticals, one can create a poly-size (in $k$) \emph{sketch}
that can then be used to answer the query under any of the $2^{k}$
possible scenarios without accessing the original instance.

In this paper, we focus on provisioning complex queries that combine
relational algebra (the logical component), grouping, and 
statistics/analytics (the numerical component). 
We first show that queries that compute
quantiles or linear regression (as well as simpler
queries that compute count and sum/average of positive values)
can be compactly provisioned to provide (multiplicative) \emph{approximate} 
answers to an arbitrary precision. 
In contrast, \emph{exact} provisioning for each of these statistics
requires the sketch size to be exponential in $k$.
We then establish that for any complex query whose
logical component is a \emph{positive} relational algebra query, as long as
the numerical component can be compactly provisioned, the complex
query itself can be compactly provisioned. 
On the other hand, introducing
negation \emph{or} recursion in the logical component again requires the sketch
size to be exponential in $k$.
While our positive results use algorithms that do not access the original
instance after a scenario is known, we prove our lower bounds even for
the case when, knowing the scenario, limited access to the instance is allowed.
\end{abstract}

\clearpage
\setcounter{page}{1}

\tableofcontents
\newpage

\section{Introduction}
\label{sec:intro}

``What if analysis'' is a common technique for investigating the
impact of decisions on outcomes in science or business. It almost
always involves a data analytics computation. Nowadays such a
computation typically processes very large amounts of data and thus
may be expensive to perform, especially repeatedly. An analyst is
interested in exploring the computational impact of multiple
\emph{scenarios} that assume modifications of the input to the
analysis problem. Our general aim is to avoid repeating expensive
computations for each scenario. For a given problem, and starting from
a given set of potential scenarios, we wish to perform just \emph{one}
possibly expensive computation producing a small \emph{sketch} (i.e., a compressed representation of the input) such
that the answer for any of the given scenarios can be derived rapidly
from the sketch, without accessing the original (typically very large)
input.  We say that the sketch is ``provisioned'' to deal with the
problem under any of the scenarios and
following~\cite{deutch2013caravan}, we call the whole approach
\emph{provisioning}. Again, the goal of provisioning is to allow an
analyst to efficiently explore a multitude of scenarios, using only
the sketch and thus avoiding expensive recomputations for each
scenario.

In this paper, we apply the provisioning approach to queries that perform
\emph{in-database analytics}~\cite{HellersteinRSWFGNWFLK12}%
\footnote{In practice, the MADlib project~\cite{MADlib} has been one
of the pioneers for in-database analytics, primarily in collaboration
with Greenplum DB~\cite{GpDB}.
By now, major RDBMS products such as IBM DB2, MS SQL Server, and Oracle DB
already offer the ability to combine extensive analytics with
SQL queries.
}.
These are queries that combine
\emph{logical} components (relational algebra and Datalog), grouping,
and \emph{numerical} components (e.g., aggregates, 
quantiles and linear 
regression). Other analytics are discussed under further work.

Abstracting away any data integration/federation,
we will assume that the inputs are
relational instances and that the scenarios are defined by a set of
\emph{hypotheticals}. We further assume that each hypothetical
indicates the fact that certain tuples of an input instance are
\emph{retained}
(other semantics for hypotheticals are discussed under further work).

A scenario consists of turning on/off each of the hypotheticals.
Applying a scenario to an input instance therefore means keeping only
the tuples retained by at least one of the hypotheticals that are
turned on.  Thus, a trivial sketch can be obtained by applying each
scenario to the input, solving the problem for each such modified
input and collecting the answers into the sketch. However, with $k$
hypotheticals, there are \emph{exponentially} (in $k$) many scenarios.
Hence, even with a moderate number of
hypotheticals, the size of the sketch could be enormous.  Therefore,
as part of the statement of our problem we will aim to provision a
query by an algorithm that maps each (large) input instance to a \emph{compact}
(essentially size $\poly(k)$) sketch.

\paragraph{Example.} Suppose a large retailer has many and diverse
sales venues (e.g., its own stores, its own web site, through
multiple other stores, and through multiple other web retailers).
An analyst working for the retailer
is interested in learning, for each product in, say, ``Electronics'',
a regression model for the way in which the \emph{revenue} from the
product depends on both a sales venue's \emph{reputation}
(assume a numerical score) and a sales venue \emph{commission} (in \%; 0\%
if own store).
Moreover, the analyst wants to ignore
products with small sales volume unless they have a large MSRP (manufacturer's
suggested retail price).
Usually there is a large (possibly distributed/federated)
database that captures enough information to allow the computation
of such an analytic query.
For simplicity we assume in this example that the revenue
for each product ID and each sales venue is in one table
and thus we have the following query with a self-explanatory schema:
\begin{verbatim}
SELECT x.ProdID, LIN_REG(x.Revenue, z.Reputation, z.Commission) AS (B, A1, A2)
FROM  RevenueByProductAndVenue x
INNER JOIN Products y ON x.ProdID=y.ProdID
INNER JOIN SalesVenues z ON x.VenueID=z.VenueID
WHERE y.ProdCategory="Electronics" AND (x.Volume>100 OR y.MSRP>1000)
GROUP BY x.ProdID
\end{verbatim}
The syntax for treating linear regression as a multiple-column-aggregate
is simplified for illustration purposes in this example.
Here the values under the attributes \texttt{B,A1,A2} denote, for
each \texttt{ProdID}, the coefficients of the linear
regression model that is learned, i.e.,
\texttt{Revenue = B + A1*Reputation + A2*Commission}.

A desirable what-if analysis for this query may involve
hypotheticals such as retaining certain venue types, retaining certain
venues with specific sales tax properties, retaining certain product types
(within the specified category, e.g., tablets), and many others.
Each of these hypotheticals can in fact be implemented
as selections on one or more of the tables in the query
(assuming that the schema includes the appropriate information).
However, combining hypotheticals into scenarios is problematic.
The hypotheticals overlap and thus cannot be separated.
With $10$ (say) hypotheticals there will be $2^{10} = 1024$ (in
practice at least hundreds) of regression models of interest
for each product. Performing a lengthy computation for each one of these
models is in total very onerous.  Instead, we can \emph{provision}
the what-if analysis of this query
since the query in this example falls within the class covered by our
positive results.

\paragraph{Our results.} Our goal is to characterize the feasibility of provisioning with sketches of
\emph{compact} size (see Section~\ref{sec:prob} for a formal
definition) for a practical class of
\emph{complex queries} that consist of a \emph{logical}
component (relational algebra or Datalog), followed by a \emph{grouping}
component, and then by a \emph{numerical} component (aggregate/analytic)
that is applied to each group (a more detailed definition is given
in Section~\ref{sec:mix}).

The main challenge that we address, and the part where our main
contribution lies, is the design of compact provisioning schemes for
numerical queries, specifically linear ($\ell_2$) regression and quantiles.
Together with the usual count, sum and average, these are defined in
Section~\ref{sec:num-qry} as queries that take a set of numbers or of
tuples as input and return a number or a tuple of constant width as
output. It turns out that if we expect exact answers, then none of
these queries can be compactly provisioned.  However, we show that
compact provisioning schemes indeed exist for all of them if we relax
the objective to computing near-exact answers (see
Section~\ref{sec:prob} for a formal definition). The following theorem
summarizes our results for numerical queries (see
Section~\ref{sec:num-qry}):

\begin{theorem}[Informal]\label{thm:numerical}
  The \emph{quantiles, linear ($\ell_2$) regression, count, and
    sum/average (of positive numbers) queries} can be compactly
  provisioned to provide (multiplicative) approximate answers to an arbitrary
  precision, while their exact provisioning requires the sketch size
  to be exponential in the number of hypotheticals.
\end{theorem}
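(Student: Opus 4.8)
The plan is to prove the two halves separately. For the positive side the guiding principle is that a scenario's active set $A_S=\bigcup_{j\in S}A_j$ (with $A_j$ the tuples retained by hypothetical $j$) is a \emph{union} of the per-hypothetical sets, so if a suitably randomized sample of each $A_j$ can be \emph{merged} into a valid sample of $A_S$ for all $2^k$ scenarios at once, the sketch can just be the $k$ per-hypothetical samples. Concretely, fix a random hash $h$ on tuples and, for each $j$, keep the $m=\Theta(\eps^{-2}(k+\log(1/\delta)))$ tuples of $A_j$ with smallest $h$-value together with their data values. Since $A_j\subseteq A_S$ for $j\in S$, any tuple among the $m$ globally smallest hashes of $A_S$ is among the $m$ smallest hashes of every $A_j$ that retains it, so from the samples of the hypotheticals in $S$ one reconstructs exactly the $m$ smallest hashes of $A_S$ — a uniform random $m$-subset of $A_S$. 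Its $m$-th smallest hash gives a $(1\pm\eps)$ estimate of $\card{A_S}$ (the \countQuery answer), and its values form an $\eps$-approximate quantile summary of $A_S$; the union bound over the $2^k$ scenarios is what forces the $+k$ inside $m$, leaving the sketch of size $O(km)=\poly(k)$. For \sumQuery of positive weights, replace the uniform sample by a priority sample (draw $\alpha_t\sim U(0,1)$, key $w_t/\alpha_t$, keep the top-$m$ keys with their weights per hypothetical); top-$m$ keys compose across unions exactly as above, and the priority-sampling estimator gives a $(1\pm\eps)$ estimate of $\sum_{t\in A_S}w_t$. \avgQuery is the ratio of the \sumQuery and \countQuery estimates. (The hash and the $\alpha_t$ are drawn from a $\poly(k,1/\eps,\log n)$-size seed of a sufficiently independent family, or taken ideal.)

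\textbf{Approximate provisioning for \regQuery.} The active instance for $S$ is the set of rows $[X_S\mid y_S]$ over the tuples in $A_S$, and the answer depends on it only through its least-squares solution, so it suffices to store a $(1\pm\eps)$ $\ell_2$-subspace embedding of $[X_S\mid y_S]$, simultaneously for all $S$. I would build one by leverage-score sampling with \emph{per-hypothetical} scores: for each $j$ compute the (augmented) leverage scores $\ell^{(j)}_i$ of the rows of $[X_{A_j}\mid y_{A_j}]$, and sample row $i$ with probability $p_i\propto\eps^{-2}(k+\log(1/\delta))\cdot\max_{j\in H(i)}\ell^{(j)}_i$, storing the surviving rescaled rows and their types. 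Two facts make this work: (i) $\sum_i\max_{j\in H(i)}\ell^{(j)}_i\le\sum_j\sum_{i\in A_j}\ell^{(j)}_i=\sum_j\rank[X_{A_j}\mid y_{A_j}]=O(kd)$, so the sketch has $\poly(k)$ rows; and (ii) $A_j\subseteq A_S$ gives $[X_{A_j}\mid y_{A_j}]^\top[X_{A_j}\mid y_{A_j}]\preceq[X_S\mid y_S]^\top[X_S\mid y_S]$, whence $\ell^{(j)}_i\ge\ell^{(S)}_i$ for every $i\in A_S$ and $j\in S\cap H(i)$, so $p_i$ dominates a logarithmic factor times the $S$-leverage score of row $i$ — the standard sufficient condition for the rescaled sample to be a $(1\pm\eps)$ subspace embedding of $[X_S\mid y_S]$, with per-scenario failure probability $\delta/2^k$ absorbed by the $+k$ and a union bound over scenarios. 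Solving least squares on the stored rows of $A_S$ then returns a near-optimal-cost regression model.

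\textbf{Lower bounds for exact provisioning.} Index tuples by the nonempty subsets $\tau\subseteq[k]$, tuple $\tau$ being retained by exactly the hypotheticals in $\tau$. For \countQuery, include tuple $\tau$ iff a planted bit $x_\tau=1$: then $\card{A_S}$ over all $S$ determines $f(T):=\sum_{\tau\subseteq T}x_\tau$ for all $T$ (take $T=[k]\setminus S$), and M\"obius inversion on the subset lattice recovers every $x_\tau$, so the sketch must store $\ge 2^k-1$ bits. \sumQuery subsumes \countQuery (weights $\equiv 1$). For \avgQuery, keep all $2^k-1$ tuples present with planted values $v_\tau\in\{0,2\}$, so every $\card{A_S}$ is fixed and known and $\sum_{\tau\cap S\neq\emptyset}v_\tau=\card{A_S}\cdot\mathrm{avg}(S)$ recovers the bits. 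For \quantileQuery, give present tuples distinct known values; querying all quantiles of $A_S$ reveals its whole value-set, hence which planted tuples meet $S$, reducing to the \countQuery instance. For \regQuery ($d=1$ with an intercept, all $2^k-1$ tuples present, distinct generic $x$-coordinates, planted $y_\tau\in\{0,1\}$), the exact solution $\beta(S)$ together with the known Gram matrix $[X_S\mid y_S]^\top[X_S\mid y_S]$ determines $[X_S\mid y_S]^\top y_S$, whose first coordinate is $\sum_{\tau\cap S\neq\emptyset}y_\tau$ — the \sumQuery instance again. These instances have size exponential in $k$, which the model permits since the sketch must be $\poly(k)$ for \emph{every} instance; and the reductions still go through when, after the scenario is revealed, a bounded number of probes into the instance is allowed, matching the abstract's stronger claim.

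\textbf{Main obstacle.} The lower bounds are routine counting / one-way-communication arguments once the encodings are set up; the substance is the sampling design for \sumQuery and especially \regQuery. The crux is to choose sampling rates that serve all $2^k$ scenarios at once: the non-obvious ingredient is that \emph{per-hypothetical} leverage scores (there are only $k$ sets of them, versus exponentially many scenario-specific ones) simultaneously dominate every scenario's leverage scores — by Loewner-order monotonicity under row deletion — and have total mass only $O(kd)$, so the $2^k$-way union bound costs merely a factor $k$ in the sketch size. A secondary subtlety is the precise approximation notion for \quantileQuery: the bottom-$m$-hash composition yields a uniform sample and hence an additive-in-rank guarantee immediately, and upgrading to the multiplicative (relative) guarantee claimed needs a more delicate multi-resolution variant in place of a single uniform sample.
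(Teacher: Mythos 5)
Your lower bounds and your \countQuery and \regQuery upper bounds are sound, and in places genuinely different from the paper's. For the lower bounds the paper routes everything through a single \fullcover problem whose hardness is proved information-theoretically; that detour is what buys robustness to randomization and to $o(n)$ post-scenario access to the instance. Your subset-indexed encoding with M\"obius inversion is more elementary and does suffice for the deterministic exact statement, but your closing assertion that the reductions ``still go through'' under bounded probes into the instance is precisely the part the paper has to work hardest for, and you give no argument for it. Note also that your \quantileQuery lower bound requires $\phi$ to be supplied at extraction time (so that the whole value multiset can be read off); the paper proves the harder case of a fixed $\phi$ (the median) and even disjoint hypotheticals. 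On the upper-bound side, your bottom-$m$-hash sketch for \countQuery and your leverage-score scheme for \regQuery rest on the same two pillars as the paper's constructions --- union-composability of per-hypothetical samples, and monotonicity of leverage scores under row addition --- and your independent-Bernoulli sampling with $p_i\propto\max_{j}\ell^{(j)}_i$ together with the mass bound $\sum_i\max_j\ell^{(j)}_i=O(kd)$ is arguably cleaner than the paper's two-phase resampling with recorded permutations. For \sumQuery the paper instead buckets weights geometrically and runs a count sketch per bucket with a pruning step; your priority-sampling route is plausible, but you still owe a justification of the exponentially small per-scenario failure probability (priority sampling's off-the-shelf guarantee is a variance bound, so you need median-of-repetitions or a bespoke concentration argument to survive the union bound over $2^k$ scenarios).

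The genuine gap is the \quantileQuery upper bound. The theorem claims a \emph{multiplicative} $(1\pm\eps)$ rank approximation, and the paper is explicit that this is its point of departure from additive ($\pm\eps n$) duplicate-resilient sketches. A single uniform bottom-$m$ sample yields only the additive guarantee; you acknowledge that a ``multi-resolution variant'' is needed but do not supply it, and that variant is where the real work lies. The paper samples each hypothetical at a geometric ladder of rates $t/r_j$ with $r_j=(1+\eps')^j$, keeps only the $\lceil(1+3\eps')t\rceil$ smallest sampled tuples per hypothetical per rate, and --- crucially --- records a $k$-bit characteristic vector for each sampled tuple so that at extraction a tuple appearing in several turned-on hypotheticals is collected exactly once (from the smallest index containing it), making its effective inclusion probability exactly $t/r_\gamma$ in every scenario. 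One must then argue that truncating each per-hypothetical sample to its $\lceil(1+3\eps')t\rceil$ smallest elements never discards a tuple that the deduplicated global sample would have needed, which takes a separate Chernoff argument. None of this is routine, and without it the multiplicative claim for quantiles is unproven in your write-up.
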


Our results on provisioning numerical queries can then be used
for complex queries, as the following theorem summarizes
(see Section~\ref{sec:mix}):

\begin{theorem}[Informal]\label{thm:complex}
Any complex query whose logical component is a \emph{positive
  relational algebra} query can be compactly provisioned to provide an
approximate answer to an arbitrary precision as long as its numerical
component can be compactly provisioned for the same precision, and as
long as the number of groups is not too large.  On the other hand,
introducing \emph{negation} or \emph{recursion} in the logical
component requires the sketch size to be
exponential in the number of hypotheticals.
\end{theorem}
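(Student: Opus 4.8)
\medskip\noindent\textbf{Proof plan.}

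\emph{Positive direction.} I would prove it by a black-box reduction to the provisioning of the numerical component $N$, so that Theorem~\ref{thm:numerical} settles the cases of interest. Fix a complex query whose logical part $Q$ is a positive relational algebra expression, followed by a grouping and then $N$. The plan rests on two structural facts. (a)~\emph{Monotonicity/provenance}: writing $L(t)\subseteq[k]$ for the set of hypotheticals retaining a base tuple $t$ and $x_i$ for the indicator that hypothetical $i$ is on, a base tuple $t$ survives scenario $S$ iff $\bigvee_{i\in L(t)}x_i$ holds; propagating these conditions through $\sigma,\pi,\times,\cup,\rho$ (never through a negation) shows that each tuple $o$ produced by $Q$ on the full instance acquires a \emph{monotone} Boolean condition $\phi_o(x_1,\dots,x_k)$ with $o\in Q(I_S)$ iff $\phi_o$ is satisfied by the characteristic vector of $S$ (together, for bag semantics, with a multiplicity counting satisfying derivations). (b)~\emph{Bounded degree}: since $Q$ is fixed it has a fixed number $d$ of base-relation occurrences, and an induction on the structure of $Q$ (products add, everything else does not) shows every minterm of $\phi_o$ has size at most $d$, so $\phi_o$ is a monotone DNF whose terms each mention at most $d$ of the $x_i$. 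First I would use (b) to pass to polynomially many ``meta-hypotheticals'': one meta-hypothetical $M$ for each $M\subseteq[k]$ with $|M|\le d$ (there are $r:=\sum_{j\le d}\binom{k}{j}=O(k^{d})=\poly(k)$ of them), on in scenario $S$ precisely when $M\subseteq S$. For each group $g$ I would build the labeled universe $U_g$ that, for every tuple $o$ of $Q$ falling in $g$, contains $o$'s numerical value (with its multiplicity) labeled by the set $\{M : M\text{ is a minterm of }\phi_o\}$. Then for every $S$ the sub-multiset of $U_g$ retained by the meta-scenario $\{M:M\subseteq S\}$ is \emph{exactly} the multiset of $N$-inputs that $Q(I_S)$ feeds into group $g$, because $o$ contributes iff $\phi_o$ is satisfied by $S$ iff some minterm of $\phi_o$ lies in $S$. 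So I would run $N$'s compact provisioning scheme on each $U_g$ with these $r=\poly(k)$ meta-hypotheticals and take the disjoint union of the resulting sketches; querying the $g$-sketch at the translated scenario returns the required $\eps$-approximation of $N$ on group $g$. Each sub-sketch has size $\poly(r)=\poly(k)$, so the total is $\poly(k)$ times the number of groups, which is $\poly(k)$ under the hypothesis. The steps I expect to be delicate are verifying that a \emph{syntactic} parameter of $Q$ bounds minterm size uniformly over all instances (unions and self-joins need care) and the bookkeeping around groups that appear and disappear across scenarios together with provenance multiplicities — precisely where the dependence on the number of groups is incurred.

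\emph{Negative direction, negation.} For an \emph{arbitrary} family $\FC$ of $(k/2)$-element subsets of $[k]$ I would build a fixed relational-algebra-with-difference query and an instance whose answer under scenario $S$ reveals whether $S\in\FC$; since $\FC$ carries $\binom{k}{k/2}=2^{\Omega(k)}$ bits, any sketch answering all $2^{k}$ scenarios — even one reporting each answer only up to a $(1+\eps)$ factor, which for these Boolean-valued answers is the exact answer — must use $2^{\Omega(k)}$ bits. Concretely: let the domain contain the $(k/2)$-subsets as points; put relation $A$ equal to $\FC$ and relation $C=\{(v,i):i\notin v\}$; let every hypothetical retain all of $A$, and let hypothetical $i$ retain exactly the pairs $(v,i)$ of $C$. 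Evaluating $A\setminus\pi_1(C)$ under scenario $S$ (with $\pi_1$ the projection to the first attribute) returns $\{v\in A : v\supseteq S\}$, which for $|S|=k/2$ has size $1$ if $S\in\FC$ and $0$ otherwise, so the sketch recovers all of $\FC$. To also handle the variant where, after $S$ is revealed, the algorithm may read a limited portion of the instance, I would spread each bit of $\FC$ over $2^{\Omega(k)}$ tuples (pad each point's gadget with many decoys and let the query distinguish a ``small'' gadget, $v\notin\FC$, from a ``large'' one, $v\in\FC$, by a constant factor in the count) so that reading a sublinear number of tuples cannot pin down any bit; confirming this padding stays within the fragment and preserves the approximation is routine.

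\emph{Negative direction, recursion.} For the recursive (Datalog) case I would replace set difference by transitive closure. Given an arbitrary $\FC$ as above, build a graph whose edges form the input Edge relation, with hypothetical $i$ owning the edges ``labeled $i$''; for each $M\in\FC$ add one $s$--$t$ path consisting of $|M|=k/2$ edges labeled by the elements of $M$, so that path is fully present under $S$ iff $M\subseteq S$. The Datalog query testing whether $t$ is reachable from $s$ then answers ``yes'' under $S$ iff some $M\in\FC$ is contained in $S$, i.e.\ iff $S\in\FC$ when $|S|=k/2$; again the sketch must encode $\FC$, giving the $2^{\Omega(k)}$ bound, and the same decoy-padding (or, if one wants the hard bits laid out more uniformly, a Ruzsa--Szemer\'{e}di-type graph in place of the disjoint-paths gadget) upgrades it to the limited-access model. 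The main obstacle across the negative direction is engineering a \emph{single} fixed query in the fragment that, together with the hypothetical mechanism, implements ``select exactly the outputs whose support equals $S$'' for all $S$ at once; once that selection is in hand, the $2^{k}$-way encoding of $\FC$ and the lower bound follow.
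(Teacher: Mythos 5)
Your proposal is correct, and it splits naturally into a part that matches the paper and a part that does not. For the positive direction your argument is, in substance, the paper's own proof of Theorem~\ref{thm:mix}: your meta-hypotheticals indexed by subsets $M\subseteq[k]$ with $\card{M}\le d$, turned on exactly when $M\subseteq S$, are the paper's hypotheticals $\hat{h}_{S'}(\hat{I})=Q_L(\app{I}{S'})$ for $\card{S'}\le b$, and your minterm-of-the-provenance-DNF formulation coincides with the paper's observation that every output tuple of a \UCQ of maximum CQ size $b$ is derivable from at most $b$ hypotheticals (the provenance view you take is also spelled out in the paper's remark following Corollary~\ref{cor:cq-upper}); the per-group decomposition and the $\poly(k,\log n)$ accounting are identical. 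For the negative direction you take a genuinely different route: you encode an arbitrary family $\FC$ of $(k/2)$-subsets directly into the instance and argue that the sketch must carry $\binom{k}{k/2}=2^{\Omega(k)}$ bits, whereas the paper reduces both the negation case (via $\qnsub()\dtlg A(x),\neg B(x)$) and the recursion case (via \stconQuery on an $n$-edge path) to a single ``hard'' \fullcover problem whose lower bound is proved once, information-theoretically, in Theorem~\ref{thm:full-cover}. Your direct encoding is simpler and self-contained, and both of your gadgets (the $A\setminus\pi_1(C)$ construction and the one-path-per-member-of-$\FC$ graph) are correct; what the paper's reduction buys is the two-parameter bound $\min(2^{\Omega(k)},\Omega(n))$ for all regimes of $n$ versus $k$, and—more importantly—robustness to an extraction algorithm that may examine $o(n)$ tuples of the instance \emph{after} the scenario is revealed. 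On that last point your proposal overreaches: you call the decoy-padding upgrade ``routine,'' but the examination algorithm chooses which tuples to read adaptively, \emph{after} seeing both the scenario and the sketch, so a sublinear number of reads can in principle be steered straight to the relevant gadget; ruling this out is exactly the delicate part of the paper's proof of Theorem~\ref{thm:full-cover} (separating the information contributed by the compression and examination algorithms via a random bijection and a KL-divergence/Pinsker argument), and it does not follow from padding alone. Since the informal statement you were asked to prove does not itself require the limited-access extension, this does not undermine your proof of the theorem as stated, but the padding claim should not be presented as settled.
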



\paragraph{Our techniques.}
At a high-level, our approach for compact provisioning can be described as follows. We start by building a sub-sketch for each hypothetical by focusing solely on the retained tuples of each hypothetical individually.
We then examine these sub-sketches against each other and collect additional information from the original input to summarize the effect
of appearance of other hypotheticals to each already computed sub-sketch. The first step usually involves using well-known (and properly adjusted) sampling or sketching techniques, while the second step,
which is where we concentrate the bulk of our efforts, is responsible for gathering the information required for combining the sketches and specifically dealing with overlapping hypotheticals.
Given a scenario, we answer the query by fetching the corresponding sub-sketches and merging them together; the result is a new sketch that act as sketch for the input consist of the \emph{union} of the hypotheticals.

We prove our lower bounds by first identifying a central problem,
i.e., the \fullcover problem (see Problem~\ref{prob:full-cover}), with
provably large space requirement for any provisioning scheme, and then
we use reductions from this problem to establish lower bounds for
other queries of interest. The space requirement of the \fullcover
problem itself is proven using simple tools from information theory
(see Theorem~\ref{thm:full-cover}).

\paragraph{Comparison with existing work.}
Our techniques for compact provisioning share some similarities with
those used in data streaming and in the distributed computation models
of~\cite{cormode2010optimal,cormode2011algorithms,woodruff2012tight} 
(see Section~\ref{sec:dis} for further discussion and formal separations),
and in particular with \emph{linear sketching}, which corresponds to
applying a linear transformation to the input data to obtain the
sketch.  However, due to overlap in the input, our sketches are
required to to be composable with the \emph{union} operation (instead
of the \emph{addition} operation obtained by linear sketches) and
hence linear sketching techniques are not directly applicable.

Dealing with duplicates in the input (similar to the overlapping hypotheticals) has also been considered in streaming and distributed computation models (see, e.g.,~\cite{cormode2005space,CormodeMuthukrishnanZhuang06}), which consider sketches that are ``duplicate-resilient''. 
Indeed, for simple queries like count, a direct application of these sketches is
sufficient for compact provisioning (see Section~\ref{sec:csa}). We also remark that the Count-Min sketch~\cite{cormode2005improved} can be applied to approximate quantiles even in the presence of duplication (see~\cite{CormodeMuthukrishnanZhuang06}), i.e., is duplicate-resilient. However, the approximation
guarantee achieved by the Count-Min sketch for quantiles is only \emph{additive} (i.e., $\pm \eps n$), in contrast to the stronger notion of \emph{multiplicative} approximation (i.e., $(1 \pm \eps)$) that we achieve in this paper.  
To the best of our knowledge, there is no similar result concerning duplicate-resilient sketches for multiplicative approximation of quantiles or the linear regression problem, and existing techniques do not
seem to be applicable for our purpose. Indeed one of the primary technical contributions of this paper is
designing provisioning schemes that can effectively deal with overlapping hypotheticals for quantiles and linear regression.

\paragraph{Further related work.} \emph{Provisioning}, in the sense used in this paper, originated
in~\cite{deutch2013caravan} together with a proposal for how to
perform it taking advantage of provenance tracking. Answering queries under
hypothetical updates is studied 
in~\cite{GhandeharizadehHJ96,BalminPP00} but the focus there is on
using a specialized warehouse to avoid transactional costs. (See
also~\cite{deutch2013caravan} for more related work.)

Estimating the number of distinct elements (corresponding to the count query) 
has been studied extensively in data streams~\cite{flajolet1985probabilistic,alon1996space,bar2002counting,kane2010optimal} and in certain distributed
computation models
~\cite{cormode2010optimal,cormode2011algorithms,woodruff2012tight}. 
For estimating quantiles,
~\cite{manku1998approximate,gilbert2002summarize,greenwald2001space,cormode2005improved,greenwald2004power,yi2013optimal}
achieve an additive error of $\eps n$ for an input of length $n$,
and~\cite{gupta2003counting,cormode2006space} achieve a (stronger
guarantee of) $(1\pm \eps)$-approximation. 
Sampling and sketching techniques for $\ell_2$-regression problem have also been studied in~\cite{drineas2006sampling,sarlos2006improved,drineas2011faster,clarkson2009numerical}
for either speeding up the computation or in data streams (see~\cite{mahoney2011randomized,woodruff2014sketching} for excellent 
surveys on this topic).


\section{Problem Statement}\label{sec:prob}

\paragraph{Hypotheticals.} Fix a relational schema $\Sigma$. Our goal is to provision queries on $\Sigma$-instances.  A \emph{hypothetical} w.r.t.  $\Sigma$
is a computable function $h$ that maps every $\Sigma$-instance $I$ to a sub-instance $h(I)\subseteq I$.  Formalisms for specifying hypotheticals are
of course of interest (e.g., apply a selection predicate to each table in $I$) but we do not discuss them here because the results in this paper do
not depend on them.

\paragraph{Scenarios.} We will consider analyses (scenario explorations) that start from a finite set $H$ of hypotheticals.  A \emph{scenario} is a
non-empty set of hypotheticals $S\subseteq H$. The result of applying a scenario $S=\set{h_1,\ldots,h_s}$ to an instance $I$ is defined as a
sub-instance $\app{I}{S}=h_1(I)\cup\cdots\cup h_s(I)$. In other words, under $S$, if any $h \in S$ is said to be turned on (similarly, any $h\in
H\setminus S$ is turned off), each turned on hypothetical $h$ will retain the tuples $h(I)$ from $I$.

\begin{definition}[Provisioning]
\label{def:sketch}
Given a query $Q$, to \emph{provision} $Q$ means to design a pair of algorithms: $(i)$ a
\textbf{compression} algorithm that takes as input an instance $I$ and
a set $H$ of hypotheticals, and outputs a data structure $\Gamma$
called a \textbf{sketch}, and $(ii)$ an \textbf{extraction} algorithm
that for any scenario $S\subseteq H$, outputs $Q(\app{I}{S})$ using
only $\Gamma$ (without access to $I$).
\end{definition}
To be more specific, we assume the compression algorithm takes as
input an instance $I$ and $k$ hypotheticals $h_1, \ldots, h_k$ along
with the sub-instances $h_1(I),\ldots,h_k(I)$ that they define. A
hypothetical will be referred to by an index from $\set{1,\ldots,k}$,
and the extraction algorithm will be given scenarios in the form of
sets of such indices. Hence, we will refer to a scenario $S\subseteq
H$ where $S=\set{h_{i_1},\ldots,h_{i_s}}$ by abusing the notation as
$S=\{i_1,\ldots,i_s\}$. Throughout the paper, we denote by $k$ the
number of hypotheticals (i.e. $k:=\card{H}$), and by $n$ the size of the
input instance (i.e., $n := \card{I}$). 

We call such a pair of compression and extraction algorithms a \emph{provisioning scheme}. 
The compression algorithm runs only once; 
the extraction algorithm runs repeatedly for all the scenarios that 
an analyst wishes to explore.  We refer to the time that the compression
algorithm requires as the \emph{compression time}, and the time that extraction algorithm requires for each scenario as the \emph{extraction time}.

The definition above is not useful by itself for positive results because it allows for trivial space-inefficient solutions.
For example, the definition is satisfied when the sketch $\Gamma$ is defined to be a copy of $I$ itself or, as mentioned earlier, a scenario-indexed collection of all the answers. Obtaining the answer for each
scenario is immediate for either case, but such a sketch can be prohibitively large as the number of tuples in $I$ could be enormous, and the number of scenarios is exponential in $k=\card{H}$.

This discussion leads us to consider complexity bounds on the size of
the sketches. 

\begin{definition}[Compact provisioning]
\label{def:prov}
A query $Q$ can be \emph{compactly} provisioned if there exists a
provisioning scheme for $Q$ that given any input instance $I$ and any
set of hypotheticals $H$, constructs a sketch of size
$\poly(k,\log{n})$ bits, where $k:=\card{H}$ and $n:=\card{I}$.
\end{definition}

We make the following important remark about the restrictions 
made in Definitions~\ref{def:sketch} and~\ref{def:prov}. 
\begin{remark}\label{rem:examine} 
At first glance, the requirement that the input
instance $I$ cannot be examined \emph{at all} during extraction may
seem artificial, and the same might be said about the size of the
sketch depending polynomially on $\log{n}$
rather than a more relaxed
requirement.  However, we show in Theorem~\ref{thm:full-cover}
that our central lower bound result holds
\emph{even if} a portion of size $o(n)$ of the input instance can be examined 
during extraction \emph{after} the scenario is revealed, 
and \emph{even if} the space dependence of the sketch is only restricted 
to be $o(n)$  (instead of depending only polynomially on $\log{n}$). 
These additional properties transfer to \emph{all} our lower bound results
(i.e., Theorems~\ref{thm:csa-lower},~\ref{thm:q-lower},~\ref{thm:reg-lower},~\ref{thm:ucq-dl-lower}) although we choose not to restate them in each of them.
In spite of these additional properties, the positive results we obtain (i.e., Theorems~\ref{thm:count},~\ref{thm:sum},~\ref{thm:regression},~\ref{thm:mix}) all use sketches whose
space requirements depend
polynomially only on $\log{n}$ and do \emph{not} require examining the original database \emph{at all} during the extraction. These calibration results
further justify our design choices for compact provisioning.
\end{remark}

Even though the definition of compact provisioning does not impose any
restriction on either the compression time or the extraction time,
all our positive results in this paper are supported by (efficient) 
polynomial time algorithms. Note that this is \emph{data-scenario
  complexity}: we assume the size of the query (and the schema) to be
a constant but we consider dependence on the size of the instance and
the number of hypotheticals. 
Our negative results (lower bounds on the sketch size), on the other
hand, hold even when the compression and the extraction algorithms are
computationally unbounded.


\paragraph{Exact vs. approximate provisioning.} Definition~\ref{def:prov}
focused on exact answers for the queries. While this is appropriate
for, e.g., relational algebra queries,
as we shall see, for queries that compute numerical answers such as
aggregates and analytics, having the flexibility of answering queries
approximately is essential for any interesting positive result.

\begin{definition}[$\eps$-provisioning]
\label{def:approx}
For any $0<\eps<1$, a query $Q$ can be \emph{$\eps$-provisioned} if
there exists a provisioning scheme for $Q$, whereby for each scenario $S$,
the extraction algorithm outputs a $(1 \pm \eps)$ approximation of
$Q(\app{I}{S})$, where $I$ is the input instance.

We say a query $Q$ can be \emph{compactly} $\eps$-provisioned if $Q$
can be $\eps$-provisioned by a provisioning scheme that, given any
input instance $I$ and any set of hypotheticals $H$, creates a sketch
of size $\poly(k,\log{n},1/\eps)$ bits.
\end{definition}

We emphasize that throughout this paper, we only consider the
approximation guarantees which are \emph{relative} (multiplicative) as
opposed to the weaker notion of additive approximations.  The precise
definition of relative approximation guarantee will be provided for
each query individually.  Moreover, as expected, randomization will be
put to good use in $\eps$-provisioning. We therefore extend the
definition to cover the provisioning schemes that use both
randomization and approximation. 

\begin{definition}\label{def:eps-scheme}
For any $\eps,\delta>0$, an \emph{$(\eps,\delta)$-provisioning scheme}
for a query $Q$ is a provisioning scheme where both the compression and 
extraction algorithms
are allowed to be randomized and the output \emph{for every} scenario 
$S$ is a $(1 \pm \eps)$-approximation of $Q(\app{I}{S})$ with probability 
$1-\delta$.

An $(\eps,\delta)$-provisioning scheme is called \emph{compact} if it
constructs sketches $\Gamma$ of size only\\
$\poly(k,\log{n},1/\eps,\log(1/\delta))$ bits, has
compression time that is $\poly(k,n,1/\eps,\log{(1/\delta)})$, 
and has extraction time that is $\poly(\card{\Gamma})$. 
\end{definition}

In many applications, the size of the database is a very large number,
and hence the exponent in the $\poly(n)$-dependence of the
compression time might become an issue. If the dependence of the
compression time on the input size is essentially linear, i.e.,
$O(n)\cdot
\poly(k,\log{n},1/\eps,\log{(1/\delta)})$ we say that
the scheme is \emph{linear}.
We emphasize that in all our
positive results for queries with numerical answers we give compact
\edSchemes, thus ensuring efficiency in both running time and sketch
size.


\paragraph{Complex queries.} 
Our main target consists of practical queries that combine logical,
grouping, and numerical components. In Section~\ref{sec:mix}, we focus
on \emph{complex queries} defined by
a logical (relational algebra or Datalog) query that returns a set
of tuples, 
followed by a group-by operation (on set of grouping attributes)
and further followed by numerical query that is applied to each
sets of tuples resulting from the grouping. This class of queries
already covers many practical examples. 
We observe that the output of such a complex query is a set of $p$
tuples where $p$ is \emph{the number of distinct values taken by the
  grouping attributes}.  Therefore, the size of any provisioning
sketches must also depend on $p$.  We show (in Theorem~\ref{thm:mix}) 
that a sketch for a query that involves
grouping can be obtained as a collection of $p$ sketches.
Hence, if each
of the $p$ sketches is of compact size (as in
Definitions~\ref{def:prov} and~\ref{def:eps-scheme}) and the value $p$
itself is bounded by 
$\poly(k,\log{n})$, then the
overall sketch for the complex query is also of compact size.  Note
that $p$ is typically small for the kind of grouping used in
practical analysis queries (e.g., number of products, number
of departments, number of locations, etc.). Intuitively, an analyst
would have trouble making sense of an output with a large number of
tuples. 


\paragraph{Notation.}  For any integer $m > 0$, $[m]$ denotes the set
$\set{1, 2, \dots, m}$. The $\tilde{O}(\cdot)$ notation 
suppresses $\log{\log(n)}$, $\log\log(1/\delta)$, $\log(1/\eps)$, and
$\log(k)$ factors. All logarithms are in base two unless stated otherwise.

\newcommand{\bA}{\ensuremath{\mathbf{A}}}
\newcommand{\supp}[1]{\ensuremath{S_{#1}}}
\newcommand{\bB}{\ensuremath{\mathbf{B}}}
\newcommand{\bC}{\ensuremath{\mathbf{C}}}
\newcommand{\Ex}{\ensuremath{\text{E}}}

\newcommand{\bX}{\mathbf{X}}
\newcommand{\bSi}{\pmb{\Sigma}}

\newcommand{\bS}[1]{\mathbf{S}_{#1}}
\newcommand{\bSJ}{\mathbf{S}^{J}}
\newcommand{\bQ}{\mathbf{Q}}
\newcommand{\bD}{\mathbf{D}}
\newcommand{\bJ}{\mathbf{J}}
\newcommand{\bT}{\pmb{\theta}}
\newcommand{\bI}{\mathbf{I}}

\newcommand{\GS}{\Gamma_{\Sigma}}
\newcommand{\Range}[1]{\bSi_D(#1)}
\newcommand{\exam}{\texttt{exam}}
\newcommand{\event}{\mathcal{E}}

\newcommand{\Sall}{\mathcal{X}}
\newcommand{\Sgood}{\Sall_g}
\newcommand{\dist}{\mathcal{D}}
\newcommand{\distu}{\ensuremath{\mu}}
\newcommand{\distv}{\ensuremath{\nu}}
\newcommand{\KLD}[2]{\ensuremath{\text{D}(#1\parallel#2)}}
\newcommand{\bracket}[1]{\left[#1\right]}

\section{A ``Hard'' Problem for Provisioning} \label{sec:full-cover}

To establish our lower bounds in this paper, we introduce a ``hard'' problem called \fullcover. 
Though not defined in the exact formalism of provisioning, the \fullcover problem can be solved by many provisioning
schemes using proper ``reductions'' and hence a lower bound for the \fullcover problem can be used to establish similar lower bounds for provisioning various queries.

We start by describing the \fullcover problem in details and then present our lower bound. 
In Appendix~\ref{app:info}, we survey some simple tools from information theory that we need in our lower bound proof. 

\subsection{The Coverage Problem}

Informally speaking, in the \fullcover problem, we are given a collection of $k$ subsets of a universe $[n]$ and the goal is to ``compress'' this collection in order to answer to the questions in which indices
 of some subsets in the collection are provided and we need to figure out whether these subsets cover the universe $[n]$ or not. We are interested in compressing schemes for this problem that when answering each question, in addition to the already computed summary of the collection, also have a limited access to the original instance (see Remark~\ref{rem:examine} for motivation of this modification). The \fullcover problem is defined formally as follows. 

\begin{problem}[\fullcover]\label{prob:full-cover}
  Suppose we are given a collection $\mathcal{S} = \set{S_1, S_2, \ldots S_k}$ of
  the subsets of $[n]$. The goal in the \fullcover problem is to find a \emph{compressing
  scheme} for $\mathcal{S}$, defined formally as a triple of algorithms:
\begin{itemize}
	\item A \textbf{compression algorithm} which given the collection $\mathcal{S}$ creates a \emph{data structure} $D$.
	\item An \textbf{examination algorithm} which given a subset of $[k]$, a \emph{question}, $Q = \set{i_1,\ldots, i_s}$ and the data structure $D$, computes a set $J \subseteq [n]$ of indices and
	lookup for each $j \in J$ and each $S_i$ ($i \in [k]$), whether $j \in S_i$ or not. The output of the examination algorithm is a tuple $S^J:=(S^J_1,\ldots,S^J_k)$, where $S^J_i = S_i \cap J$.
	\item An \textbf{extraction algorithm} which given a question $\set{i_1,\ldots, i_s}$, the data structure $D$, and the tuple $S^J$, outputs ``Yes'', if $S_{i_1} \union \ldots \union S_{i_s} = [n]$ and ``No'' otherwise.
\end{itemize}
\end{problem}
  
We refer to the size of $D$, denoted by $\card{D}$, as the storage requirement of the compression scheme and to the size of $J$, denoted by $\card{J}$, as the examination requirement of the scheme.
The above algorithms can all be randomized; in that case, we require that for \emph{each} question $Q$, the final answer (of the extraction algorithm) to be correct with a probability at least $0.99$.
Note that this choice of constant is arbitrary and is used only to simplify the analysis; indeed, one can always amplify the probability of success by repeating the scheme constant number of times and return
the majority answer. 


While \fullcover is not stated in the exact formalism of provisioning, the analogy between this problem and provisioning schemes should be clear. In particular, 
for our lower bound proofs for provisioning schemes, we can alter the Definition~\ref{def:sketch} to add an examination algorithm and allow a similar access to the original database
to the provisioning scheme.

\subsection{The Lower Bound}

We prove the following lower bound on storage and examination requirement of any compressing scheme for the \fullcover problem. 

\begin{theorem} \label{thm:full-cover}
	Any compressing scheme for the \fullcover problem that answers each question correctly with probability at least $0.99$, either has storage requirement or examination requirement of 
	$\min(2^{\Omega{(k)}},\Omega(n))$ bits. 
\end{theorem}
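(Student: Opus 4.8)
The plan is to prove the lower bound via a reduction from a one-way communication problem, namely (a variant of) \textsc{Index} or \textsc{Set-Disjointness}, whose hardness can be established directly using the information-theoretic tools promised for the appendix. First I would set up the hard instance: partition the universe $[n]$ into $m := \Theta(k)$ blocks $B_1, \dots, B_m$ each of size $\Theta(n/k)$. For a carefully chosen value of $m$ relative to $k$, encode an arbitrary string (or a large combinatorial structure) into the sets $S_1, \dots, S_k$ in such a way that, for each block $B_t$, whether a particular sub-collection of the $S_i$'s covers $B_t$ reveals one bit (or one coordinate) of the encoded object. The key design constraint is that \emph{different} questions $Q$ should probe \emph{disjoint} pieces of the encoded information, so that a single data structure $D$ together with a small examined portion $J$ must implicitly determine all $2^{\Omega(k)}$ (or $\Omega(n)$) bits of the object — forcing $|D| + |J| \cdot (\text{bits per examined element})$ to be large. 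Concretely I expect the encoding to be of the following flavor: for each block, a collection of ``candidate covering sets'' is set up so that the family $\{S_i\}$ covers the block iff some predicate of the hidden input holds; answering all questions then amounts to recovering the whole input.

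Next I would carry out the information-theoretic accounting. Fix the hidden object $X$ drawn uniformly from a set of size $2^{\Omega(k)}$ (this is where the $2^{\Omega(k)}$ term in the $\min$ comes from), or alternatively from a set of size $2^{\Omega(n)}$ in a regime where $k$ is large relative to $\log n$ (this gives the $\Omega(n)$ term — the $\min$ in the theorem statement reflects that we cannot beat $n$, the trivial bound). The transcript available to the extraction algorithm is $(D, S^J)$. Since the examination algorithm reads, for the $|J|$ chosen universe elements, a membership bit against each of the $k$ sets, $S^J$ carries at most $|J| \cdot k$ bits; and $D$ carries $|D|$ bits. If the scheme is correct with probability $0.99$ on every question, then by Fano's inequality (or a direct cut-and-paste / hybrid argument over the $\Omega(k)$ independent questions) the pair $(D, S^J)$ must have mutual information $\Omega(k)$ — or $\Omega(n)$ in the second regime — with $X$. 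Hence $|D| + k|J| = \Omega(k)$ in the first regime, which already gives $|D| = \Omega(k)$ \emph{or} $|J| = \Omega(1)$ — too weak. So the encoding must be more aggressive: I would instead make each question reveal $\Omega(n/k)$ bits (an entire block's worth of hidden information), and have $\Omega(k)$ such questions, so that $(D,S^J)$ determines $\Omega(n)$ bits total while a single examined element $j \in J$ only touches $O(k)$ bits and lies in one block. This yields $|D| + k|J| = \Omega(\min(2^{\Omega(k)}, n))$ after choosing parameters so that the number of distinct ``informative'' questions is $2^{\Omega(k)}$ but the total recoverable information is capped at $n$; cleaning up these parameter choices so both terms of the $\min$ emerge correctly is the crux.

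The main obstacle I anticipate is precisely the simultaneous control of the two quantities $|D|$ and $|J|$ against the two-term bound $\min(2^{\Omega(k)}, \Omega(n))$. Unlike a pure communication lower bound where one just bounds a transcript, here the examination algorithm is \emph{adaptive} and \emph{question-dependent}: it may read different parts of the input for different questions, so I cannot simply union-bound the examined elements across questions. The fix is to argue that for a \emph{random} question from the hard distribution, the examined set $J$ (of size $o(n)$, resp.\ the budget in the theorem) misses almost all of the block(s) that encode the answer bits for that question with high probability — unless $|J|$ is itself $\Omega(n)$ — so the data structure $D$ alone must supply the information, pushing $|D|$ up to $2^{\Omega(k)}$ (capped at $n$). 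Making this ``$J$ is too small to see the relevant block'' step rigorous, while handling adaptivity and the $0.99$-per-question (not average-case) success guarantee, is where the real work lies; I would handle it by fixing the randomness of the scheme, restricting to the $0.99$-fraction of questions answered correctly, and then running the information argument on that restricted—but still $2^{\Omega(k)}$-sized—family of questions. The remaining steps (choosing block sizes, verifying the covering predicate encodes the right bit, invoking Fano) are routine given the tools in Appendix~\ref{app:info}.
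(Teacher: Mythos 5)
There is a genuine gap, and it sits exactly where you locate ``the crux.'' First, your revised plan --- ``make each question reveal $\Omega(n/k)$ bits (an entire block's worth)'' --- cannot work for \fullcover: the extraction algorithm outputs a single Yes/No bit per question, so one question can never reveal more than one bit of the hidden object. The only viable accounting is one hidden bit per question, which forces the hard instance to have $2^{\Theta(k)}$ informative questions and hence $n = 2^{\Theta(k)}$; the paper takes $n = \binom{k}{k/2}$ with the $k/2$-subsets of $[k]$ as questions, and the $\min(2^{\Omega(k)},\Omega(n))$ then falls out of this single construction rather than from two separate regimes. Second, and more fundamentally, a fixed partition into blocks $B_1,\dots,B_m$ with a known question-to-block correspondence hands the examination algorithm exactly what it needs: given $Q$ it can read the entire relevant block ($O(n/k)=o(n)$ lookups) and answer correctly with no help from $D$. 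The missing idea is that the association between questions and universe elements must itself be random and hidden: the paper draws a uniformly random bijection $\sigma$ from the $k/2$-subsets to $[n]$ and plants bit $x_j$ at position $j=\sigma(Q)$ (element $j$ is removed from exactly the sets indexed by $\sigma^{-1}(j)$ when $x_j=0$), so the union over $Q$ equals $[n]$ iff $x_{\sigma(Q)}=1$ while $\sigma(Q)$ is uniform on $[n]$ and unknown to the scheme.

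With that construction your Fano-based skeleton is the right one, but the step ``$J$ misses the relevant location'' is precisely where the real proof lives and your proposal does not supply it. The examination algorithm computes $J$ from $D$ and $Q$, and $D$ may leak information about $\sigma$; one must show that even conditioned on an $o(n)$-bit $D$, the posterior of the target index $\bI=\bSi(\bQ)$ is still spread out. The paper writes $H_2(\delta)\ge H(\bT\mid\bD,\bQ,\bI)-I(\bT;\bSJ\mid\bD,\bQ,\bI)$, lower-bounds the first term by $1-o(1)$ via a standard index-style entropy count, and upper-bounds the second by $\Pr(\bI\in J)$, which it controls by bounding $I(\bI;\bD,\bQ)\le\log e+o(1)$ (using $H(\bSi\mid\bD)\ge\log(n!)-o(n)$ and Stirling) and then applying Pinsker's inequality to conclude the posterior of $\bI$ is within total variation $0.85$ of uniform. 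Without the random bijection and this KL/Pinsker step, the plan of ``restricting to the correctly answered questions and running the information argument'' does not go through; note also that one cannot argue globally that $(D,S^J)$ determines $\Omega(n)$ bits, since $S^J$ is recomputed per question --- the argument must be per (random) question, as above. Finally, your opening instinct to reduce from \textsc{Index} or \Setdis runs into the very obstacle the paper flags: the post-question access to the instance is what prevents a black-box reduction from standard one-way communication lower bounds, which is why the direct information-theoretic argument is needed.
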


Allowing access to the original input in Theorem~\ref{thm:full-cover} makes the lower bound very robust. 
However, due to this property, the lower bound does not seem to follow from standard communication complexity lower bounds and hence we use an information-theoretic approach
to prove this theorem directly, which may be of independent interest. We note that since our other lower bounds are typically proven using a reduction from the \fullcover problem, the properties
in Theorem~\ref{thm:full-cover} (i.e., allowing randomization and $o(n)$ access to the database \emph{after} being given the scenario) also hold
for them and we do not mention this explicitly.

In order to prove this lower bound, by Yao's minimax principle~\cite{Yao77} (see also~\cite{MotwaniR97}), it suffices to define a distribution of instances of the problem and show that any \emph{deterministic}
algorithm that is correct on the input chosen according to this distribution with probability $0.99$, has to have either large storage requirement or large examination requirement. We define the following distribution $\dist$  (for simplicity assume $k$ is even).

\cTextbox{Hard distribution $\dist$ for the \fullcover problem.}{
\begin{enumerate}
\item Let $n = {k \choose k/2}$; pick a string $x \in \set{0,1}^n$ uniformly at random.
\item Let $\FC$ be the family of all subsets of $k$ with size $k/2$. Pick uniformly at random a \emph{bijection} $\sigma: \FC \mapsto [n]$.
\item \textbf{Embedding.} Starting from $S_i = [n]$ for all $i \in [k]$, for any $j \in [n]$, if $x_j = 0$, remove $j$ from all sets $S_i \in \sigma^{-1}(j)$.
\item \label{line:question} \textbf{Question.} Pick the question $Q$ to be a uniformly at random member of $\FC$.
\end{enumerate}
}

The following claim is immediate.
\begin{claim}\label{clm:op}
	For any question $Q = \set{S_{i_1},\ldots,S_{i_{k/2}}} \in \FC$,  $S_{i_1} \union \ldots \union S_{i_{k/2}} = [n]$ iff $x_{\sigma(Q) }= 1$.
\end{claim}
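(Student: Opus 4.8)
The proof of Claim~\ref{clm:op} is a direct unpacking of the embedding step. Here is how I would argue it.

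\textbf{Proof proposal for Claim~\ref{clm:op}.} Fix a question $Q = \set{S_{i_1},\ldots,S_{i_{k/2}}} \in \FC$, and write $A = \set{i_1,\ldots,i_{k/2}} \subseteq [k]$ for the underlying index set of size $k/2$, so that $\sigma(Q) = \sigma(A) \in [n]$. The plan is to show the two directions by tracking, for an arbitrary element $j \in [n]$, exactly which sets $S_{i}$ it was removed from during the embedding step, and then checking membership of $j$ in the union $S_{i_1} \cup \cdots \cup S_{i_{k/2}}$.

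First I would recall that the embedding starts from $S_i = [n]$ for every $i \in [k]$, and that an element $j$ is removed from $S_i$ precisely when $x_j = 0$ \emph{and} $i$ lies in the $(k/2)$-subset $\sigma^{-1}(j)$ (here I am identifying the set-of-sets notation $\sigma^{-1}(j) \in \FC$ with the corresponding index subset of $[k]$). Consequently, after the embedding, $j \in S_i$ if and only if either $x_j = 1$, or $x_j = 0$ and $i \notin \sigma^{-1}(j)$. The key step is then the following case analysis on $j$:

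\begin{itemize}
\item If $x_j = 1$, then $j$ was never removed from any set, so $j \in S_{i_\ell}$ for every $\ell$, hence $j$ is in the union. In particular, if $x_{\sigma(Q)} = 1$, then every $j$ with $x_j = 1$ is covered; but we still must handle $j$ with $x_j = 0$, which is done in the next bullet.
\item If $x_j = 0$, then $j$ survives in $S_{i_\ell}$ iff $i_\ell \notin \sigma^{-1}(j)$. Thus $j$ is in the union $S_{i_1} \cup \cdots \cup S_{i_{k/2}}$ iff $A \not\subseteq \sigma^{-1}(j)$. Since $\card{A} = \card{\sigma^{-1}(j)} = k/2$, the containment $A \subseteq \sigma^{-1}(j)$ is equivalent to $A = \sigma^{-1}(j)$, i.e.\ to $\sigma(A) = j$, i.e.\ to $j = \sigma(Q)$. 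So a $0$-element $j$ fails to be covered by the union iff $j = \sigma(Q)$ (and $x_{\sigma(Q)} = 0$), and otherwise it is covered.
\end{itemize}

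Putting the two cases together: the union $S_{i_1} \cup \cdots \cup S_{i_{k/2}}$ equals $[n]$ iff there is no uncovered element, and by the analysis above the only element that can possibly be uncovered is $\sigma(Q)$, and it is uncovered exactly when $x_{\sigma(Q)} = 0$. Hence the union is all of $[n]$ iff $x_{\sigma(Q)} = 1$, which is the claim. There is no real obstacle here; the only point requiring a moment's care is the equivalence ``$A \subseteq \sigma^{-1}(j) \iff A = \sigma^{-1}(j) \iff j = \sigma(Q)$,'' which is exactly where the choice $n = \binom{k}{k/2}$ with $\sigma$ a \emph{bijection} onto the size-$(k/2)$ subsets is used: it guarantees that each question corresponds to a unique coordinate of $x$ and that equality of equal-size sets is forced by containment. \hfill$\qed$
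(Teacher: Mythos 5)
Your proof is correct and matches the paper's intent: the paper simply declares the claim ``immediate'' from the embedding step, and your argument is exactly the careful case analysis (on whether $x_j=1$ or $x_j=0$, using that $\card{A}=\card{\sigma^{-1}(j)}=k/2$ forces containment to be equality) that justifies that assertion. Nothing is missing.
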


Fix any deterministic compressing scheme for the distribution $\dist$. We define $\bD$, $\bQ$,  and $\bJ$ as the random variables corresponding to, respectively, the data structure $D$, the question $Q$, and the
examination indices $J$ in this compressing scheme.
Moreover $\bX$ is a random variable for the input string $x$, $\bSi$ is for the bijection $\sigma$, and $\bS{i}$ ($i \in [k]$) is for the subset $S_i$.
We use $\bSJ := (\bSJ_1,\ldots,\bSJ_{k})$ for the output of the examination algorithm.

\paragraph{Overview of the proof.} The intuition behind the proof is as follows. By Claim~\ref{clm:op}, in order to give a correct answer
to the question $Q$, the scheme has to determine the value of $x_{\sigma(Q)}$ correctly.
Suppose first that the scheme only consists of compression and extraction algorithms, i.e., without the examination algorithm. In this case, even if we give the bijection $\sigma$ to the extraction algorithm,
the extraction algorithm has to effectively recover the value of $x_{\sigma(Q)}$ from $D$, where $\sigma(Q)$ is chosen uniformly at random from $[n]$. In this case, simple information-theoretic facts imply 
that $D$ has to be of size $\Omega(n)$.

Now consider the other case where we remove the compression algorithm. In this case, even if we give the string $x$ to the examination algorithm
and assume that upon examining an entry $j$ in the input, it can determine whether $\sigma(Q) = j$ or not, for the extraction algorithm to be able to find the value of $x_{\sigma(Q)}$ correctly,
it better be the case that $\sigma(Q) \in J$. In other words, the set of indices computed by the examination algorithm should contain the target index $\sigma(Q)$. However, for any fixed $J$ of size $o(n)$, the
probability that $\sigma(Q) \in J$ is $o(1)$, and hence the extraction algorithm cannot recover the correct answer with high probability.

To prove Theorem~\ref{thm:full-cover}, we have to consider schemes that consist of both compression and examination algorithms and a priori it is not clear that how much the interleaved information obtained from
both these algorithms can help the extraction algorithm to compute the final answer, especially considering the fact that having a compression algorithm also helps examination algorithm in finding the ``correct'' index.
However, using a careful analysis we separate the information given from these two algorithms to the extraction algorithm and argue that
at least one of the examination or compression requirements of any scheme has to be of size $\Omega(n)$. 

\begin{proof}[Proof of Theorem~\ref{thm:full-cover}]
	Suppose $\bT$ is the random variable which is $1$ if the correct answer is Yes, and is zero otherwise and $\bI$ is the random variable that denotes the index of the string $x$ which determines the correct answer, i.e., $\bI = \bSi(\bQ)$
	(by Claim~\ref{clm:op}). Let $\delta$ $(\le 0.01)$ be the probability of failure; we have,
	\begin{align*}
		H_2(\delta) &\geq H(\bT \mid \bD, \bQ, \bSJ) \tag{Fano's inequality, Claim~\ref{clm:it-facts}-(\ref{part:fano})}  \\
		&\geq H(\bT \mid \bD, \bQ, \bSJ, \bI) \tag{Conditioning reduces entropy, Claim~\ref{clm:it-facts}-(\ref{part:cond-reduce}) } \\
		&= H(\bT \mid \bD,\bQ,\bI) - I(\bT;\bSJ \mid \bD,\bQ,\bI) 
	\end{align*}
	
	We bound each term in the above equation in the following two claims separately.
	\begin{claim}\label{clm:low-end}
		Suppose $\card{\bD} = o(n)$; then $H(\bT \mid \bD,\bQ,\bI) = 1-o(1)$.
	\end{claim}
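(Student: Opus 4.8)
The plan is to show that $\bT$ remains essentially a fair coin even after revealing $\bD$, $\bQ$, and $\bI$, precisely because $\bD$ is too small to ``remember'' the relevant bit of $\bX$. First I would hand the extraction side more information for free: since conditioning cannot increase entropy,
\[
H(\bT \mid \bD, \bQ, \bI) \;\ge\; H(\bT \mid \bD, \bQ, \bI, \bSi).
\]
Once $\bSi$ is revealed, $\bQ = \bSi^{-1}(\bI)$ is a deterministic function of $(\bI,\bSi)$, so $\bQ$ can be dropped, and by Claim~\ref{clm:op} we have $\bT = \bX_{\bI}$. Hence it suffices to lower bound $H(\bX_{\bI} \mid \bD, \bI, \bSi)$.

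Next I would use the independence structure of $\dist$. The pair $(\bI,\bSi)$ is a function of $(\bQ,\bSi)$, which is drawn independently of $\bX$, so $(\bI,\bSi) \perp \bX$; moreover, for each fixed $\bSi = \sigma$, $\bI = \sigma(\bQ)$ is uniform over $[n]$, and conditioning on $\bSi=\sigma$ leaves $\bX$ uniform over $\set{0,1}^n$. Since the compressing scheme is deterministic and the instance $\mathcal{S}$ under $\dist$ is determined by $(\bX,\bSi)$, conditioned on $\bSi = \sigma$ the data structure is a deterministic function $g_\sigma(\bX)$ of $\bX$ whose output occupies at most $\card{\bD}$ bits. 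Therefore, averaging over the uniform and $\bX$-independent index $\bI$,
\[
H(\bX_{\bI} \mid \bD, \bI, \bSi = \sigma) \;=\; \frac{1}{n}\sum_{i=1}^{n} H\!\bigl(\bX_i \,\big|\, g_\sigma(\bX)\bigr).
\]

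To lower bound this average I would invoke subadditivity of entropy together with $I(\bX; g_\sigma(\bX)) \le H(g_\sigma(\bX)) \le \card{\bD}$ (Claim~\ref{clm:it-facts}):
\[
\sum_{i=1}^{n} H\!\bigl(\bX_i \,\big|\, g_\sigma(\bX)\bigr) \;\ge\; H\!\bigl(\bX \,\big|\, g_\sigma(\bX)\bigr) \;=\; n - I\!\bigl(\bX; g_\sigma(\bX)\bigr) \;\ge\; n - \card{\bD} \;=\; n - o(n),
\]
where the last step uses the hypothesis $\card{\bD} = o(n)$. Dividing by $n$ gives $H(\bX_{\bI} \mid \bD, \bI, \bSi = \sigma) \ge 1 - o(1)$ for every $\sigma$, hence $H(\bT \mid \bD,\bQ,\bI) \ge 1 - o(1)$; combined with the trivial upper bound $H(\bT \mid \bD,\bQ,\bI) \le H(\bT) \le 1$ (as $\bT$ is a single bit), this yields $H(\bT \mid \bD,\bQ,\bI) = 1 - o(1)$.

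Each ingredient here is standard, so the only place needing care is the conditional-independence bookkeeping: that revealing $\bSi$ makes $\bQ$ redundant yet keeps $\bI$ uniform over $[n]$ and independent of $\bX$, so that the per-coordinate decomposition $\frac{1}{n}\sum_i H(\bX_i \mid \cdot)$ is legitimate. I expect this to be the main (and only) obstacle; the rest is just the observation that an $o(n)$-bit summary of a uniform $n$-bit string leaves almost all of its coordinates with conditional entropy arbitrarily close to $1$.
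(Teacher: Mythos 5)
Your proof is correct and follows essentially the same route as the paper's: condition additionally on $\bSi$, drop $\bQ$ as redundant, identify $\bT$ with $\bX_{\bI}$, use the independence of $(\bI,\bSi)$ from $\bX$ to reduce to an average of per-coordinate entropies, and finish with subadditivity plus the bound $I(\bX;\bD)\le H(\bD)\le\card{\bD}=o(n)$. The only (immaterial) difference is that you fix $\bSi=\sigma$ and argue pointwise, whereas the paper keeps $\bSi$ as a conditioning variable throughout; the independence bookkeeping you flag as the delicate step is handled identically in both.
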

	\begin{proof}
	\begin{align*}
		H(\bT \mid \bD,\bQ,\bI) &\geq H(\bT \mid \bD,\bQ,\bI,\bSi) \tag{Conditioning reduces entropy, Claim~\ref{clm:it-facts}-(\ref{part:cond-reduce}) } \\
			&= H(\bT \mid \bD,\bI,\bSi)  \tag{$\bQ$ is uniquely determined by $\bSi$ and $\bI$} \\
			&= \sum_{i=1}^{n} \Pr(\bI = i) \cdot H(\bT \mid \bD,\bI = i,\bSi) \\
			&= \sum_{i=1}^{n} \frac{1}{n} \cdot H(\bX_i \mid \bD, \bI = i, \bSi) \tag{$\bI$ is uniform on $[n]$ and $\bT = \bX_i$} \\
	\end{align*}
		Now, notice that $\bSi$ is independent of the event $\bI = i$, and moreover conditioned on $\bSi$, $\bD$ is a function of $\bX$ alone and hence is independent of $\bI = i$. Additionally, $\bX_i$ is chosen independent
		of the value of $\bI = i$; hence $\bX_i$ is also independent of $\bI = i$. Consequently, we can drop conditioning on $\bI = i$ and have,
	\begin{align*}	
			H(\bT \mid \bD,\bQ,\bI) &\geq \sum_{i=1}^{n} \frac{1}{n} \cdot H(\bX_i \mid \bD, \bSi)  \\
			&\geq \frac{H(\bX \mid \bD,\bSi)}{n} \tag{By subadditivity of entropy, Claim~\ref{clm:it-facts}-(\ref{part:sub-additivity})} \\
			&= \frac{H(\bX \mid \bSi) - I(\bX,\bD \mid \bSi)}{n}  \\
			&\geq \frac{H(\bX) - H(\bD)}{n}  \tag{$\bX \perp \bSi$ and $I(\bX,\bD \mid \bSi) \leq H(\bD \mid \bSi) \leq H(\bD)$} \\
			&\geq 1-\frac{\card{\bD}}{n}  = 1-o(1)
	\end{align*}
	where in the last inequality we use the fact that $\bX$ is uniformly chosen from a domain of size $2^n$ and hence $H(\bX) = n$ and $H(\bD) \leq \card{\bD}$ (both by Claim~\ref{clm:it-facts}-(\ref{part:uniform})).
	\end{proof}

	\begin{claim}\label{clm:low-info}
		Suppose $\card{\bD} + \card{\bJ} = o(n)$ ; then $ I(\bT;\bSJ \mid \bD,\bQ,\bI) \leq 0.9$.
	\end{claim}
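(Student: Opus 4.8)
The plan is to show that the mutual information $I(\bT;\bSJ \mid \bD,\bQ,\bI)$ is small because, conditioned on $\bD$ and $\bQ$, the examination set $\bJ$ is \emph{determined} (the examination algorithm is deterministic and its input is exactly $(\bD,\bQ)$), and the only way $\bSJ$ can carry information about $\bT = \bX_{\bI}$ is if the target index $\bI = \bSi(\bQ)$ actually lies in $\bJ$. I would begin by fixing the conditioning on $\bD = D$ and $\bQ = Q$; then $J = J(D,Q)$ is a fixed set of size at most $\card{\bJ} = o(n)$. The key structural observation is that, even after revealing $\bD$ and $\bQ$, the bijection $\bSi$ restricted to $\FC \setminus \{Q\}$ still has enough randomness that $\bSi(Q)$ is nearly uniform over a set of size $\Theta(n)$; in particular $\Pr[\bI \in J \mid \bD = D, \bQ = Q]$ is $o(1)$ up to the (small) amount that conditioning on $\bD$ can bias things. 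This is where I would have to be careful: conditioning on $\bD$ of size $o(n)$ can shift the distribution of $\bSi$, but a standard counting/entropy argument bounds the expected bias, so $\Pr[\bI \in \bJ] = o(1) + O(\card{\bD}/n) = o(1)$.

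The next step is to split the mutual information according to the event $\event := \{\bI \in \bJ\}$. Write
\begin{align*}
 I(\bT;\bSJ \mid \bD,\bQ,\bI) &\le H(\bT \mid \bD,\bQ,\bI) - H(\bT \mid \bD,\bQ,\bI,\bSJ) \\
 &\le 1 - H(\bT \mid \bD,\bQ,\bI,\bSJ,\mathbf{1}_{\event}) \\
 &\le 1 - \Pr[\overline{\event}]\cdot H(\bT \mid \bD,\bQ,\bI,\bSJ,\overline{\event}).
\end{align*}
On the event $\overline{\event} = \{\bI \notin \bJ\}$, the output $\bSJ = (S_1\cap J,\ldots,S_k\cap J)$ reveals only the coordinates $\bX_j$ for $j \in J$, and since $\bI \notin J$ these are all $\neq \bX_{\bI}$; because the entries of $\bX$ are independent and each uniform, $\bX_{\bI}$ is still an unbiased bit conditioned on $(\bD,\bQ,\bI,\bSJ)$ on this event — here I would reuse exactly the independence bookkeeping from the proof of Claim~\ref{clm:low-end} (that $\bSi \perp \{\bI=i\}$, that $\bD$ is a function of $\bX$ given $\bSi$, and that $\bX_i$ is drawn independently of the index $i$), now additionally tracking that $\bSJ$ depends on $\bX$ only through coordinates in $J \not\ni i$. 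Thus $H(\bT \mid \bD,\bQ,\bI,\bSJ,\overline{\event}) = 1$, giving $I(\bT;\bSJ \mid \bD,\bQ,\bI) \le 1 - \Pr[\overline{\event}] = \Pr[\event] = o(1) \le 0.9$.

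The main obstacle I anticipate is the claim that $\Pr[\bI \in \bJ]$ is small \emph{after conditioning on $\bD$}: a priori the whole point of having a compression algorithm is that it could help the examination algorithm zero in on the right index, so one must rule out that $\bD$ encodes, say, a short description of where $\bSi(\bQ)$ lands for many $\bQ$. The resolution is that $\bD$ has only $o(n)$ bits while there are $n$ equally likely targets, so by a union/entropy bound $\bD$ cannot make $\bSi(\bQ)$ land in a fixed $o(n)$-size set with non-negligible probability on average over $\bQ$ — formally, $\mathbb{E}_{\bD,\bQ}\Pr[\bSi(\bQ)\in J(\bD,\bQ)] \le \card{\bJ}/n + $ a correction term bounded via $H(\bX\mid\bD,\bSi)\ge n-\card{\bD}$, exactly in the spirit of Claim~\ref{clm:low-end}. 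Once that probability bound is in hand, the entropy-splitting above closes the argument.
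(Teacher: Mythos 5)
Your decomposition is the same as the paper's: you split on the event $\event=\set{\bI\in\bJ}$, observe that conditioned on $(\bD,\bQ,\bI)=(D,Q,i)$ with $i\notin J=\exam(D,Q)$ the bit $\bT=\bX_i$ is independent of $\bSJ$ (which depends on $\bX$ only through coordinates in $J$), bound the contribution of the complementary event trivially by $1$, and thereby reduce the claim to bounding $\Pr[\bI\in\bJ]$. Up to that point the argument is sound (modulo the cosmetic point that on $\overline{\event}$ you should assert independence of $\bT$ and $\bSJ$, i.e.\ $I(\bT;\bSJ\mid(\bD,\bQ,\bI)=T)=0$, rather than $H(\bT\mid\cdots,\bSJ,\overline{\event})=1$ exactly; the latter is only $1-o(1)$ because $\bD$ can bias $\bX_i$ slightly, but the slack is absorbed).

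The genuine gap is in the last step, which is also the hardest one: your claim that $\Pr[\bI\in\bJ]=o(1)+O(\card{\bD}/n)=o(1)$. The entropy bookkeeping you propose to reuse from Claim~\ref{clm:low-end} controls $H(\bX\mid\bD,\bSi)$, i.e.\ how much $\bD$ reveals about the \emph{bits} $\bX$; it says nothing about how much $\bD$ reveals about \emph{where} $\bSi(\bQ)$ lands, which is what governs whether the examination algorithm can place $\bJ$ on top of the target index. The correct route (the one the paper takes) is to bound $H(\bSi\mid\bD)\geq \log(n!)-\card{\bD}$, apply subadditivity over the $n$ questions to lower-bound the average of $H(\bSi(Q)\mid\bD)$, convert the resulting entropy deficit into $\Ex_{D,Q}[\KLD{\distv_{D,Q}}{\distu}]$, and apply Pinsker's inequality. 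Crucially, this yields only a \emph{constant} bound: since $\log(n!)/n=\log n-\log e+o(1)$, there is an irreducible per-question entropy deficit of $\log e$ even with $\card{\bD}=0$, so the KL bound is $\log e+o(1)$ and Pinsker gives an expected total variation distance of at most $\sqrt{(\log e)/2}+o(1)<0.85$, whence $\Pr[\bI\in\bJ]\leq o(1)+0.85\leq 0.9$ --- not $o(1)$. This is exactly why the claim is stated with the constant $0.9$ and why the final contradiction in Theorem~\ref{thm:full-cover} only needs $1-o(1)-0.9>H_2(0.01)$. As written, your $o(1)$ assertion is unsubstantiated, and the argument you sketch for it does not produce it; you need to supply the permutation-entropy/Pinsker step (or an equivalent anti-concentration argument for $\bSi(\bQ)$ conditioned on $\bD$) and settle for a constant strictly below $1$.
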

	\begin{proof}
		We first define some notation. For a fixed $D$ and $Q$, we
		use $J = \exam(D,Q)$ to denote the unique set of examined indices (recall that the compressing scheme is deterministic over the distribution $\dist$). 
		For a triple $T:=(D,Q,i)$ as an assignment to $(\bD,\bQ,\bI)$, we say that the tuple is \emph{good} if $i \notin J$, where $J = \exam(D,Q)$. We use the set $\Sall$ to denote the set of
		all valid tuples $T$ and the set $\Sgood$ to denote the set of all good tuples. Using this notation, 
		\begin{align*}
			I(\bT;\bSJ \mid \bD,\bQ,\bI) &= \sum_{T \in \Sall} \Pr((\bD,\bQ,\bI) = T) \cdot I(\bT;\bSJ \mid (\bD,\bQ,\bI)=T)
		\end{align*}
		We decompose the above summands into two parts; one over $\Sgood$ and one over $\Sall \setminus \Sgood$. We first argue that for a tuple $T \in \Sgood$,
		$I(\bT;\bSJ \mid (\bD,\bQ,\bI)=T) = 0$. This is because conditioned on $(\bD,\bQ,\bI) = (D,Q,i)$, by Claim~\ref{clm:op}, $\bT = \bX_i$ and $\bX_i$ is independent of $\bX \setminus \bX_i$ and $\bSi$, and hence
		$\bX_i$ is independent of $\bSJ$ as well whenever $i \notin J$. 
		For a tuple $T \notin \Sgood$, we simply use the upper bound $I(\bT;\bSJ \mid (\bD,\bQ,\bI)=T) \leq H(\bT \mid (\bD,\bQ,\bI)=T ) \leq \card{\bT}$, where $\card{\bT} = 1$.
		Consequently,
		\begin{align*}
			I(\bT;\bSJ \mid \bD,\bQ,\bI) &= \sum_{T \notin \Sgood} \Pr((\bD,\bQ,\bI) = T) \cdot I(\bT;\bSJ \mid (\bD,\bQ,\bI)=T) \\
			&\leq \sum_{T \notin \Sgood} \Pr((\bD,\bQ,\bI) = T)  \cdot \card{\bT} \\
			&=  \Pr((\bD,\bQ,\bI) \notin \Sgood) \tag{$\card{\bT} =1$}
		\end{align*}
		 We now show that $\Pr((\bD,\bQ,\bI) \notin \Sgood) \leq 0.9$. Let $\distu$ be the distribution of $\bI$ (i.e., a uniform distribution on $[n]$) and for any $D$ and $Q$, $\distv_{D,Q}$ be the distribution
		  of $\bI \mid \bD = D,\bQ = Q$. Fix any set $J$ of size $o(n)$; 
		 it is clear that under $\distu$, $\Pr_\distu(\bI \in J) = o(1)$. We show that the total variation distance of $\distu$ and $\distv_{D,Q}$ (for ``typical'' choices of $D$ and $Q$)  is bounded away from $1$ and
		 hence $\Pr_{\distv_{D,Q}}(\bI \in J) < 1$ as well (using Claim~\ref{clm:it-TV}). To achieve the bound on the total variation distance, we instead bound the KL-divergence of $\distv_{D,Q}$ from the
		 uniform distribution $\distu$ (in expectation over the choice of $D$ and $Q$) and then use Pinsker's inequality (Claim~\ref{clm:it-KL-TV}) to bound the total variation distance between these distributions. We have,
		 \begin{align*}
		 	\Ex_{D,Q}\bracket{\KLD{\distv_{D,Q}}{\distu}} &= I(I;D,Q) = H(\bI) - H(\bI \mid \bD,\bQ) \tag{by Claim~\ref{clm:it-KL}} \\
			&=  \log{n} - \sum_{Q \in \FC} \Pr(\bQ = Q) \cdot H(\bI \mid \bD,\bQ = Q) \\
			&= \log{n} - \sum_{Q \in \FC} \frac{1}{n} \cdot H(\bSi(Q) \mid \bD) \tag{$I = \bSi(Q)$ and $\bSi(Q) \perp \bQ = Q, \bD \perp \bQ = Q$}\\
			&\leq \log{n} - \frac{H(\bSi \mid \bD)}{n} \tag{By subadditivity of entropy, Claim~\ref{clm:it-facts}-(\ref{part:sub-additivity})} \\
			&\leq \log{n} - \frac{\card{\bSi} - \card{\bD}}{n} \tag{$H(\bSi \mid \bD) \geq \card{\bSi} - \card{\bD}$, Claim~\ref{clm:it-facts}-(\ref{part:uniform})}\\ 
			&=  \log{n} - \frac{\log{(n!)} - o(n)}{n} \tag{$\card{\bSi} = \log{n!}$, $\card{\bD} = o(n)$}\\
			&\leq \log{n} - \frac{n\log{n} - n\log{e} + O(\log{n}) - o(n)}{n} \tag{by Stirling approximation of $n!$} \\
			&= \log{e} + o(1)
		 \end{align*}
		 We now have, 
		 \begin{align*}
		 	\Ex_{D,Q}\bracket{\card{\distv_{D,Q} - {\distu}}} &\leq \Ex_{D,Q}\bracket{\sqrt{\frac{1}{2}\cdot\KLD{\distv_{D,Q}}{\distu}}} \tag{Pinsker's inequality, Claim~\ref{clm:it-KL-TV}} \\
			&\leq \sqrt{\frac{1}{2} \cdot \Ex_{D,Q}\bracket{\KLD{\distv_{D,Q}}{\distu}}} \tag{Convexity of $\sqrt{\cdot}$} \\
			&\leq \sqrt{\frac{\log{e}}{2} + o(1)} < 0.85
		 \end{align*}
		Fix any pair $(D,Q)$, by Claim~\ref{clm:it-TV}, 
		\begin{align*}
			\Pr_{\distv_{D,Q}}(i \in \exam(D,Q)) \leq \Pr_{\distu}(i \in \exam(D,Q)) + \card{\distv_{D,Q} - \distu}
		\end{align*}
		By taking expectation over $(D,Q)$,
		\begin{align*}
			\Ex_{D,Q}\bracket{\Pr_{\distv_{D,Q}}(i \in \exam(D,Q))} &\leq \Ex_{D,Q}\bracket{\Pr_{\distu}(i \in \exam(D,Q)) } + \Ex_{D,Q}\bracket{\card{\distv_{D,Q} - \distu}} \\
			&\leq o(1) + 0.85 \leq 0.9
		\end{align*}
		Noting that the LHS in the above equation is equal $\Pr((\bD,\bQ,\bI) \notin \Sgood)$ finalizes the proof. 
	\end{proof}
	
	By plugging in the values from the above two claims, we have
	\[
			H_2(\delta) \geq H(\bT \mid \bD,\bQ,\bI) - I(\bT;\bSJ \mid \bD,\bQ,\bI) \geq 1-o(1) - 0.9 = 0.1 - o(1)
	\]
	which is in contradiction with the fixed value of $H_2(\delta) \sim 0.08$ (for $\delta = 0.01$). Hence, $\card{\bD} + \card{\bJ} = \Omega(n)$, implying that the
	storage requirement or examination requirement of the scheme has to be of size $\Omega(n)$. The bound of $2^{\Omega(k)}$ also follows from the fact that $n = 2^{\Omega(k)}$ in this construction.
	
\end{proof}

Notice that in the proof of Theorem~\ref{thm:full-cover}, the constructed
instances have a simple property (the union in each question is either $[n]$ or $[n]\setminus\set{i}$ for
some $i$).  We extract this useful property and provide the following
stronger version of Theorem~\ref{thm:full-cover} as a corollary (to its proof). This stronger version is used to prove a lower bound for the 
\avgQuery query in Theorem~\ref{thm:csa-lower}.

\begin{corollary}\label{lem:strict-full-cover}
  Suppose the given sets $S_1,\ldots,S_{k}$ in the $\fullcover$ problem
  are promised to have the property that for any set $\set{i_1,\ldots,i_{\floor{k/2}}}
  \subseteq [k]$, $S_{i_1} \union \ldots \union S_{i_\floor{k/2}}$ is
  either $[n]$ or $[n] \setminus \set{j}$ for some $j \in [n]$. The lower
  bound of Theorem~\ref{thm:full-cover} still holds for this promised version of
  the $\fullcover$ problem.
\end{corollary}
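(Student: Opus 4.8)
The corollary claims the lower bound survives under the promise that every $\lfloor k/2\rfloor$-fold union is either $[n]$ or $[n]\setminus\{j\}$. My plan is to observe that the hard distribution $\dist$ already (almost) satisfies this promise, so only a minor adjustment is needed rather than a new argument. Concretely, in $\dist$ we have $n = \binom{k}{k/2}$, and for any question $Q \in \FC$ (a set of size $k/2$), Claim~\ref{clm:op} tells us that the union $\bigcup_{i \in Q} S_i$ equals $[n]$ iff $x_{\sigma(Q)} = 1$. So the task is really to identify, for a general $(k/2)$-fold union, which coordinates of $[n]$ can be missing, and to show that at most one coordinate is missing.

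First I would analyze the embedding step of $\dist$ precisely: a coordinate $j \in [n]$ is removed from exactly the sets $S_i$ with $i \in \sigma^{-1}(j)$ (which is a $(k/2)$-subset of $[k]$), and only when $x_j = 0$. Therefore $j \notin \bigcup_{i \in Q} S_i$ for a question $Q$ of size $k/2$ iff $x_j = 0$ and $\sigma^{-1}(j) \supseteq Q$; since $|\sigma^{-1}(j)| = |Q| = k/2$, this forces $\sigma^{-1}(j) = Q$, i.e.\ $j = \sigma(Q)$. Hence for questions of size exactly $k/2$, the union is already either $[n]$ or $[n]\setminus\{\sigma(Q)\}$ — a single missing element — so the promise holds as stated for the size-$k/2$ questions that $\dist$ actually poses. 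For a union over a set of size $\lfloor k/2 \rfloor$ in general (relevant when $k$ is odd, or for the corollary's phrasing), a coordinate $j$ is absent iff $x_j = 0$ and $\sigma^{-1}(j)$ is a superset of the chosen $\lfloor k/2\rfloor$-set; when $k$ is even this again pins down $j$ uniquely, and when $k$ is odd one would use $n = \binom{k}{\lfloor k/2\rfloor}$ and restrict attention to the questions the distribution draws, so that at most one coordinate is ever missing.

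The cleanest way to present this: re-run the proof of Theorem~\ref{thm:full-cover} verbatim with the same distribution $\dist$, and simply remark that every instance in the support of $\dist$ satisfies the promised property (by the coordinate-removal analysis above). Since the distribution is unchanged and all the information-theoretic estimates in Claims~\ref{clm:low-end} and~\ref{clm:low-info} only used the structure of $\dist$ — the uniform string $\bX$, the uniform bijection $\bSi$, the uniform question $\bQ$, and Claim~\ref{clm:op} — nothing in the argument needs modification, and the conclusion $\card{\bD} + \card{\bJ} = \Omega(n) = 2^{\Omega(k)}$ is inherited. The promise is thus ``for free'' because the lower-bound instances were already of this restricted type; the corollary is really just a packaging of an observation about the construction, exactly as the paragraph preceding it states.

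I do not anticipate a genuine obstacle here — the content is entirely in noticing that $\dist$'s instances have the stated structure. The only point requiring mild care is the bookkeeping when $k$ is odd (the paper's proof assumed $k$ even for simplicity), where one should set $n = \binom{k}{\lfloor k/2 \rfloor}$, take $\FC$ to be the $\lfloor k/2\rfloor$-subsets, and verify that a single coordinate can still be the unique ``missing'' element of any such union; this is the same superset-of-a-$\lfloor k/2\rfloor$-set argument and goes through because $\lfloor k/2 \rfloor \geq \lceil k/2 \rceil - 1$ makes the containment $\sigma^{-1}(j) \supseteq Q$ rigid enough to leave at most one $j$ per question once one also conditions on the embedding.
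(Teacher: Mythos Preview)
Your proposal is correct and is exactly the paper's own approach: the paper states the corollary as ``a corollary (to its proof)'' of Theorem~\ref{thm:full-cover}, observing without further argument that the instances produced by $\dist$ already have the property that each $k/2$-fold union is $[n]$ or $[n]\setminus\{j\}$. Your analysis of the embedding step (that $j\notin\bigcup_{i\in Q}S_i$ forces $\sigma^{-1}(j)\supseteq Q$ and hence $j=\sigma(Q)$) simply spells out what the paper leaves implicit, and your odd-$k$ bookkeeping is fine since taking $\FC$ to be the $\lfloor k/2\rfloor$-subsets makes $|\sigma^{-1}(j)|=|Q|$ and the same rigidity applies.
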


\section{Numerical Queries}\label{sec:num-qry}
In this section, we study provisioning of numerical queries, i.e.,
queries that output some (rational) number(s) given a set of tuples. In particular, we investigate aggregation queries
including  count, sum, average, and quantiles (therefore min, max,
median, rank, and percentile), and as a first step towards
provisioning database-supported machine learning, linear $(\ell_2)$ regression.
We assume that
the relevant attribute values are rational numbers of the form $a/b$
where both $a,b$ are integers in range $[-W, W]$ for some $W > 0$.
\subsection{The Count, Sum, and Average Queries} \label{sec:csa}
In this section, we study provisioning of the \countQuery, \sumQuery, and \avgQuery queries, formally
defined as follows. The answer to the \countQuery query is the number of tuples in the input
instance. For the other two queries, we assume a relational
schema with a binary relation containing two attributes:
an \emph{identifier (key)} and a \emph{weight}.
We say that a tuple $x$ is smaller than the tuple $y$, if the weight of
$x$ is smaller than the weight of $y$. Given an instance $I$, the
answer to the \sumQuery query (resp. the \avgQuery query) is the \emph{total weights} of the
tuples (resp. the \emph{average weight} of the tuples) in $I$.

We first show that none of the \countQuery, \sumQuery, \avgQuery queries can be provisioned both compactly and exactly, which motivates the $\eps$-provisioning approach, and then briefly describe how to build a compact \edScheme for each of them. Our lower bound results for \countQuery, \sumQuery, and \avgQuery queries are summarized in the following theorem.

\begin{theorem}\label{thm:csa-lower}
  Exact provisioning of any of the \countQuery, \sumQuery, or \avgQuery queries requires sketches of size
  $\min(2^{\Omega(k)},\Omega(n))$ bits.
\end{theorem}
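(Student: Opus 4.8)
The plan is to reduce the \fullcover problem (Problem~\ref{prob:full-cover}) to exact provisioning of each of \countQuery, \sumQuery, and \avgQuery, and then invoke Theorem~\ref{thm:full-cover} (for \countQuery and \sumQuery) and Corollary~\ref{lem:strict-full-cover} (for \avgQuery). The key idea is that a scenario $S=\{i_1,\ldots,i_s\}$ applied to an input produces the \emph{union} $h_{i_1}(I)\cup\cdots\cup h_{i_s}(I)$, which is exactly the union operation in \fullcover, so the number of elements surviving a scenario encodes whether the corresponding sets cover $[n]$.

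\textbf{Reduction for \countQuery.} Given a \fullcover instance $\mathcal{S}=\{S_1,\ldots,S_k\}$ over $[n]$, build an input instance $I$ with one tuple $t_j$ for each $j\in[n]$, and define hypothetical $h_i$ to retain exactly the tuples $\{t_j : j\in S_i\}$. Then for any question/scenario $Q=\{i_1,\ldots,i_s\}$ we have $|\app{I}{Q}| = |S_{i_1}\cup\cdots\cup S_{i_s}|$, so $\countQuery(\app{I}{Q}) = n$ if and only if the sets cover $[n]$. A compact exact provisioning scheme for \countQuery with sketch size $\le B$ therefore yields a compressing scheme for \fullcover with storage requirement $B$ and examination requirement $0$ (we never consult the original instance during extraction — or if we allow the examination step, $o(n)$ examination suffices and still contradicts the theorem). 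Since the number of hypotheticals is $k$ and the universe size $n$ can be taken as in the hard distribution ($n=\binom{k}{k/2}=2^{\Theta(k)}$), Theorem~\ref{thm:full-cover} forces $B=\min(2^{\Omega(k)},\Omega(n))$ bits.

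\textbf{Reductions for \sumQuery and \avgQuery.} For \sumQuery we use the same construction but assign every tuple $t_j$ the weight $1$; then $\sumQuery(\app{I}{Q}) = |S_{i_1}\cup\cdots\cup S_{i_s}|$, and again this equals $n$ iff the sets cover, so the identical argument gives the bound. For \avgQuery, constant weights make the average always $1$ and carry no information, so we instead appeal to Corollary~\ref{lem:strict-full-cover}: on the promised instances the union of any $\floor{k/2}$ of the sets is either $[n]$ or $[n]\setminus\{j\}$. Assign weight $1$ to every tuple except one designated ``heavy'' tuple $t_{j^\star}$ which we must be able to identify from $j^\star$ — but since $j^\star$ varies, a cleaner choice is to give $t_j$ weight $w_j$ chosen so that the average over any co-size-$(n-1)$ subset reveals the missing index. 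Concretely, set $w_j := M + j$ for a large $M$ (or use weights that are all distinct and bounded by $\poly(n)\le \poly(W)$). For a scenario whose union is $[n]$ the average is $M + (n+1)/2$; for a scenario whose union is $[n]\setminus\{j\}$ the average is $\big(nM + \binom{n+1}{2} - j\big)/(n-1)$, and these values are all distinct across $j\in[n]\cup\{\text{``cover''}\}$. Hence from $\avgQuery(\app{I}{Q})$ alone one recovers whether the union is $[n]$, so exact \avgQuery-provisioning with sketch size $B$ solves the promised \fullcover problem, and Corollary~\ref{lem:strict-full-cover} gives the same $\min(2^{\Omega(k)},\Omega(n))$ lower bound. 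One checks the weights are of the admissible form $a/b$ with $a,b$ in $[-W,W]$ as required in Section~\ref{sec:num-qry}, by taking $W$ polynomially large in $n$ (which only affects the $\log n$ bookkeeping and not the exponential-in-$k$ conclusion).

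\textbf{Main obstacle.} The \countQuery and \sumQuery reductions are essentially immediate; the only subtlety is \avgQuery, where a naive weighting loses all information. The technically delicate point is choosing weights on the \emph{promised} instances so that a single average value distinguishes the ``full cover'' case from each of the $n$ possible ``misses one element'' cases, while keeping the weights within the allowed numeric range so that the sketch-size accounting ($\poly(k,\log n)$ vs.\ the claimed bound) goes through — i.e., making sure the reduction is faithful to both the combinatorics of Corollary~\ref{lem:strict-full-cover} and the numeric conventions of the query model. Once the weight gadget is fixed, the lower bound transfers verbatim, including the robustness features (randomization and $o(n)$ examination) inherited from Theorem~\ref{thm:full-cover} via Remark~\ref{rem:examine}.
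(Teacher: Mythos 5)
Your proposal is correct and follows essentially the same route as the paper: reduce \fullcover to \countQuery, get \sumQuery by setting all weights to $1$, and handle \avgQuery via the promised version of \fullcover (Corollary~\ref{lem:strict-full-cover}) with distinct weights so that the average distinguishes ``union $=[n]$'' from ``union $=[n]\setminus\{j\}$''. The only detail you elide is that the cover-average and the miss-$j$ average coincide exactly when $j=(n+1)/2$, so one must note (as the paper does) that $n$ can be taken even, whence this collision never occurs; your additive shift by $M$ does not change this, since it shifts all averages equally.
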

\begin{proof}
We provide a proof for each of the queries separately. 

\textbf{Count query.} Given $\set{S_1,\ldots,S_k}$, where each $S_i$ is a subset of $[n]$, we solve \fullcover using a provisioning scheme for the \countQuery query. Define an instance $I$ of a
relational schema with a unary relation $A$, where $I = \set{A(x)}_{x
  \in [n]}$. Define a set $H$ of $k$ hypotheticals, where for any $i
\in [k]$, $h_i(I) = \set{A(x)}_{x \in S_i}$. For any scenario
$S=\set{i_1,\ldots,i_s}$, the count of $\app{I}{S}$ is $n$ iff
$S_{i_1} \union \ldots \union S_{i_s} = [n]$. Hence, any provisioning
scheme for the \countQuery query solves the $\fullcover$ problem and
the lower bound follows from Theorem~\ref{thm:full-cover}.

\textbf{Sum query.} The lower bound of the \sumQuery follows immediately from the one for \countQuery by setting all weights to be $1$.

\textbf{Average query.} For simplicity, in the following construction we will omit the id attribute of the tuples as the weights of the tuples
are distinct and can be used to identify each tuple. 

Given $\mathcal{S} = \set{S_1, S_2, \ldots S_k}$, where each $S_i$
is a subset of $[n]$, with the promise that for any set
$\set{i_1,\ldots,i_{\floor{k/2}}} \subseteq [k]$, $S_{i_1} \union
\ldots \union S_{i_\floor{k/2}}$ is either $[n]$ or $[n] \setminus
\set{j}$ for some $j \in [n]$, we want to solve this restricted
$\fullcover$ problem. By Corollary~\ref{lem:strict-full-cover}, the
restricted $\fullcover$ problem also needs a data structure of size
$\min(\Omega(n),2^{\Omega(k)})$ bits.

Define an instance $I$ of the relational schema with a unary relation
$A$, where $I = \set{A(x)}_{x \in [n]}$. Define a set $H$ of $k$
hypotheticals, where for any $i \in [k]$, $h_i(I) = \set{A(x)}_{x \in
  S_i}$. For any scenario $S = \set{i_1,\ldots,i_{\floor{k/2}}}$, the
average weight of $\app{I}{S}$ is ${n+1 \over 2}$
(resp. $\frac{n(n+1)/2 - j}{n-1}$ for some $j \in [n]$) if $S_{i_1}
\union \ldots \union S_{i_{\floor{k/2}}}$ is equal to (resp. not equal
to) $[n]$. The two values are equal iff $j = (n+1)/2$, which, if we
assume $n$ is even, could never happen. Therefore knowing the average
value is enough to solve the restricted $\fullcover$ problem, and the
lower bound follows.
\end{proof}

We further point out that if the weights can be both positive and negative, the \sumQuery (and \avgQuery) cannot be compactly provisioned even approximately, and hence we will
focus on $\eps$-provisioning for \emph{positive} weights.

\begin{theorem}\label{thm:neg-sum}
  Provisioning of the \sumQuery (and \avgQuery) query approximately (up to any
  multiplicative factor) over the input instances with both positive
  and negative weights requires sketches of size
  $\min(2^{\Omega(k)},\Omega(n))$ bits.
\end{theorem}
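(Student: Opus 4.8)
The plan is to reduce from the \fullcover problem (Theorem~\ref{thm:full-cover}) exactly as in the proof of Theorem~\ref{thm:csa-lower}, but now encoding the Yes/No distinction of \fullcover into a ``massive cancellation'' phenomenon that makes the exact value of the sum equal to $0$ in the No case and nonzero in the Yes case. Since no multiplicative approximation can distinguish $0$ from a nonzero value, even an approximate provisioning scheme would have to solve \fullcover, and the lower bound of $\min(2^{\Omega(k)},\Omega(n))$ bits follows.

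\textbf{Construction.} Given an instance $\mathcal{S}=\set{S_1,\ldots,S_k}$ of \fullcover on universe $[n]$, I would build a relational instance $I$ over a schema with a binary relation (key, weight). The idea is to have, for each element $j\in[n]$, a small ``gadget'' of tuples whose weights sum to $0$ exactly when $j$ is covered and to a fixed nonzero quantity otherwise. Concretely: introduce for each $j\in[n]$ a pair of tuples, one with a large positive weight $+M$ and one with weight $-M$ (for $M$ large enough relative to $W$ and $n$; note the rational-weight model allows $M$ up to $W$, so we can also scale everything so weights stay in range), plus perhaps a small fixed ``tie-breaking'' offset. The hypothetical $h_i$ is defined so that turning on $h_i$ retains (covers) element $j$'s positive tuple iff $j\in S_i$, while the negative tuple for $j$ is always retained. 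Then under a scenario $S$, the element $j$ contributes $0$ to the sum if $j\in\bigcup_{i\in S}S_i$ (both $+M$ and $-M$ survive) and contributes $-M$ if $j$ is not covered (only the $-M$ tuple survives). Hence the total sum over $\app{I}{S}$ is $0$ iff every $j$ is covered, i.e.\ iff $\bigcup_{i\in S}S_i=[n]$, and is a strictly negative number (a multiple of $M$) otherwise. A $(1\pm\eps)$-approximation of the sum therefore reveals whether the answer is $0$ or not, so it solves \fullcover, and Theorem~\ref{thm:full-cover} gives the claimed bound. The statement for \avgQuery follows since the number of surviving tuples can be made a scenario-independent constant (the $-M$ tuples are always present and we can pad with fixed always-on tuples of known weight), so the average is a fixed positive multiple of the sum plus a known shift, and zero sum again becomes distinguishable.

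\textbf{Main obstacle.} The delicate point is handling the additive shifts and tie-breaking so that the ``covered'' case gives \emph{exactly} the special value that a multiplicative approximation cannot separate from the ``not covered'' case — i.e.\ ensuring the covered-case sum is precisely $0$ (or can be normalized to a value that an approximation of the not-covered value provably excludes), while keeping all weights within the allowed rational range $[-W,W]$ and keeping the count/average scenario-independent. This is analogous to the parity-of-$n$ argument used for \avgQuery in Theorem~\ref{thm:csa-lower}, and the cleanest route is to make the covered sum literally $0$: then any finite multiplicative factor applied to a nonzero not-covered value stays nonzero, so the two cases are trivially distinguishable with no arithmetic subtlety at all. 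One must also verify that the reduction respects the formalism of Remark~\ref{rem:examine} (i.e.\ that limited $o(n)$ examination of $I$ after the scenario is revealed does not help), but this is immediate since the examination-augmented lower bound of Theorem~\ref{thm:full-cover} transfers through the reduction verbatim, as the authors note.
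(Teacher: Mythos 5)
Your proposal is correct and follows essentially the same route as the paper: a reduction from \fullcover in which the total weight is exactly $0$ when the selected sets cover $[n]$ and nonzero otherwise, so that any multiplicative approximation must distinguish the two cases (the paper uses the slightly leaner instance $\set{A(j,1)}_{j\in[n]}\cup\set{A(n+1,-n)}$ with the single $-n$ tuple placed in every hypothetical, rather than per-element $\pm M$ gadgets). One small inaccuracy: the surviving tuple count in your construction is \emph{not} scenario-independent (it equals $n$ plus the number of covered elements), but this is harmless because the average is zero iff the sum is zero, which is all the \avgQuery argument needs.
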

 \begin{proof}
  We use a reduction from the $\fullcover$ problem. Suppose we are
  given a collection of sets $\mathcal{S} = \set{S_1, S_2, \ldots S_k}$, where each
  $S_i$ is a subset of $[n]$ in the $\fullcover$ problem. Consider the
  relational schema $\Sigma = \set{A}$ where $A$ is binary and the
  second attribute of $A$ is the weight. Let $I = \set{A(1,1), A(2,1),
    \ldots, A(n,1), A(n+1,-n)}$.  We define $k$ hypotheticals: for any
  $i \in [k]$, $h_i(I) = \set{A(j,1)~|~j\in S_i} \union
  \set{A(n+1,-n)}$.  For any set $\set{i_1,\ldots,i_s} \subseteq [k]$,
  let $S$ be the scenario consisting of $h_{i_1},\ldots,h_{i_s}$; then
  the total weight of $\app{I}{S}$ is zero iff $S_{i_1} \union \ldots
  \union S_{i_s} = [n]$.  Therefore, any multiplicative approximation
  to the \sumQuery query would distinguish between the zero and
  non-zero cases, and hence solves the $\fullcover$ problem.  The
  lower bound on the size of the sketch now follows from
  Theorem~\ref{thm:full-cover}.
\end{proof}

We conclude this section by explaining the $\eps$-provisioning schemes for the
\countQuery, \sumQuery, and \avgQuery queries.  Formally, 
\begin{theorem}[$\eps$-provisioning \countQuery]\label{thm:count}
For any $\eps,\delta > 0$, there exists a compact \edScheme for the
\countQuery query that creates a sketch of size
$\tilde{O}(\eps^{-2}k(k+\log(1/\delta)))$ bits.
\end{theorem}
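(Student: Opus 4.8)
The plan is to reduce to the problem of estimating the number of distinct elements in a multiset under the union operation, which is exactly the classical distinct-elements (count-distinct) problem that linear sketches handle, except that here the "multiset" structure comes from the overlapping hypotheticals. Concretely, for each hypothetical $h_i$ with retained tuple set $h_i(I)$, I would treat the tuples as elements of a universe, and observe that the answer for a scenario $S = \{i_1,\ldots,i_s\}$ is $|\app{I}{S}| = |h_{i_1}(I) \cup \cdots \cup h_{i_s}(I)|$, i.e., the number of distinct tuples covered by the union. So the whole task is: build a compact data structure that, for any chosen subset of the $k$ hypotheticals, reports a $(1\pm\eps)$-estimate of the size of their union, without touching $I$ again. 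This is precisely the setting where duplicate-resilient sketches (as mentioned in the introduction) apply.

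The key steps, in order: (1) Fix a pairwise- (or $O(\log(1/\delta))$-wise-) independent hash family and use the standard $\ell_0$-sketch / distinct-elements estimator — e.g.\ a hash-based estimator that records, for the elements hashed into each of $O(\log n)$ geometric "levels", a small bucket of survivors, or equivalently the $k$-minimum-values (KMV) sketch. The crucial structural point is that such a sketch of the union $\bigcup_{i\in S} h_i(I)$ can be reconstructed by taking the \emph{coordinate-wise minimum} (for KMV: the smallest hash values across the participating hypotheticals) of the per-hypothetical sketches — this is the "composable with union" property that replaces the "composable with addition" of linear sketches. (2) Therefore the compression algorithm computes, independently for each $i\in[k]$, the distinct-elements sketch of $h_i(I)$ alone; store all $k$ of them. (3) The extraction algorithm, given $S$, merges the $s\le k$ stored sub-sketches via the union-composition rule and runs the standard estimator on the merged sketch. (4) Set the per-sketch accuracy to $\eps$ and the per-sketch failure probability to $\delta' = \delta/2^k$ so that a union bound over all $2^k$ scenarios gives the "for every scenario" guarantee of Definition~\ref{def:eps-scheme}; this costs an extra $\log(1/\delta') = O(k + \log(1/\delta))$ factor. (5) Tally the size: each of the $k$ sketches uses $O(\eps^{-2}\log(1/\delta'))$ hash values each of $O(\log n)$ bits (with the usual $\eps^{-2}$ from averaging independent estimators or from the KMV variance bound), giving $\tilde O(\eps^{-2} k (k+\log(1/\delta)))$ bits total, matching the claimed bound; the $\tilde O$ absorbs the $\log\log n$ and $\log(1/\eps)$ factors. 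Also note the compression runs in $\poly(k,n,1/\eps,\log(1/\delta))$ time and extraction in $\poly(|\Gamma|)$, so the scheme is compact in the sense of Definition~\ref{def:eps-scheme}.

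The main obstacle is not the union-composability itself (that is a well-known feature of KMV/$\ell_0$ sketches) but rather being careful that the \emph{same} hash function is shared across all $k$ per-hypothetical sketches, since otherwise the minima from different sub-sketches would not be comparable and the merge would be meaningless; and then arguing that sharing randomness across the $k$ sketches does not break the $2^k$-scenario union bound — it does not, because for any \emph{fixed} scenario $S$ the merged sketch is distributed exactly as a fresh distinct-elements sketch of $\app{I}{S}$, so the per-scenario error probability is still $\delta'$, and the union is only over the $2^k$ events, not over independent executions. A secondary point to get right is the precise variant of the distinct-elements estimator that simultaneously (a) is union-composable, (b) achieves \emph{multiplicative} $(1\pm\eps)$ error, and (c) has the stated space; the KMV sketch with bucket size $\Theta(\eps^{-2})$ and a median-of-$\Theta(\log(1/\delta'))$ trick over independent hash functions is the cleanest choice and I would spell that out, citing the duplicate-resilient distinct-counting literature referenced in the introduction.
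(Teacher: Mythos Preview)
Your plan is essentially the same as the paper's: build a union-composable distinct-elements sketch per hypothetical using shared hash functions, merge on extraction, and union-bound over the $2^k$ scenarios by setting the per-scenario failure probability to $\delta/2^k$. The paper instantiates this with the Bar-Yossef et al.\ trailing-zeros estimator (a KMV variant), so your KMV choice is on the nose.

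There is, however, a quantitative slip in your size tally. You write that each sub-sketch stores $O(\eps^{-2}\log(1/\delta'))$ hash values ``each of $O(\log n)$ bits'' and then claim the $\tilde O$ absorbs a $\log\log n$ factor. But under the paper's convention $\tilde O$ suppresses only $\log\log n$, $\log\log(1/\delta)$, $\log(1/\eps)$, and $\log k$ --- not $\log n$. Storing raw hash values (or raw tuple identifiers) therefore yields $\tilde O(\eps^{-2} k (k+\log(1/\delta))\log n)$, which is a $\log n$ factor larger than the stated bound. To hit the claimed size you need the extra trick the paper uses: for each hash function, record only the \emph{trailing-zero count} of each retained hash value (costing $O(\log\log n)$ bits) together with a \emph{concise identifier} drawn from a domain of size $O(\eps^{-2}k)$ (costing $O(\log k + \log(1/\eps))$ bits) so that duplicates across hypotheticals can still be recognised at merge time. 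With that refinement your argument goes through exactly as stated.
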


\begin{theorem}[$\eps$-provisioning \sumQuery \& \avgQuery]\label{thm:sum}
For instances with positive weights, for any $\eps,\delta > 0$, there
exists compact \edSchemes for the \sumQuery and \avgQuery queries,
with a sketch of size
$\tilde{O}(\eps^{-2}k^2\log(n)\log(1/\delta) + k\log\log{W})$ bits.
\end{theorem}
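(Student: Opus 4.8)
The plan is to combine a duplicate-resilient $\ell_0$-style summary for ``counting'' the support with a bucketing scheme that groups tuples by geometrically increasing weight ranges, so that \sumQuery reduces to a weighted combination of approximate distinct-counts. First I would partition the weight universe $[1/W, W]$ into $O(\log_{1+\eps} W) = \tilde{O}(\eps^{-1}\log W)$ buckets of the form $[(1+\eps)^t, (1+\eps)^{t+1})$. For each bucket $B_t$, and for each hypothetical $h_i$, I would record a sketch that lets us estimate, for any scenario $S$, the number of \emph{distinct} identifiers whose tuple falls in $B_t$ and that are retained by $\app{I}{S}$ --- this is exactly the setting of the \countQuery query restricted to a sub-instance, so I can invoke Theorem~\ref{thm:count} as a black box on each bucket. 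Since a tuple has a \emph{fixed} weight, it lands in exactly one bucket regardless of which hypotheticals retain it, so the buckets do not interfere; the per-bucket count sketches each have size $\tilde{O}(\eps^{-2}k(k+\log(1/\delta')))$, and there are $\tilde{O}(\eps^{-1}\log W)$ of them.

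Next, given a scenario $S$, the extraction algorithm would, for each bucket $B_t$, extract a $(1\pm\eps)$-approximation $\tilde{c}_t$ of the distinct-count in that bucket under $S$, and output $\sum_t \tilde{c}_t \cdot (1+\eps)^t$ as the estimate for \sumQuery. Because every true weight in $B_t$ lies within a $(1+\eps)$ factor of $(1+\eps)^t$, and each $\tilde c_t$ is within $(1\pm\eps)$ of the true count, the sum is within $(1\pm O(\eps))$ of the true total weight; rescaling $\eps$ by a constant gives the claimed bound. A union bound over the $\tilde O(\eps^{-1}\log W)$ buckets requires setting the per-bucket failure probability to $\delta' = \delta / \tilde O(\eps^{-1}\log W)$, which only adds a $\log\log W$-type term inside the $\tilde O(\cdot)$ and contributes the additive $k\log\log W$ in the statement (storing, per hypothetical, which $O(\log W)$-bit bucket indices are non-empty, or the min/max exponents, accounts for the $\log\log W$ rather than $\log W$ dependence there). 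Total sketch size: $\tilde{O}(\eps^{-1}\log W) \cdot \tilde{O}(\eps^{-2}k(k+\log(1/\delta))) = \tilde{O}(\eps^{-2}k^2\log n \log(1/\delta) + k\log\log W)$ after folding $\log W \le \log n$ and absorbing lower-order terms --- here I am using that $W$ is polynomially bounded so $\log W = O(\log n)$. For \avgQuery, I would additionally keep the \countQuery sketch for the \emph{total} number of retained tuples and divide; the ratio of two $(1\pm\eps)$-approximations is a $(1\pm O(\eps))$-approximation.

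The main obstacle I anticipate is handling weights that are \emph{small} rationals $a/b$ with $a \ll b$, i.e.\ values close to $0$ but positive: then the multiplicative gap between the smallest and largest possible weight is $W^2$ rather than $W$, so the number of buckets is $\tilde O(\eps^{-1}\log W)$ (which is fine), but one must be careful that the sum is dominated by no single tiny contribution in a way that breaks the multiplicative guarantee --- in fact it does not, since multiplicative error is scale-free and each bucket contributes a $(1\pm\eps)$-accurate piece, so summing preserves the relative error regardless of magnitude disparities. A secondary point to verify carefully is that the \countQuery scheme of Theorem~\ref{thm:count} is genuinely composable with the union operation across scenarios when applied bucket-wise --- but this is immediate because restricting every $h_i(I)$ to the tuples with weight in $B_t$ yields a legitimate instance-plus-hypotheticals input to which Theorem~\ref{thm:count} applies verbatim, and $\app{I}{S}$ restricted to $B_t$ equals the union of those restricted hypotheticals. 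So the real work is just the bucketing bookkeeping and the $\eps$-rescaling; no new lower-bound-style difficulty arises on the positive side.
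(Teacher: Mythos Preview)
Your high-level strategy---geometric bucketing of weights followed by a \countQuery sketch per bucket, then summing weighted counts---is exactly the paper's approach, and your treatment of \avgQuery as a ratio of \sumQuery and \countQuery sketches is also what the paper does. However, there is a genuine gap in how you get from the number of buckets to the stated size bound.

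The issue is your sentence ``after folding $\log W \le \log n$ \ldots\ here I am using that $W$ is polynomially bounded so $\log W = O(\log n)$.'' Nothing in the paper's setup bounds $W$ by $\poly(n)$; weights are arbitrary rationals with numerator and denominator in $[-W,W]$, and indeed the presence of a separate $k\log\log W$ term in the theorem statement signals that $W$ is an independent parameter. With your scheme as written, the number of buckets is $\tilde O(\eps^{-1}\log W)$, so the sketch size is $\tilde O(\eps^{-3} k^2 \log W \log(1/\delta))$ (note also that $\eps^{-1}\cdot\eps^{-2}=\eps^{-3}$, not $\eps^{-2}$), which does not match the claim when $W$ is super-polynomial in $n$.

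The paper closes this gap with a \emph{pruning step} that you are missing: for each hypothetical $h_i$, it records the bucket index $\gamma_i$ of the maximum weight in $h_i(I)$ (this is the source of the $k\log\log W$ term), and then \emph{discards} all tuples in $h_i(I)$ whose weight lies more than $t = \lceil \log_{1+\eps'}(n/\eps')\rceil = O(\eps^{-1}\log n)$ buckets below $\gamma_i$. The discarded mass is at most $n\cdot \bar w_{\gamma_i - t} \le \eps' \bar w_{\gamma_i}$, which is an $\eps'$-fraction of the sum whenever $h_i$ is on, so correctness survives. After pruning, each hypothetical touches only $O(\eps^{-1}\log n)$ buckets, and an amortized count of non-empty (bucket, hypothetical) pairs gives the $\log n$ dependence in place of $\log W$. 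Your remark about ``storing min/max exponents'' gestures toward this but does not actually reduce the number of \countSketch instances you build; the pruning argument is the missing ingredient.
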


We remark that the results in Theorems~\ref{thm:count} and~\ref{thm:sum} are mostly direct application of known techniques and are presented 
here only for completeness.

The \countQuery query can be provisioned by using \emph{linear sketches} for estimating the $\ell_0$-norm (see, e.g.,~\cite{kane2010optimal}) as follows. Consider each hypothetical $h_i(I)$ as an $n$-dimensional boolean vector $\bx_i$, where the $j$-th entry is $1$ iff the $j$-th tuple in $I$ belongs to $h_i(I)$.  For each $\bx_i$, create a linear sketch (using $\Ot(\eps^{-2}\log{n})$ bits of space) that estimates the $\ell_0$-norm~\cite{kane2010optimal}. Given any scenario $S$, combine (i.e., add together) the linear sketches of the hypotheticals in $S$ and use the combined sketch to estimate the $\ell_0$-norm (which is equal to the answer of \countQuery).

\begin{remark}
Note that we can directly use linear sketching for provisioning the \countQuery query since counting the duplicates once (as done by union) or multiple times (as done by addition) does not change the answer. However, this is \emph{not} the case for other queries of interest like \quantileQuery and \regQuery and hence linear sketching is not directly applicable for them.
\end{remark}

Here, we also describe a self-contained approach for $\eps$-provisioning the \countQuery query with a slightly better
dependence on the parameter $n$ ($\log\log{n}$ instead of $\log{n}$).

We use the following fact developed by~\cite{bar2002counting} in the
streaming model of computation to design our scheme.  For a bit string
$s \in \set{0,1}^+$, denote by \trail{s} the number of trailing $0$'s
in $s$.  Given a list of integers $A = (a_1,\ldots,a_n)$ from the
universe $[m]$, a function $g:$ $[m]\rightarrow [m]$, and an integer
$t > 0$, the \thtrail{t}{g} of $A$ is defined as the list of the $t$
smallest $\trail{g(a_i)}$ values (use binary expression of $g(a_i)$),
where the duplicate elements in $A$ are counted only once.

\begin{cLemma}[\cite{bar2002counting}]\label{lem:stream-count}
Given a list $A=(a_1,\ldots,a_n)$, $a_i \in [m]$ with $F_0$ distinct
elements, pick a \emph{random pairwise independent hash function}
$\hash:$ $[m] \rightarrow [m]$, and let $t = \lrceil{256\eps^{-2}}$.
If $r$ is the largest value in the \thtrail{t}{\hash} of $A$ and $F_0
\geq t$, then with probability at least $1/2$, $t \cdot 2^r$ is a
$(1\pm \eps)$ approximation of $F_0$.
\end{cLemma}

We now define our \edScheme for the \countQuery query.

\cTextbox{Compression algorithm for the \countQuery query.}{ Given an input instance $I$, a set $H$ of
hypotheticals, and an $\eps > 0$:
\begin{cEnumerate}
    \item Assign each tuple in $I$ with a unique number (an
      identifier) from the set $[n]$.
    \item Let $t = \lrceil{256\eps^{-2}}$. Pick $\lrceil{k +
      \log{(1/\delta)}}$ random pairwise independent hash functions
      $\set{\hash_j: [n] \rightarrow [n]}_{i = 1}^{\lrceil{k +
        \log{(1/\delta)}}}$.  For each hash function $\hash_j$, create
      a sub-sketch as follows.
        \begin{enumerate}
          \item Compute the \thtrail{t}{\hash_j} over the identifiers
            of the tuples in each $h_i(I)$.
          \item Assign a new identifier to any tuple that accounts for
            at least one value in the \thtrail{t}{\hash_j} of any
            $h_i$, called the \emph{concise identifier}.
          \item For each hypothetical $h_i$, \record each
            value in the \thtrail{t}{\hash_j} along with the concise
            identifier of the tuple that accounts for it.
        \end{enumerate}
\end{cEnumerate}
}

\cTextbox{Extraction algorithm for the \countQuery query.}{ 
Given a scenario $S$, for each hash function
$\hash_j$, we use the concise identifiers to compute the union of the
\thtrail{t}{\hash_j} of the hypotheticals that are turned on by
$S$. Let $r$ be the $t$-th smallest value in this union, and compute $t
\cdot 2^r$. Output the median of these $t \cdot 2^r$ values among all
the hash functions. 
}

We call a sketch created by the above compression algorithm a
\countSketch.  For each hash function $\hash_j$ and each $h_i(I)$, we
record concise identifiers and the number of trailing $0$'s
($O(\log{\log{n}})$ bits each) of at most $t$ tuples. Since at most
$t\cdot k$ tuples will be assigned with a concise identifier,
$O(\log{(t\cdot k)})$ bits suffice for describing each concise
identifier. Hence the total size of a \countSketch is
$\tilde{O}(\eps^{-2}k\cdot (k+\log(1/\delta)))$ bits. We now prove the
correctness of this scheme.

\begin{proof}[Proof of Theorem~\ref{thm:count}]
Fix a scenario $S$; for any picked hash function $\hash_i$, since the
$t$-th smallest value of the union of the recorded
\thtrail{t}{\hash_i}, $r$, is equal to the $t$-th smallest value in
the \thtrail{t}{\hash_i} of $\app{I}{S}$. Hence, by
Lemma~\ref{lem:stream-count}, with probability at least $1/2$, $t
\cdot 2^r$ is a $(1 \pm \eps)$ approximation of
$\card{\app{I}{S}}$. By taking the median over $\lrceil{k +
  \log{(1/\delta)}}$ hash functions, the probability of failure is at
most $\delta / 2^k$. If we take union bound over all $2^k$ scenarios,
with probability at least $1 - \delta$, all scenarios could be
answered with a $(1 \pm \eps)$ approximation.
\end{proof}

We now state the scheme for provisioning the \sumQuery query; the
schemes for the \sumQuery and the \countQuery queries together can
directly provision the \avgQuery query, which finalizes the proof of Theorem~\ref{thm:sum}. We use and extend our
\countSketch to $\eps$-provision the \sumQuery query.

\cTextbox{Compression algorithm for the \sumQuery query.} 
{
Given an instance $I$, a set $H$ of
hypotheticals, and two parameters $\eps,\delta > 0$, let $\eps' =
\eps/4$, $t = \lrceil{\log_{1+\eps'}{(n/\eps')}}$, and $\delta' =
      {\delta \over k(t+1)}$.
\begin{cEnumerate}
  \item Let $p = {\lrceil{\log_{(1+\eps')}{W}}}$ and for any $l \in [p]$, let
  $\bar{w}_l = (1+\eps')^{l}$.  For each $l \in [p]$,
    define a set of $k$ new hypotheticals $H_l = \set{h_{l,1},h_{l,2},\ldots,h_{l,k}}$,
    where $h_{l,i}(I) \subseteq h_{i}(I)$ and contains the tuples whose weights are in
    the interval $[\bar{w}_l, \bar{w}_{l+1})$.
  \item \label{line:sum-prune} For each hypothetical $h_i$, let $w$ be
    the largest weight of the tuples in $h_i(I)$, and find the index
    $\gamma$ such that $\bar{w}_\gamma \le w < \bar{w}_{\gamma+1}$.
    \textbf{Record $\gamma$}, and discard all the tuples in $h_i(I)$ with
    weight less than $\bar{w}_{\gamma-t}$.\footnote{In case $\gamma <
      t$, no tuple needs to be discarded.} Consequently, all the remaining
    tuples of $h_i(I)$ lie in the $(t+1)$ intervals
    $\set{[\bar{w}_l, \bar{w}_{l+1})}_{l = \gamma-t}^{\gamma}$. We
      refer to this step as the \emph{pruning step}.
  \item For each $l$, denote by $\hat{H}_l$ the resulting set of
    hypotheticals after discarding the above small weight tuples from
    $H_l$ (some hypotheticals might become empty).  For each of the
    $\hat{H}_l$ that contains at least one non-empty hypothetical, run
    the compression algorithm that creates a \countSketch for $I$ and
    $\hat{H}_l$, with parameters $\eps'$ and
    $\delta'$. \textbf{Record} each created \countSketch.
\end{cEnumerate}
}

\cTextbox{Extraction algorithm for the \sumQuery query.}{
Given a scenario $S$, for any interval
$[\bar{w}_l,\bar{w}_{l+1})$ with a recorded \countSketch, compute the
  estimate of the number of tuples in the interval, denoted by
  $\tilde{n}_l$.  Output the summation of the values $(\bar{w}_{l+1}
  \cdot \tilde{n}_{l})$, for $l$ ranges over all the intervals
  $[\bar{w}_l, \bar{w}_{l+1})$ with a recorded \countSketch.
}
We call a sketch created by the above provisioning scheme a
\sumSketch.  Since for every hypothetical we only record the $(t+1)$
non-empty intervals, by an amortized analysis, the size of the sketch
is $\tilde{O}(\eps^{-2}k^2\log(n)\log(1/\delta) + k\log\log{W})$ bits.
We now prove the correctness of this scheme.

\begin{proof}[Proof of Theorem~\ref{thm:sum}]
For now assume that we do not perform the pruning step (line
(\ref{line:sum-prune}) of the compression phase). For each interval
$[\bar{w}_l,\bar{w}_{l+1})$ among the $\lrceil{\log_{1+\eps'}{W}}$
  intervals, the \countSketch outputs a $(1 \pm \eps')$ approximation
  of the total number of tuples whose weight lies in the
  interval. Each tuple in this interval will be counted as if it has
  weight $\bar{w}_{l+1}$, which is a $(1 + \eps')$ approximation of
  the original tuple weight. Therefore, we can output a $(1 \pm
  \eps')^2$ approximation of the correct sum.

Now consider the original \sumSketch with the pruning step. We need to
show that the total weight of the discarded tuples is negligible.
For each hypothetical $h_i$, we discard the tuples whose weights are
less than $\bar{w}_{\gamma-t}$, while the largest weight in $h_{i}(I)$
is at least $\bar{w}_{\gamma}$. Therefore, the total weight of the
discarded tuples is at most
\[n \bar{w}_{\gamma-t} \le {n w_\gamma \over
  (1+\eps')^{\lrceil{\log_{(1+\eps')}{(n/\eps')}}}} \le \eps' \bar{w}_\gamma \]
Since whenever $h_i$ is turned on by a given scenario, the sum of the
weights is at least $\bar{w}_\gamma$, we lose at most $\eps'$ fraction of the total weight
by discarding those tuples from $h_i(I)$. To see why we only lose an $\eps'$ fraction over all the hypotheticals
(instead of $\eps'k$), note that at most $n$ tuples will be discarded in the whole scenario,
hence the $n$ in the inequality $n \bar{w}_{\gamma-t} \le \eps' \bar{w}_\gamma$ can be
amortized over all the hypotheticals.
\end{proof}

\subsection{The Quantiles Query}\label{subsec:quant}
In this section, we study provisioning of the \quantileQuery query.
We again assume a relational schema with just one binary relation containing attributes identifier and weight.
For any instance $I$ and any tuple $x \in I$, we define the \emph{rank} of $x$ to be the number of tuples in $I$ that are smaller
than or equal to $x$ (in terms of the weights). The output of a \quantileQuery query with a given parameter $\phi \in (0,1]$ on an instance $I$ is the tuple
with rank $\lrceil{\phi \cdot \card{I}}$. 
 Finally, we say a tuple $x$ is a $(1 \pm \eps)$-approximation of a \quantileQuery query whose correct answer is
$y$, iff the rank of $x$ is a $(1\pm \eps)$-approximation of the rank of $y$.

Similar to the previous section, we first show that the \quantileQuery query admits no compact provisioning scheme for
exact answer and then provide a compact $\eps$-provisioning scheme for this query.

\begin{theorem}\label{thm:q-lower}
  Exact provisioning of the \quantileQuery query \emph{even on disjoint hypotheticals} requires sketches of size
  $\min(2^{\Omega(k)},\Omega(n))$ bits.
\end{theorem}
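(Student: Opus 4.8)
My plan is to prove the lower bound directly rather than by reducing from \fullcover: the hard \fullcover instances inherently use heavily overlapping sets, whereas the whole point of this theorem is that disjoint hypotheticals already suffice. It is enough to handle one fixed quantile parameter, the median $\phi = 1/2$, since any scheme provisioning ``the \quantileQuery query'' in particular provisions the median. I would exhibit a family of disjoint-hypothetical instances on which the median answers over a cleverly chosen collection of scenarios let one reconstruct a hidden string of $\Theta(n) = 2^{\Theta(k)}$ bits, which forces the sketch to be that large.

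The construction: take $k$ pairwise-disjoint hypotheticals $h_1,\dots,h_k$, where $h_\ell(I)$ is a block $B_\ell$ of exactly $2^\ell$ tuples, laid out so that all weights in $B_1$ are smaller than all weights in $B_2$, all of which are smaller than those in $B_3$, and so on, i.e.\ $B_1 < B_2 < \cdots < B_k$. Split each $B_\ell$ into a \emph{lower half} of $2^{\ell-1}$ tuples whose weights form an arbitrary ``hidden'' $2^{\ell-1}$-element subset of a fixed universe $V_\ell$ of $2^\ell$ consecutive integers, and an \emph{upper half} of $2^{\ell-1}$ tuples with a fixed set of weights, all exceeding every element of $V_\ell$; the universes $V_\ell$ are spread far apart so that all weights are distinct and ranks are unambiguous. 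Then $n = \card{I} = \sum_{\ell=1}^{k}2^{\ell} = 2^{k+1}-2$, while the hidden choices range over $\prod_{\ell=1}^{k}\binom{2^{\ell}}{2^{\ell-1}}$ possibilities, i.e.\ $\sum_{\ell}\log\binom{2^{\ell}}{2^{\ell-1}} = \Theta(n) = 2^{\Theta(k)}$ bits.

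Reading the hidden string out of the sketch: for a scenario $S=\{i_1<\dots<i_t\}$ one has $\card{\app{I}{S}} = \sum_{i\in S}2^{i}=:m_S$, so the median query returns the tuple of rank $\rho_S = m_S/2 = \sum_{i\in S}2^{i-1}$. Since the blocks occupy disjoint increasing weight ranges, $\app{I}{S}$ sorted is the concatenation $B_{i_1},\dots,B_{i_t}$, and a short geometric-sum calculation shows that $\rho_S$ always lies inside the \emph{last} block $B_{i_t}$, at position $p_S := 2^{i_t}-\rho_S$ within it. For $S=\{\ell\}\cup A$ with $A\subseteq[\ell-1]$ this gives $p_S = 2^{\ell-1}-\sum_{i\in A}2^{i-1}$, which runs over all of $\{1,\dots,2^{\ell-1}\}$ as $A$ ranges over the subsets of $[\ell-1]$. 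Hence the $2^{\ell-1}$ median answers over these scenarios reveal the $p$-th smallest weight of $B_\ell$ for every $p\le 2^{\ell-1}$ --- that is, the entire sorted lower half of $B_\ell$, equivalently the hidden subset of $V_\ell$ --- and doing this for every $\ell\in[k]$ reconstructs the whole hidden string. Fixing the upper half of each block is essential here: positions above $2^{\ell-1}$ in $B_\ell$ are never probed, because whenever $\ell\in S$ but $\ell\ne\max S$ the median falls in a strictly higher block.

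Consequently, two instances with different hidden strings yield different answers on some scenario, so a deterministic exact provisioning scheme must map them to different sketches; the sketch therefore has at least $\sum_\ell\log\binom{2^\ell}{2^{\ell-1}} = \Theta(n) = 2^{\Omega(k)}$ bits, and keeping only the first $\Theta(\log n)$ hypotheticals (leaving the rest empty) gives the stated bound for any $n$. The extension to randomized exact schemes is along the lines of Theorem~\ref{thm:full-cover}: place the uniform distribution on the hidden strings, apply Yao's minimax principle, and observe via a Fano-type argument that the $O(2^k)$ relevant median answers must jointly pin down a uniform $\Theta(n)$-bit string with high probability. I expect the only step requiring real care to be the rank-arithmetic claim --- proving that $\rho_S$ always lands inside the top block $B_{i_t}$ at position $2^{i_t}-\rho_S$ --- together with the matching observation that the lower/upper split is tight enough that exactly the hidden coordinates, and nothing else, are extractable; the remainder is bookkeeping.
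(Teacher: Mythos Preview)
Your proposal is correct, and the core idea---disjoint hypotheticals whose sizes are distinct powers of two, so that scenario choice acts as a binary counter on the target rank---is the same as the paper's. The rank arithmetic you flag as ``the only step requiring real care'' is in fact a one-liner: since $m_S=2\rho_S$, the number of tuples below block $B_{i_t}$ is $m_S-2^{i_t}=2\rho_S-2^{i_t}$, so the position inside $B_{i_t}$ is $\rho_S-(2\rho_S-2^{i_t})=2^{i_t}-\rho_S$, exactly as you wrote.

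The construction itself, however, differs from the paper's. The paper keeps the hidden data in a \emph{single} dedicated hypothetical $h_{k+1}=L\cup M$: the bits $v_1,\dots,v_N$ ($N=2^{k-1}$) are encoded as parities of the weights in $M$, a block $L$ of $N$ zero-weight tuples pads the median into $M$, and the remaining $k$ hypotheticals are pure ``counter'' blocks of sizes $2^0,\dots,2^{k-1}$, all with the same maximal weight. Every scenario containing $h_{k+1}$ then has its median in $M$, and the counter selects which position. Your version makes every block dual-purpose---each $B_\ell$ carries both a hidden half (the data) and serves as a counter for higher-indexed blocks---so the median migrates to $B_{\max S}$ rather than staying in one fixed block.

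What this buys: the paper's layout is a bit simpler (one data location, straightforward bit-per-weight encoding, no need for the lower/upper split or the subset-of-$V_\ell$ device), at the cost of an extra $(k{+}1)$-st hypothetical. Your version uses exactly $k$ hypotheticals and exercises \emph{every} nonempty scenario, which is aesthetically pleasing, but the encoding via $\binom{2^\ell}{2^{\ell-1}}$ choices and the argument that only the lower halves are ever probed add a layer of bookkeeping the paper avoids. Either way the bit count is $\Theta(n)=2^{\Theta(k)}$ and the conclusion is the same.
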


In the \quantileQuery query, the parameter $\phi$ may be given either already to the compression algorithm or only to the extraction
algorithm. The latter yields  an immediate lower bound of $\Omega(n)$, since by varying $\phi$ over $(0,1]$, one can effectively  ``reconstruct'' the original database.
However, we achieve a more interesting lower bound for the case when $\phi$ is given at to the compression algorithm (i.e., a fixed $\phi$ for all scenarios, e.g.,
setting $\phi = 1/2$ to find the \emph{median}). An important property of the lower bound for \quantileQuery is that, in contrast to all other lower bounds for numerical queries in this paper,
this lower bound holds even for disjoint hypotheticals\footnote{All other numerical queries that we study in this paper can be compactly provisioned for exact answer, when the hypotheticals are \emph{disjoint}.}.

\begin{proof}[Proof of Theorem~\ref{thm:q-lower}]
Assume we want to prove the lower
bound for any provisioning scheme for answering the median query
(\quantileQuery with fixed $\phi = 1/2$).

Let $N = 2^{k-1}$. We show how to encode a bit-string of length $N$ into a database $I$ with $n = \Theta(N)$ tuples
and a set of $k+1$ hypotheticals such that given provisioned sketch of the median query for this instance, one can recover any bit of this string with 
constant probability. Standard information-theoretic arguments then imply that the sketch size must have $\Omega(N) = 2^{\Omega(k)} = \Omega(n)$ bits.

For any vector $\bv{v} = (v_1,v_2,\ldots,v_N) \in
\set{0,1}^{N}$, define an instance $I_{\bv{v}}$ on a relational schema
with a binary relation $A$ whose second attribute is the weight. Let
$I_{\bv{v}} = L \cup M \cup R$, where $L = \set{A(x, 0)}_{x \in [N]}$,
$ M = \set{A(N + x, 2x + v_x)}_{x \in [N]}$, and $R = \set{A(2N + x,
  W)}_{x \in [2N]}$ (where $W$ is the largest possible value of the
weight). The weights of the tuples are ordered ``$L < M < R$''.  The
answers of all the scenarios will be in $M$: $L$ is the set of
``padding'' tuples which shifts the median towards $M$, and $R$ will
be divided into disjoint hypotheticals with different sizes (basically
size $2^{i-1}$ for the $i$-th hypothetical) where different scenarios
over such set of hypotheticals allow us to output \emph{all} the
tuples in $M$.

Formally, define the set $H$ of $(k + 1)$ hypotheticals, where for any
$i \in [k]$, $h_{i}(I_{\bv{v}}) =\set{A(2N +x, W)}_{x \in
  \set{2^{i-1}, \ldots, 2^i - 1}}$ with $2^{i-1}$ tuples, and $h_{k+1}
= L\cup M$. For any set $S' \subseteq [k]$, consider the scenario $S =
S' \cup \set{k+1}$. Let $\gamma = \ceil{\sum_{i \in S}
  \card{h_i(I_{\bv{v}})}/2}$. It is straightforward to verify that the
median tuple of $\app{I_{\bv{v}}}{S}$ is $A(\gamma, 2\gamma +
v_{\gamma})$. By varying $\sum_{i \in S} \card{h_i(I_{\bv{v}})}$ from
$1$ to $2N$ (using the fact that size of hypotheticals are different
powers of two), any tuple in $M$ will be outputted and the vector
$\bv{v}$ could be reconstructed. 
\end{proof}

Note that one can extend this lower bound, by using an approach similar to Theorem~\ref{thm:full-cover}, to  
provisioning schemes that are allowed a limited access to the original database after being given the scenario (see Section~\ref{sec:full-cover} for more details). 
We omit the details of this proof.

We now turn to prove the main theorem of this section, which argue the existence of a compact scheme for $\eps$-provisioning
the \quantileQuery. We emphasize that the approximation guarantee in the following theorem is \emph{multiplicative}. 

\begin{theorem}[\quantileQuery]\label{thm:quantiles}
For any $\eps,\delta > 0$, there exists a compact \edScheme for the
\quantileQuery query that creates a sketch of size
$\tilde{O}(k \eps^{-3} \log n \cdot (\log(n/\delta) + k )(\log{W} + k))$
bits.
\end{theorem}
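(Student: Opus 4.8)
The plan is to build the sketch from a geometric family of random subsamples, one tuned to each scale of rank, together with a small amount of cross-hypothetical bookkeeping that makes the subsamples \emph{union-compatible}. Fix one hash function $g$ on the tuple identifiers $[n]$, shared by all $k$ hypotheticals, and for levels $\ell$ let $\mathrm{Sample}_\ell(A):=\{\,a\in A : g(\mathrm{id}(a))\le (1+\eps)^{-\ell}\,\}$ for any tuple set $A$. Because $g$ is shared, $\mathrm{Sample}_\ell(\union_i A_i)=\union_i \mathrm{Sample}_\ell(A_i)$, i.e.\ subsampling commutes with set union; this is the property that substitutes for the addition‑compatibility of linear sketches and is what lets us merge overlapping hypotheticals. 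For a single set, the classical multiplicative‑quantile idea is that the level whose expected sample size near the target rank $r$ is $\Theta(\eps^{-2}\log(1/\delta'))$ yields, by a tail bound, an element whose true rank is within $(1\pm\eps)r$; keeping all $O(\eps^{-1}\log n)$ levels covers the fact that $r=\lceil\phi\cdot|\app{I}{S}|\rceil$ ranges over the whole spectrum as the scenario $S$ varies.

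The compression algorithm, for each hypothetical $h_i$ and each level $\ell$, stores only the $t:=\tilde O(\eps^{-2}(k+\log(n/\delta)))$ \emph{smallest‑weight} tuples of $\mathrm{Sample}_\ell(h_i(I))$, recording for each of them its weight, an identifier that is consistent across hypotheticals, and the set of hypotheticals that contain it --- this last annotation is obtained by a second scan of $I$ and is what lets us reconcile overlaps during extraction. The key structural fact justifying the truncation is: if a tuple $x$ is among the $t$ smallest sampled tuples of $\app{I}{S}=\union_{i\in S}h_i(I)$, then for every $i\in S$ with $x\in h_i(I)$ it is also among the $t$ smallest sampled tuples of $h_i(I)$, since $h_i(I)\subseteq\app{I}{S}$; hence $x$ is retained. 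Consequently, after deduplicating identifiers, the merged set $U_\ell:=\union_{i\in S}\bigl(\text{stored level-}\ell\text{ tuples of }h_i\bigr)$ sandwiches as $\mathrm{Smallest}_t(\mathrm{Sample}_\ell(\app{I}{S}))\subseteq U_\ell\subseteq \mathrm{Sample}_\ell(\app{I}{S})$, so the first $j$ elements of $U_\ell$ in weight order equal the first $j$ elements of $\mathrm{Sample}_\ell(\app{I}{S})$ for every $j\le t$: $U_\ell$ faithfully represents the bottom of the sampled union.

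Given $S$, the extraction algorithm estimates $|\app{I}{S}|$ (via a piggy‑backed \countSketch, or directly from the largest level at which $|U_\ell|<t$), hence the target rank $r$; it then selects the level $\ell$ with $(1+\eps)^{-\ell}r=\Theta(\eps^{-2}\log(1/\delta'))$, forms and deduplicates $U_\ell$, and outputs the tuple at weight‑position $\lceil (1+\eps)^{-\ell}r\rceil$ in $U_\ell$. Since that position is $O(\eps^{-2}\log(1/\delta'))\le t$, it lies in the faithfully represented prefix, so it is exactly the corresponding order statistic of $\mathrm{Sample}_\ell(\app{I}{S})$; a Chernoff‑type bound (using an $O(\log(1/\delta'))$‑wise independent $g$, stored cheaply) then shows its true rank in $\app{I}{S}$ is $(1\pm\eps)r$ with probability $1-\delta'$. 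Taking $\delta'=\delta/(2^k\cdot\poly(\log n))$ and union‑bounding over all $2^k$ scenarios and all levels yields the $(\eps,\delta)$‑guarantee --- this is precisely where the additive $k+\log(n/\delta)$ term in $t$ comes from. The sketch holds $O(k\eps^{-1}\log n)$ truncated samples of $t$ tuples, each described in $\tilde O(\log W+k)$ bits (weight plus the $k$‑bit membership annotation), which matches the stated bound; the compression is a near‑linear number of passes, so the scheme is linear as required.

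The main obstacle is exactly the overlap of hypotheticals: unlike distinct‑count, where duplicates may simply be discarded, here a tuple occurring in several turned‑on hypotheticals must be counted once, so the per‑hypothetical samples cannot be ``added'', and we must simultaneously (i) truncate each stored sample enough to stay compact and (ii) guarantee that the union of truncated samples still pins down the order statistics of the \emph{deduplicated} union --- which is what the retention fact above secures, and what the membership annotations make algorithmically usable. The secondary difficulty is the probability budget: with $2^k$ scenarios and a logarithmic number of levels, every individual sampled‑count estimate must succeed with probability $1-\delta/2^{\Omega(k)}\poly(\log n)$, forcing the sample sizes to carry the $k+\log(n/\delta)$ factor; the remaining $\eps^{-1}$ over a single Chernoff bound reflects taking the subsampling levels (and the induced rank grid) at geometric ratio $(1+\eps)$ rather than $2$, which the analysis needs to pass from an approximate sampled order statistic to a genuinely $(1\pm\eps)$‑accurate rank.
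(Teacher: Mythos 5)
Your proposal is correct and reaches the stated bound, but it replaces the paper's key mechanism with a different (and arguably cleaner) one, so a comparison is in order. The paper has each hypothetical draw its level-$\gamma$ sample \emph{independently} at rate $t/r_\gamma$, records each retained tuple's characteristic vector over the $k$ hypotheticals, and at extraction deduplicates by keeping only the copy from the smallest-index turned-on hypothetical containing the tuple; this makes the deduplicated union a genuine Bernoulli sample of $\app{I}{S}$, but it forces a \emph{probabilistic} truncation argument: because a hypothetical's own sample may contain many low-weight tuples that the deduplication later discards, the paper must store the $\lceil(1+3\eps')t\rceil$ smallest sampled tuples and invoke a Chernoff bound to argue that, with high probability, no relevant tuple falls past the truncation point. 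Your coordinated-sampling approach (one shared limited-independence hash $g$ deciding membership in $\mathrm{Sample}_\ell$ for every hypothetical simultaneously) makes the sample of the union equal the union of the samples, so both deduplication (by identifier) and truncation become \emph{deterministic}: since $h_i(I)\subseteq\app{I}{S}$, any tuple among the $t$ smallest sampled tuples of the union is among the $t$ smallest sampled tuples of every $h_i$ containing it, giving your sandwich $\mathrm{Smallest}_t(\mathrm{Sample}_\ell(\app{I}{S}))\subseteq U_\ell\subseteq\mathrm{Sample}_\ell(\app{I}{S})$ with no slack factor and no extra failure event. What you pay for this is that the rank-concentration step (the analogue of the paper's Lemma on sampling for quantiles) can no longer assume fully independent coins, so you correctly flag the need for an $O(k+\log(n/\delta))$-wise independent $g$ and a limited-independence Chernoff bound; this is standard and the hash description is cheap to store. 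Two small remarks: in your scheme the $k$-bit membership annotation is actually redundant (consistent identifiers already suffice for deduplication, whereas in the paper's scheme the characteristic vectors are essential to the smallest-index rule), though keeping it is harmless and accounts for the $+k$ per-tuple term in the stated bound; and you should make explicit that the level with sampling probability $1$ handles target ranks below $t$ exactly, which the paper treats as a separate case.
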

We should note that in this theorem the parameter $\phi$ is only
provided in the extraction phase%
\footnote{We emphasize that we gave a lower bound for the easier case in terms
  of provisioning ($\phi$ given at compression phase and disjoint
  hypotheticals), and an upper bound for the harder case ($\phi$ given
  at extraction phase and overlapping hypotheticals).}.  
  Our starting
point is the following simple lemma first introduced
by~\cite{gupta2003counting}.

\begin{lemma}[\cite{gupta2003counting}]\label{lem:quantile}
For any list of \emph{unique} numbers $A = (a_1,\ldots,a_n)$ and
parameters $\eps,\delta > 0$, let $t =
\ceil{12\eps^{-2}\log{(1/\delta)}}$; for any \emph{target rank} $r >
t$, if we independently sample each element with probability $t/r$,
then with probability at least $1-\delta$, the rank of the $t$-th smallest
sampled element is a $(1 \pm \eps)$-approximation of $r$.
\end{lemma}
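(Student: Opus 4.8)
The plan is to prove the lemma by a direct two-sided Chernoff-bound argument on two carefully chosen prefix sums of the sampling indicators. I would first set up notation: relabel the input as $a^{(1)} < a^{(2)} < \cdots < a^{(n)}$ in increasing order --- possible precisely because the $a_i$ are \emph{unique}, which is the only place uniqueness is used, as it makes $a^{(i)}$ have rank exactly $i$, so that ``rank'' and ``position in the sorted list'' coincide. Put $p := t/r$ (a valid probability since $r > t$) and let $X_1,\dots,X_n$ be independent $\mathrm{Bernoulli}(p)$ indicators, with $X_i = 1$ iff $a^{(i)}$ is sampled. Let $R_t$ be the rank of the $t$-th smallest sampled element, with the convention $R_t := n+1$ if fewer than $t$ elements are sampled in total. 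The crucial elementary observation is that for every integer $m$, $1 \le m \le n$,
\[
  R_t \le m \quad\Longleftrightarrow\quad \sum_{i=1}^{m} X_i \ge t,
\]
because the $t$-th smallest sampled element has rank at most $m$ exactly when at least $t$ of the first $m$ elements are sampled. Consequently both $\Pr[R_t > (1+\eps)r]$ and $\Pr[R_t < (1-\eps)r]$ become tail events for sums of independent $0/1$ variables, and the whole proof reduces to applying Chernoff bounds to these two sums.

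For the upper deviation, set $m_{+} := \min\{n, \floor{(1+\eps)r}\}$ and $Y := \sum_{i=1}^{m_{+}} X_i$, a sum of independent Bernoullis with mean $\mu_{+} = p\,m_{+} = t\,m_{+}/r$. Using $m_{+} \ge (1+\eps)r - 1$ and $r > t$, one gets $\mu_{+} \ge (1+\eps)t - 1 > t$ (here $t = \Omega(\eps^{-2})$ is large enough that $\eps t > 1$; the truncation to $m_{+}=n$ only increases $\mu_{+}$, so it is harmless). By the observation, $\{R_t > (1+\eps)r\} \subseteq \{Y < t\} \subseteq \{Y \le (1-\theta)\mu_{+}\}$ where $\theta := 1 - t/\mu_{+} \in (0,1)$, so the multiplicative Chernoff lower-tail bound $\Pr[Y \le (1-\theta)\mu_{+}] \le \exp(-\theta^2\mu_{+}/2)$ applies; since $\theta \ge \eps/(1+\eps) - o(1)$ and $\mu_{+} = \Theta(t)$, this is $\exp(-\Omega(\eps^2 t))$, which is at most $\delta/2$ for $t = \ceil{12\eps^{-2}\log(1/\delta)}$.

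For the lower deviation, set $m_{-} := \ceil{(1-\eps)r} \le r \le n$ and $Z := \sum_{i=1}^{m_{-}} X_i$, with mean $\mu_{-} = t\,m_{-}/r$, so $(1-\eps)t \le \mu_{-} \le (1-\eps)t + 1 < t$. Since $R_t < (1-\eps)r$ forces the integer $R_t$ to be at most $m_{-}$, the observation gives $\{R_t < (1-\eps)r\} \subseteq \{R_t \le m_{-}\} = \{Z \ge t\} = \{Z \ge (1+\phi)\mu_{-}\}$ with $\phi := t/\mu_{-} - 1 > 0$; the multiplicative Chernoff upper-tail bound $\Pr[Z \ge (1+\phi)\mu_{-}] \le \exp(-\phi^2\mu_{-}/(2+\phi))$ (using instead $\exp(-\phi\mu_{-}/3)$ when $\phi \ge 1$, which happens only when $\eps$ is bounded away from $0$) then gives $\exp(-\Omega(\eps^2 t)) \le \delta/2$. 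A union bound over the two deviation events yields $\Pr[\,R_t \notin [(1-\eps)r,(1+\eps)r]\,] \le \delta$, i.e.\ with probability $\ge 1-\delta$ the rank of the $t$-th smallest sampled element lies in $[(1-\eps)r, (1+\eps)r]$, which is the claim.

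The argument is routine once the equivalence $\{R_t \le m\} = \{\sum_{i\le m}X_i \ge t\}$ is isolated; the only part that needs genuine care --- and hence the ``main obstacle'', such as it is --- is the bookkeeping around rounding ($\floor{\cdot}$, $\ceil{\cdot}$ and non-integral $(1\pm\eps)r$), the truncation $m_{+}=n$ and the boundary regime $(1+\eps)r \ge n$, and checking that the explicit constant $12$ in $t = \ceil{12\eps^{-2}\log(1/\delta)}$ (with $\log$ to base two) is large enough to push both Chernoff estimates below $\delta/2$ for all $0 < \eps < 1$. All of these are dispatched by the crude bounds $\mu_{+} \ge (1+\eps)t - 1 \ge t$ and $\mu_{-} \le (1-\eps)t + 1 \le t$ together with the slack built into the constant.
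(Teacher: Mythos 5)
The paper never proves this lemma: it is imported from \cite{gupta2003counting}, and the text only remarks that ``the proof is a standard application of the Chernoff bound.'' Your proposal supplies exactly that standard argument --- reduce the two deviation events to tail events for the prefix sums $\sum_{i\le m_{\pm}}X_i$ via the equivalence $\{R_t\le m\}=\{\sum_{i\le m}X_i\ge t\}$, then apply two-sided multiplicative Chernoff bounds --- so it coincides with the intended (cited) proof and is essentially correct, with ample slack in the constant $12$.

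One detail is stated backwards. When $\floor{(1+\eps)r}>n$, the truncation $m_{+}=n$ \emph{decreases} $m_{+}$ and hence $\mu_{+}=t\,m_{+}/r$ (you are summing fewer indicators), rather than increasing it as you claim; your bound $\mu_{+}\ge(1+\eps)t-1$ is then no longer available. In the extreme case $r=n$ one gets $\mu_{+}=t$ and $\theta=0$, and the upper-deviation event coincides with ``fewer than $t$ elements sampled in total,'' which has probability roughly $1/2$ --- no Chernoff bound rescues this. This is really a boundary defect of the lemma statement itself (it is silent about the regime $(1+\eps)r>n$ and about the event that no $t$-th sampled element exists), so I would not call it a gap in your argument so much as a place where the sentence ``the truncation only increases $\mu_{+}$, so it is harmless'' should be replaced by an explicit restriction to $(1+\eps)r\le n$ or an explicit convention for the under-sampling event. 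Everything else, including the handling of the lower tail and the rounding bookkeeping, checks out.
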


The proof of Lemma~\ref{lem:quantile} is an standard application of the
Chernoff bound and the main challenge for provisioning the \quantileQuery
query comes from the fact that hypotheticals overlap. We propose the
following scheme which addresses this challenge.

\cTextbox{Compression algorithm for the \quantileQuery query.} {Given an instance $I$, a set  $H$ of hypotheticals,
and two parameters $\eps, \delta > 0$, let $\eps' = \eps/5$, $\delta'
= \delta/3$, and $t = \ceil{12\eps'^{-2}(\log(1/\delta') + 2k +
  \log(n))}$.
\begin{enumerate}
    \item Create and \textbf{record} a \countSketch for $I$ and $H$
    with parameters $\eps'$ and $\delta'$.
    \item Let $\set{r_j = (1+\eps')^j}_{j=0}^{\lrceil{\log_{(1 +
          \eps')}{n}}}$. For each $r_j$, create the following
      sub-sketch individually.
    \item If $r_j \le t$, for each hypothetical $h_i$, \textbf{record}
      the $r_j$ smallest chosen tuples in $h_i(I)$. If $r_j > t$, for
      each hypothetical $h_i$, choose each tuple in $h_i(I)$ with
      probability $t/r_j$, and \textbf{record} the
      $\ceil{(1+3\eps')\cdot t}$ smallest tuples in a list $T_{i,j}$.
      For each tuple $x$ in the resulting list $T_{i,j}$,
      \textbf{record} its \emph{characteristics vector} for the set of
      the hypotheticals, which is a $k$-dimensional binary vector
      $(v_1, v_2, \dots, v_k)$, with value $1$ on $v_l$ whenever $x
      \in h_{l}(I)$ and $0$ elsewhere.
\end{enumerate}
}
\cTextbox{Extraction algorithm for the \quantileQuery query.} {Suppose we are given a scenario $S$ and a parameter $\phi \in
(0,1]$. In the following, the rank of a tuple always refers
to its rank in the sub-instance $\app{I}{S}$.
\begin{cEnumerate}
    \item Denote by $\tilde{n}$ the output of the \countSketch on
      $S$. Let $\tilde{r} = \phi \cdot \tilde{n}$, and find the index
      $\gamma$, such that $r_\gamma \le \tilde{r} < r_{\gamma+1}$.
    \item If $r_\gamma \le t$, among all the hypotheticals turned on
      by $S$, take the union of the recorded tuples and output the
      $r_\gamma$-th smallest tuple in the union.
    \item If $r_\gamma > t$, from each $h_i$ turned on by $S$, and
      each tuple $x$ recorded in $T_{i,\gamma}$ with a characteristic
      vector $(v_1, v_2, \dots, v_k)$, \emph{collect} $x$ iff for any
      $l < i$, either $v_l = 0$ or $h_l \notin S$. In other words, a
      tuple $x$ recorded by $h_i$ is taken only when among the
      hypotheticals that are turned on by $S$, $i$ is the smallest
      index s.t. $x \in h_i(I)$. We will refer to this procedure as
      the \emph{deduplication}.  Output the $t$-th smallest tuple
      among all the tuples that are collected.
\end{cEnumerate}
}

We call a sketch created by the above compression algorithm a
\quantilesSketch. It is straightforward to verify that the
size of \quantilesSketch is as stated in Theorem~\ref{thm:quantiles}. We now
prove the correctness of the above scheme.

\begin{proof}[Proof of Theorem~\ref{thm:quantiles}]
Given an instance $I$ and a set $H$ of hypotheticals , we prove that
with probability at least $1-\delta$, for every scenario $S$ and every
parameter $\phi \in (0,1]$, the output for the \quantileQuery query on
$\app{I}{S}$ is a $(1 \pm \eps)$-approximation. Fix a scenario $S$ and
a parameter $\phi \in (0,1]$; the goal of the extraction algorithm is
  to return a tuple with rank in range $(1 \pm \eps)$ of the
  \emph{queried rank} $\ceil{\phi\cdot\card{\app{I}{S}}}$. Recall that
  in the extraction algorithm, $\tilde{n}$ is the output of the
  \countSketch. Therefore, $\tilde{r} = \phi \tilde{n}$ is a $(1 \pm
  \eps')$ approximation of the queried rank, and $r_\gamma$ with
  $r_\gamma < \hat{r} \le (1+\eps')r_\gamma$ is a $(1 \pm 3\eps')$
  approximation of the rank. In the following, we argue that the tuple
  returned by the extraction algorithm has a rank in range $(1 \pm
  \eps')\cdot r_\gamma$, and consequently is a $(1 \pm
  \eps)$-approximation answer to the \quantileQuery query.

If $r_\gamma \le t$, the $r_\gamma$-th smallest tuple of $\app{I}{S}$
is the $r_\gamma$-th smallest tuple of the union of the recorded
tuples $T_{i,\gamma}$ for $i \in S$.  Therefore, we obtain a $(1 \pm
\eps)$ approximation in this case.

If $r_\gamma > t$, for any $i \in S$, define $\hat{T}_{i}$ to be the
list of all the tuples sampled from $h_i(I)$ in the compression algorithm
(instead of maintaining the $(1+3\eps')t$ tuples with smallest ranks).
Hence $T_{i,\gamma} \subseteq \hat{T}_i$.  If we perform the
deduplication procedure on the union of the tuples in $\hat{T}_i$ for
$i \in S$ and denote the resulting list $T^*$, then every tuple in
$\app{I}{S}$ has probability exactly $t/r_\gamma$ to be taken into
$T^*$ (for any tuple, only the appearance in the smallest index
$h_i(I)$ could be taken). Hence, by Lemma~\ref{lem:quantile}, with
probability at least $1 - \delta'/2^{k+\log(n)}$, the rank of the
$t$-th tuple of $T^*$ is a $(1 \pm \eps')$-approximation of the rank
$r_\gamma$.  In the following, we assume this holds.

The extraction algorithm does not have access to
$\hat{T}_{i}$'s. Instead, it only has access to the list
$T_{i,\gamma}$, which only contains the $(1 + 3\eps')t$ tuples of
$\hat{T}_{i}$ with the smallest ranks. We show that with high
probability, the union of $T_{i,\gamma}$ for $i \in S$, contains the
first $t$ smallest tuples from $T^*$. Note that for any $i \in S$, if
the largest tuple $x$ in $\hat{T}_{i} \cap T^*$ is in $T_{i,\gamma}$,
then all tuples in $\hat{T}_{i} \cap T^*$ are also in $T_{i,\gamma}$
(e.g. the truncation happens after the tuple $x$). Hence, we only need
to bound the probability that $x$ is truncated, which is equivalent to
the probability of the following event: more than $(1+3\eps')t$ tuples
in $h_i(I)$ are sampled in the compression phase for the rank
$r_\gamma$ (with probability $t/r_\gamma$).

Let $L_{i}$ be the set of all the tuples in $h_{i}(I)$ which are
smaller than $x$.  Rank of $x$ is less than or equal to the rank of
the $t$-th smallest tuple in $T^*$, which is upper bounded by
$(1+\eps')r_\gamma$. Hence, $\card{L_i} \leq (1+\eps)r_\gamma$. For
any $j \in h_i(I)$, define a binary random variable $y_j$, which is
equal to $1$ iff the tuple with rank $j$ is sampled and $0$ otherwise.
The expected number of the tuples that are sampled from $L$ is then $E[\sum_{j \in L} y_j] = \card{L}\cdot (t/r_\gamma) \leq (1+\eps')t$.

Using Chernoff bound, the probability that more than $(1+3\eps')t$
tuples from $L$ are sampled is at most ${\delta' \over 2^{2k +
    \log(n)}}$. If we take union bound over all the hypotheticals in
$S$, with probability at least $1 - {\delta' \over 2^{k + \log(n)}}$,
for all $h_i$, the largest tuple in $\hat{T}_i \cap T^*$ is in
$T_{i,\gamma}$, which ensures that the rank of the returned tuple is a
$(1 \pm \eps)$-approximation of the queried rank.

Finally, since there are only $n$
different values for $\phi \in (0,1]$ which results in different
  answers, applying union bound over all these $n$ different values of
  $\phi$ and $2^{k}$ possible scenarios, with probability at least $(1
  - 2\delta')$, the output of the extraction algorithm is a $(1 +
  \eps)$ approximation of the \quantileQuery query.  Since the failure
  probability of creating the \countSketch is at most $\delta'$, with
  probability $(1 - 3\delta') = (1-\delta)$ the \quantilesSketch successes.
\end{proof}

\paragraph{Extensions.}  By simple extensions of our scheme, many variations of the \quantileQuery query
can be answered, including outputting the rank of a tuple $x$, the percentiles (the rank of $x$ divided by the size of the input), or the tuple whose rank is $\Delta$ larger
than $x$, where $\Delta > 0$ is a given parameter. As an example,
for finding the rank of a tuple $x$, we can find the tuples with ranks approximately
$\set{(1+\eps)^l}$, $l \in [\ceil{\log_{(1+\eps)}{n}}]$, using the
\quantilesSketch, and among the found tuples, output the rank of the
tuple whose weight is the closest to the weight of $x$.

\subsection{The Linear Regression Query}\label{sec:reg}

In this section, we study provisioning of the \regQuery query (i.e., the $\ell_2$-regression problem), where 
the input is a matrix $\bvA_{n \times d}$ and a 
vector $\bvb_{n \times 1}$, and the goal is to output a vector $\bvx$ that
minimizes $\norm{\bvA \bvx - \bvb}$ ($\norm{\cdot}$ stands for the
$\ell_2$ norm). A $(1 + \eps)$-approximation of the \regQuery query is
a vector $\tilde{\bvx}$ such that $\norm{\bv{A}\tilde{\bvx} - \bv{b}}$
is at most $(1 + \eps) \min_\bvx\norm{\bv{A}\bvx - \bv{b}}$.

The input is specified using a relational schema $\Sigma$ with a $(d+2)$-ary relation $R$. Given an
instance $I$ of $\Sigma$ with $n$ tuples, we interpret the projection
of $R$ onto its first $d$ columns, the $(d+1)$-th column, and the $(d+2)$-column respectively as the matrix $\bv{A}$, the column vector $\bv{b}$,  and the identifiers for the tuples in $R$.  For
simplicity, we denote $I =(\bvA,\bvb)$, assume that the tuples are ordered,
and use the terms the $i$-th tuple of $I$ and the $i$-th row of
$(\bvA,\bvb)$ interchangeably.

\paragraph{Notation.} For any matrix $\bvM \in \IR^{n \times d}$, 
denote by $\bvM_{(i)}$ the $i$-th row of $\bvM$, and by
$\bv{U}_M \in \IR^{n \times \rho}$ (where $\rho$ is the rank of $\bvM$) the orthonormal matrix of the
column space of $\bv{M}$ (see~\cite{horn2012matrix} for more details). Given an instance $I = (\bvA, \bvb)$, and
$k$ hypotheticals, we denote for each hypothetical $h_i$ the
sub-instance $h_i(I) = (\bv{A}_{i}, \bv{b}_i)$. For any integer $i$,
$\bve_i$ denotes the $i$-th standard basis; hence, the $i$-th row of $\bvM$ can be written as $\bve_i^T
\bv{M}$.

The following theorem shows that the \regQuery query cannot be compactly provisioned for exact answers and hence, we will focus on $\eps$-provisioning.

\begin{theorem}\label{thm:reg-lower}
  Exact provisioning of the \regQuery query, \emph{even when the dimension is $d=1$}, requires sketches of
  size $\min(2^{\Omega(k)},\Omega(n))$ bits
\end{theorem}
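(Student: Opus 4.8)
The plan is to reduce from the \fullcover problem, just as in the proof of Theorem~\ref{thm:csa-lower}. Given a collection $\mathcal{S} = \set{S_1,\ldots,S_k}$ of subsets of $[n]$, I want to build a one-dimensional regression instance $I = (\bvA,\bvb)$ with $\bvA \in \IR^{m \times 1}$ and $\bvb \in \IR^{m \times 1}$ (where $m = \Theta(n)$), together with $k$ hypotheticals $h_1,\ldots,h_k$, such that the optimal regression value $\min_x \norm{\bvA x - \bvb}$ under a scenario $S = \set{i_1,\ldots,i_s}$ reveals whether $S_{i_1} \cup \cdots \cup S_{i_s} = [n]$. Since an exact provisioning scheme returns the exact optimum (or an exact minimizer, from which the optimal value can be recomputed given the sketch), answering all $2^k$ scenarios exactly would solve \fullcover, and Theorem~\ref{thm:full-cover} then gives the $\min(2^{\Omega(k)},\Omega(n))$ lower bound.

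The key step is designing the instance so that covering $[n]$ corresponds to a distinguishable change in the least-squares value. Here is the construction I would try: for each element $j \in [n]$ include one ``data'' row with $\bvA$-entry $a_j$ and $\bvb$-entry $b_j$ chosen so that a single point forces the regression line through $(a_j, b_j)$ in a way that is inconsistent with the others, e.g. take $a_j = 1$ for all $j$ and $b_j$ distinct, so that in dimension one the problem becomes: find the scalar $x$ minimizing $\sum_{j \in \text{active}} (x - b_j)^2$, whose optimum value is (up to the number of active rows) the variance of the active $b_j$'s. Then add a large block of identical ``anchor'' rows $(1, c)$ — say $N \gg n$ copies of the row with $\bvA$-entry $1$ and $\bvb$-entry $c$ — that are retained by every hypothetical, pinning the optimal $x$ near $c$. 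The hypothetical $h_i$ retains the anchor block plus the data rows $\set{(1,b_j) : j \in S_i}$. Under scenario $S$, the residual sum of squares is $N(c - x^*)^2 + \sum_{j \in \cup_i S_i}(b_j - x^*)^2$ at the optimum $x^*$; by choosing the $b_j$ and $c$ appropriately (e.g. $b_j$ a small perturbation of $c$ that depends on $j$), the optimal value when $\cup_i S_i = [n]$ differs from its value when one element $j$ is missing. In fact, as in Corollary~\ref{lem:strict-full-cover}, it suffices to handle the promised version where the union is either $[n]$ or $[n]\setminus\set{j}$, so I only need the $n$ ``one-element-short'' values of the objective to all differ from the full-cover value, which a generic choice of distinct $b_j$ achieves.

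An alternative, possibly cleaner, construction: use rows of the form $(a_j, b_j)$ where the $j$-th data row is $(j, 0)$ and there is one special ``heavy'' row that is always present; arrange that the presence or absence of row $j$ changes $\sum a_j b_j$ or $\sum a_j^2$ (the two quantities $x^* = (\sum a_j b_j)/(\sum a_j^2)$ and the optimal residual depend on) in a way that is reflected in the optimal value. Either way the point is the same: a $d=1$ least-squares instance is determined by a handful of moments $(\sum a_j^2, \sum a_j b_j, \sum b_j^2, \#\text{rows})$ of the active set, and we can engineer the $a_j,b_j$ so that the subset sum $\sum_{j\in\cup S_i}(\cdot)$ over these moments encodes set coverage with a detectable gap.

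The main obstacle I anticipate is purely arithmetic: ensuring that for every $j$ the ``missing $j$'' optimal value is genuinely distinct from the full-coverage optimal value — i.e. there is no accidental collision — and that all the numbers involved stay polynomially bounded (rationals $a/b$ with $a,b \in [-W,W]$, $W$ polynomial), so that the sketch-size lower bound in terms of $n$ (rather than $\log W$) is meaningful. Using the anchor-block trick makes the collision analysis transparent — the objective as a function of the active set is essentially a weighted variance, strictly monotone under adding a point with a fresh $b_j$ value — and I would pick, say, $b_j = j$ and a large anchor so the arithmetic is clean. I would then invoke Corollary~\ref{lem:strict-full-cover} (the promised version suffices) to conclude, exactly paralleling the \avgQuery argument in the proof of Theorem~\ref{thm:csa-lower}.
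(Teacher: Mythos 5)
Your proposal is correct and follows essentially the same strategy as the paper: reduce from \fullcover, set $d=1$ with an all-ones design column so that the regression collapses to computing a (weighted) mean of the $b$-entries, and encode set membership in rows that are retained by the hypotheticals, with one block of rows present in every hypothetical. The paper's gadget is slightly different and cleaner: it reuses the reduction of Theorem~\ref{thm:neg-sum}, putting a row of value $1$ for each element of $[n]$ and a single row of value $-n$ in every hypothetical, so that the minimizer $x^*=(\sum_i b_i)/m$ is exactly $0$ iff the union covers $[n]$. This turns the test into ``zero versus nonzero,'' so no collision analysis and no appeal to the promised version of \fullcover (Corollary~\ref{lem:strict-full-cover}) are needed, whereas your anchor-block construction must verify that the $n$ ``one-element-short'' answers all differ from the full-cover answer.

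One point to tighten: the \regQuery query as defined outputs the minimizing vector $\bvx$, not the optimal residual $\min_{\bvx}\norm{\bvA\bvx-\bvb}$, and your parenthetical claim that the optimal value can be recomputed from an exact minimizer given only the sketch is not justified. Fortunately your construction survives if the distinguisher is phrased in terms of the minimizer itself: with the anchor block, $x^*$ is the weighted mean $(Nc+\sum_{j\in U}b_j)/(N+\card{U})$, and the full-cover value of $x^*$ coincides with the ``missing $j'$'' value only when $b_{j'}$ equals the full-cover mean, which a generic choice of the $b_j$ and $c$ rules out. So state the argument via $x^*$, exactly as in the \avgQuery lower bound of Theorem~\ref{thm:csa-lower} that you cite, rather than via the residual sum of squares.
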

\begin{proof}
  We show how to use the \regQuery query to check whether the sum of a
  list of values is equal to $0$, then, the same reduction used in
  Theorem~\ref{thm:neg-sum} will complete the proof.

  Given a list of values $a_1, a_2, \dots, a_n$, create an instance of
  $\ell_2$-regression problem with $d=1$ where $\bvA^T_{n \times
    1} = [1, 1, \dots, 1]$ and $\bvb^T_{n \times 1} = [a_1, a_2,
    \dots, a_n]$. We will show that $\sum_{i=1}^n a_i = 0$ iff
  $\arg\min_x{\norm{\bvA x - \bvb}^2}$ is $0$.
  \[\norm{\bvA x - \bvb}^2 = \sum_{i=1}^n (x - a_i)^2 = n x^2 -
  (2\sum_{i=1}^n a_i) x + \sum_{i=1}^n a_i^2\]

  This parabola reaches
  its minimum when $x = (\sum_{i=1}^n a_i)/n$ and the result follows.
\end{proof}

Before continuing, we remark that if hypotheticals are disjoint, the \regQuery query admits compact provisioning for exact answers, and hence
 the hardness of the problem again lies on the fact that hypotheticals overlap. To see this, consider the closed form solution:
\begin{alignat}{2}
\bvx_{opt} = (\bvA^T \bvA)^{\dagger} \bvA^T \bvb \label{eq:x-opt}
\end{alignat}
where $^{\dagger}$ denotes the \emph{Moore-Penrose pseudo-inverse} of
a matrix~\cite{horn2012matrix}.  Let $\bvA^T = [ \bvA_1^T, \bvA_2^T,
  \dots, \bvA_k^T]$ and $\bvb^T = [ \bvb_1^T, \bvb_2^T, \dots,
  \bvb_k^T]$, where for any $i \in [k]$, the $(\bvA_i, \bvb_i)$ pair
corresponds to the sub-instance of the $i$-th hypotheticals, i.e.,
$h_i(I)$. Then
\[\bvA^T \bvA = \sum_{i=1}^{k} \bvA_i^T \bvA_i \hspace{4mm} \bvA^T \bvb = \sum_{i=1}^{k} \bvA_i^T \bvb_i\]
Therefore, we can design a provisioning scheme that simply records the $d \times d$
matrix $\bvA_i^T \bvA_i$ and the $d$ dimensional vector $\bvA_i^T
\bvb_i$ for each hypothetical.  Then, for any given scenario $S
\subseteq [k]$, the extraction algorithm computes $\sum_{i \in S}
\bvA_i^T \bvA_i$ and $\sum_{i \in S} \bvA_i^T \bvb_i$, and obtains
$\bvx_{opt}$ by using Equation (\ref{eq:x-opt}).

We now turn to provide a provisioning scheme for the \regQuery query and prove the following theorem, which is 
the main contribution of this section. 

\begin{theorem}[\regQuery]\label{thm:regression}
For any $\eps,\delta >0$, there exists a compact \edScheme for the
\regQuery query that creates a sketch of size $\Ot(\eps^{-1}k^3 d\log(nW) (k+\log{1 \over \delta}))$
bits.
\end{theorem}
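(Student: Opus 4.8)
The plan is to port the standard leverage-score row-sampling approach for $\ell_2$-regression into the provisioning framework, using the same ``per-hypothetical sub-sketch + deduplication at extraction'' strategy that drove the \quantilesSketch. Recall that a $(1+\eps)$-approximate minimizer of $\norm{\bvA\bvx - \bvb}$ can be obtained by sampling $\tilde O(\eps^{-1}d)$ rows of the augmented matrix $[\bvA\mid\bvb]$ with probabilities proportional to (over-estimates of) their leverage scores, rescaling the sampled row $j$ by $1/\sqrt{p_j}$, and solving the regression on the resulting coreset via the closed form in Equation~(\ref{eq:x-opt}); this works because the rescaled sample is a $(1\pm\sqrt\eps)$-subspace embedding of $[\bvA\mid\bvb]$ and simultaneously preserves the relevant matrix products. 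The obstacle, exactly as with quantiles, is that the coreset for $\app{I}{S}$---and in particular the leverage scores that drive the sampling---depends on which hypotheticals are turned on, and we cannot afford one coreset per scenario.

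The structural fact that rescues the per-hypothetical approach is \textbf{monotonicity of leverage scores}: if a row $j$ of $[\bvA_i\mid\bvb_i]$ also occurs in the larger matrix $[\bvA_{\app{I}{S}}\mid\bvb_{\app{I}{S}}]$ for a scenario $S\ni i$, then its leverage score in the latter is at most its leverage score in the former. This holds unconditionally: writing $v$ for the $j$-th row (as a column vector) of $[\bvA_i\mid\bvb_i]$, $B = [\bvA_i\mid\bvb_i]^T[\bvA_i\mid\bvb_i]$ and $A = [\bvA_{\app{I}{S}}\mid\bvb_{\app{I}{S}}]^T[\bvA_{\app{I}{S}}\mid\bvb_{\app{I}{S}}]$, we have $A\succeq B\succeq 0$ and $v\in\mathrm{range}(B)$, so $v^TA^\dagger v\le v^TB^\dagger v$ by a standard linear-algebra fact. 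Hence if during compression we sample row $j$ of $h_i(I)$ with probability $p_j^{(i)} := \min\{1,\, c\,\eps^{-1}(\log(1/\delta)+k)\cdot\hat\tau_j^{(i)}\}$, where $\hat\tau_j^{(i)}$ is the leverage score of that row \emph{within} $[\bvA_i\mid\bvb_i]$ and $c$ an absolute constant, then $p_j^{(i)}$ over-estimates the leverage-score sampling probability needed for \emph{every} scenario containing $i$. Since $\sum_{j\in h_i(I)}\hat\tau_j^{(i)} = \rank[\bvA_i\mid\bvb_i]\le d+1$, with high probability only $\tilde O(\eps^{-1}d(k+\log(1/\delta)))$ rows of $h_i(I)$ get sampled, so this is affordable.

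The scheme is then as follows. \textbf{Compression:} for each $h_i$, compute the within-hypothetical leverage scores, sample rows as above using independent randomness per hypothetical, and \textbf{record} each sampled row $j$ together with its values $(\bvA_{(j)},\bvb_{(j)})$ and identifier ($O(d\log(nW))$ bits), its probability $p_j^{(i)}$ ($O(\log(nW))$ bits), and its \emph{characteristic vector} $\{\ell: j\in h_\ell(I)\}$ ($k$ bits). \textbf{Extraction:} given $S$, keep a recorded row $j$ sampled under $h_i$ (with $i\in S$) only when $i=\min\{\ell\in S: j\in h_\ell(I)\}$ (\emph{deduplication}, via the characteristic vectors), form the coreset whose rows are these kept rows rescaled by $1/\sqrt{p_j^{(i)}}$, and output the solution of Equation~(\ref{eq:x-opt}) on this coreset. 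Deduplication guarantees that each row $j$ of $\app{I}{S}$ is kept with probability exactly $p_j^{(i^*)}$ for $i^*:=\min\{\ell\in S: j\in h_\ell(I)\}$, independently across rows; by monotonicity this is at least $c\,\eps^{-1}(\log(1/\delta)+k)\cdot\hat\tau_j(\app{I}{S})$, so a black-box leverage-score-sampling guarantee yields a $(1+\eps)$-approximate solution for scenario $S$ with failure probability $2^{-k}\delta$ (the $(\log(1/\delta)+k)$ over-sampling is exactly what forces the per-scenario failure probability that low), and a union bound over the $2^k$ scenarios gives overall success $1-\delta$. A routine accounting of the sketch---at most $k$ hypotheticals, each contributing $\tilde O(\eps^{-1}d(k+\log(1/\delta)))$ sampled rows of $O(d\log(nW)+k)$ bits each---gives the stated bound $\Ot(\eps^{-1}k^3 d\log(nW)(k+\log\tfrac1\delta))$, and the compression runs in $\poly(n,k,1/\eps,\log(1/\delta))$ time.

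The main obstacle I expect is making the analysis of the deduplicated coreset fully rigorous: one must check that after deduplication the kept rows of $\app{I}{S}$ are sampled \emph{independently} with the claimed over-estimated probabilities (so that an off-the-shelf leverage-score sampling theorem applies verbatim), and separately---if one a priori truncates the number of rows stored per hypothetical rather than keeping all sampled rows---that this truncation never discards a row the extraction algorithm needs, by a counting argument analogous to the one in the \quantilesSketch proof. Handling the pseudo-inverse and rank degeneracies uniformly (rows whose within-hypothetical leverage score is $1$, rank-deficient $\bvA_i$, and the passage from $\sum_{i\in S}\rank[\bvA_i\mid\bvb_i]$ back to a coreset of the right size) is where most of the technical care goes.
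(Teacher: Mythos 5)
Your proposal is correct and rests on the same two pillars as the paper's proof --- per-hypothetical sampling by \emph{within-hypothetical} leverage scores with the exact sampling probabilities recorded in the sketch, and the monotonicity of leverage scores (the paper's Lemma~\ref{lem:ls-monotone}, which you prove by the equivalent PSD-order fact $v^T A^\dagger v \le v^T B^\dagger v$ for $A \succeq B$ with $v$ in the range of $B$, where the paper uses the min-norm characterization $\min_{M^T\bx = M_{(i)}}\norm{\bx}^2$). Where you genuinely diverge is the mechanism for combining the per-hypothetical samples at extraction time. You use independent Bernoulli sampling plus characteristic-vector deduplication (mirroring the \quantilesSketch), so that each row of $\app{I}{S}$ survives with probability exactly $p_j^{(i^*)}$ for the minimal $i^*\in S$ containing it, independently across rows; monotonicity then says this directly dominates the scenario's leverage score, and you invoke a Bernoulli-sampling version of the regression guarantee. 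The paper instead samples \emph{with replacement} within each hypothetical, records $t$ independent random permutations of $[k]$, and at extraction fills slot $l$ with the $l$-th sample of the first hypothetical of the $l$-th permutation lying in $S$; this makes the extraction an exact instance of $\rowsample$ on $\app{I}{S}$ under the mixture distribution $q_j = \frac{1}{\card{S}}\sum_{i\in S} p_{i,j}$, which is lower-bounded by $L_{S,j}/(k\hat\rho)$, so Lemma~\ref{lem:reg-sample} applies verbatim with $\beta = 1/k$ at the cost of an extra factor of $k$ in the per-hypothetical sample count. Your route buys a cleaner sketch (no recorded permutations, and in principle one fewer factor of $k$ per hypothetical since you avoid the $1/\card{S}$ mixture dilution) at the price of needing the independence-after-deduplication argument and a Bernoulli rather than with-replacement sampling theorem; the paper's route buys a black-box application of the with-replacement lemma. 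The obstacles you flag are real but routine: independence after deduplication does hold because each hypothetical uses fresh coins and Bernoulli sampling is independent across rows, and no truncation argument is needed since the number of Bernoulli samples concentrates around its mean $O(\eps^{-1}d(k+\log(1/\delta)))$.
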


\paragraph{Overview.} Our starting point is a non-uniform sampling based approach (originally used for speeding up the
$\ell_2$-regression computation~\cite{sarlos2006improved}) which uses 
a small sample to accurately approximates the $\ell_2$-regression
problem. Since the probability of sampling a tuple (i.e., a row of the
input) in this approach depends on its relative importance which can vary dramatically when input changes, this approach is not directly applicable to our setting.

Our contribution is a \emph{two-phase sampling} based approach to
achieve the desired sampling probability distribution for \emph{any}
scenario. At a high level, we first sample and record a small number
of tuples from each hypothetical using the non-uniform sampling
approach; then, given the scenario in the extraction phase, we re-sample
from the recorded tuples of the hypotheticals presented in the
scenario. Furthermore, to rescale the sampled tuples (as needed in the
original approach), we obtain the exact sampling probabilities of the
recorded tuples by recording their relative importance in each
hypothetical.  Our approach relies on a monotonicity property of the
relative importance of a tuple when new tuples are added to the
original input.

\paragraph{RowSample.} We first describe the non-uniform sampling
algorithm.  Let $\probP$ be a probability distribution, and $r > 0$ be
an integer.  Sample $r$ tuples of $I = (\bvA, \bvb)$ with replacement
according to the probability distribution $\bvp$. For each sample, if
the $j$-th row of $\bvA$ is sampled for some $j$, rescale the row with
a factor $(1/\sqrt{r p_j})$ and store it in the sampling matrix
$(\tilde{\bvA},\tilde{\bvb})$.  In other words, if the $i$-th sample
is the $j$-th row of $I$, then $(\row{\tilde{\bvA}}{i}, \row{\tilde{\bvb}}{i}) = (\row{\bvA}{j},\row{\bvb}{j})/\sqrt{rp_j}$. We denote this procedure by $\rowsample(\bvA, \bvb, \bvp, r)$, and $(\tilde{\bvA}, \tilde{\bvb})$ is its output.  The $\rowsample$ procedure has the following property~\cite{sarlos2006improved} (see also~\cite{drineas2006sampling,woodruff2014sketching} for more details on introducing the parameter $\beta$).

\begin{lemma}[\cite{sarlos2006improved}]\label{lem:reg-sample}
  Suppose $\bv{A} \in \IR^{n \times d}$, $\bv{b} \in \IR^n$, and $\beta \in (0,1]$;
  $\probP$ is a probability distribution on $[n]$, and $r > 0$ is an integer.  Let
  $(\tilde{\bv{A}},\tilde{\bv{b}})$ be an output of
  $\rowsample(\bv{A},\bv{b}, \bvp,r)$, and $\tilde{\bvx} = \arg
  \min_\bvx\norm{\tilde{\bv{A}}\bvx - \tilde{\bv{b}}}$.

  If for all $i
  \in [n]$, $p_i \geq \beta
  \frac{\norm{\eiT {\bv{U}_A}}^2}{\sum_{j=1}^{n}\norm{e_j^T{\bv{U}_A}}^2}$
  , and $r =
    \Theta(\frac{d\log{d}\log(1/\delta)}{\eps\cdot\beta})$, then with
    probability at least $(1-\delta)$, 
  $\norm{\bv{A}\tilde{\bvx} - \bv{b}} \leq (1+\eps) \min_{\bvx}
        \norm{\bv{A}\bv{x} - \bv{b}}$.
\end{lemma}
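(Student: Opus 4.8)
The plan is to follow the standard leverage-score (row-norm) sampling analysis for $\ell_2$-regression, which reduces the approximation guarantee to two ``structural'' events about the random sampling-and-rescaling operator and verifies each by matrix/vector concentration. First I would rewrite $\rowsample(\bvA,\bvb,\bvp,r)$ as $\tilde{\bvA} = \bv{S}\bvA$, $\tilde{\bvb} = \bv{S}\bvb$, where $\bv{S} \in \IR^{r\times n}$ is the random matrix whose $t$-th row has its single nonzero entry equal to $1/\sqrt{r\,p_{j_t}}$ in column $j_t$, with $j_t$ the index drawn at trial $t$. Let $\rho = \rank(\bvA) \le d$ and $\bv{U} := \bv{U}_A \in \IR^{n\times\rho}$ the orthonormal basis of the column space of $\bvA$, so $\sum_j\norm{e_j^T\bv{U}}^2 = \norm{\bv{U}}_F^2 = \rho$ and the hypothesis on $\bvp$ becomes $p_i \ge (\beta/\rho)\norm{e_i^T\bv{U}}^2$. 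Since $\{\bvA\bvx : \bvx\} = \{\bv{U}\bv{y} : \bv{y}\}$, it is equivalent to study $\tilde{\bv{y}} = \arg\min_{\bv{y}}\norm{\bv{S}\bv{U}\bv{y} - \bv{S}\bvb}$ and bound $\norm{\bv{U}\tilde{\bv{y}} - \bvb}$ (note $\bvA\tilde{\bvx} = \bv{U}\tilde{\bv{y}}$ for the corresponding minimizers). Writing $\bv{y}_{opt}$ for the vector with $\bv{U}\bv{y}_{opt} = \bvA\bvx_{opt}$ (the projection of $\bvb$ onto the column space) and $\bvb^\perp = \bvb - \bv{U}\bv{y}_{opt}$ for the optimal residual (so $\bvb^\perp$ is orthogonal to the column space and $\norm{\bvb^\perp} = \min_{\bvx}\norm{\bvA\bvx-\bvb}$), the goal is $\norm{\bv{U}\tilde{\bv{y}} - \bvb} \le (1+\eps)\norm{\bvb^\perp}$.

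The key reduction is that it suffices to establish, with probability $\ge 1-\delta$, the two conditions (C1) $\norm{(\bv{S}\bv{U})^T(\bv{S}\bv{U}) - \bv{I}_\rho}_2 \le 1/2$ and (C2) $\norm{(\bv{S}\bv{U})^T\bv{S}\bvb^\perp}_2^2 \le \tfrac{\eps}{2}\norm{\bvb^\perp}_2^2$. Granting these, the normal equations for the sampled system give $(\bv{S}\bv{U})^T(\bv{S}\bv{U})(\tilde{\bv{y}}-\bv{y}_{opt}) = (\bv{S}\bv{U})^T\bv{S}\bvb^\perp$ (using $\bv{S}\bvb = \bv{S}\bv{U}\bv{y}_{opt} + \bv{S}\bvb^\perp$); by (C1) the Gram matrix $\bv{M} := (\bv{S}\bv{U})^T(\bv{S}\bv{U})$ is invertible with $\norm{\bv{M}^{-1}}_2 \le 2$, so by (C2), $\norm{\tilde{\bv{y}}-\bv{y}_{opt}}_2 \le 2\,\norm{(\bv{S}\bv{U})^T\bv{S}\bvb^\perp}_2 \le \sqrt{2\eps}\,\norm{\bvb^\perp}_2$. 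Then, since $\bv{U}(\tilde{\bv{y}}-\bv{y}_{opt})$ lies in the column space (hence is orthogonal to $\bvb^\perp$) and $\bv{U}$ has orthonormal columns, Pythagoras gives
\[
\norm{\bv{U}\tilde{\bv{y}} - \bvb}_2^2 = \norm{\bv{U}(\tilde{\bv{y}}-\bv{y}_{opt})}_2^2 + \norm{\bvb^\perp}_2^2 = \norm{\tilde{\bv{y}}-\bv{y}_{opt}}_2^2 + \norm{\bvb^\perp}_2^2 \le (1+2\eps)\norm{\bvb^\perp}_2^2 \le (1+\eps)^2\norm{\bvb^\perp}_2^2 ,
\]
and rescaling $\eps$ by a constant finishes this part.

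Next I would verify the two conditions by concentration. For (C1), write $(\bv{S}\bv{U})^T(\bv{S}\bv{U}) = \tfrac1r\sum_{t=1}^r \bv{X}_t$ with $\bv{X}_t = \tfrac{1}{p_{j_t}}(e_{j_t}^T\bv{U})^T(e_{j_t}^T\bv{U})$ i.i.d., positive semidefinite, of mean $\sum_j p_j\cdot\tfrac1{p_j}(e_j^T\bv{U})^T(e_j^T\bv{U}) = \bv{U}^T\bv{U} = \bv{I}_\rho$ and with the deterministic spectral bound $\norm{\bv{X}_t}_2 = \norm{e_{j_t}^T\bv{U}}^2/p_{j_t} \le \rho/\beta$ — this is exactly where the leverage-score lower bound on $p_i$ enters. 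A matrix Chernoff/Bernstein inequality (see, e.g.,~\cite{woodruff2014sketching}, or Rudelson--Vershynin) then yields $\norm{\tfrac1r\sum_t\bv{X}_t - \bv{I}_\rho}_2 \le 1/2$ with probability $\ge 1-\delta/2$ once $r = \Omega(\tfrac{\rho}{\beta}\log\tfrac{\rho}{\delta})$, which is implied by $r = \Theta(\tfrac{d\log d\,\log(1/\delta)}{\eps\beta})$ for $\eps<1$. For (C2), write $(\bv{S}\bv{U})^T\bv{S}\bvb^\perp = \tfrac1r\sum_t\bv{v}_t$ with $\bv{v}_t = \tfrac{\bvb^\perp_{j_t}}{p_{j_t}}(e_{j_t}^T\bv{U})^T \in \IR^\rho$ i.i.d.\ and mean $\bv{U}^T\bvb^\perp = 0$; independence and zero mean give $\Ex\norm{(\bv{S}\bv{U})^T\bv{S}\bvb^\perp}_2^2 = \tfrac1r\Ex\norm{\bv{v}_1}_2^2 = \tfrac1r\sum_j\tfrac{(\bvb^\perp_j)^2\norm{e_j^T\bv{U}}^2}{p_j} \le \tfrac{\rho}{r\beta}\norm{\bvb^\perp}_2^2$, again using the hypothesis on $\bvp$. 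Markov's inequality then gives (C2) with constant probability for $r = \Omega(\rho/(\eps\beta))$, and the $\log(1/\delta)$ factor in $r$ follows by standard amplification (take $O(\log(1/\delta))$ independent runs of $\rowsample$ and output the solution of smallest residual), exactly as in~\cite{sarlos2006improved,drineas2006sampling}. A union bound over (C1) and (C2) completes the argument.

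I expect the main obstacle to be conceptual rather than computational: recognizing that the regression guarantee does \emph{not} require a full $(1\pm\eps)$ subspace embedding of the augmented matrix $[\bvA\ \bvb]$ — whose row leverage scores are not controlled by $\bvp$ — but only the weaker spectral condition (C1) on $\bv{U}$ together with the approximate-matrix-multiplication bound (C2), and then matching the right concentration inequality to each. Once the reduction of the second paragraph is in place, the remaining work — the per-sample spectral bound $\norm{\bv{X}_t}_2 \le \rho/\beta$ and the variance estimate for $(\bv{S}\bv{U})^T\bv{S}\bvb^\perp$, both driven directly by the hypothesis $p_i \ge \beta\,\norm{e_i^T\bv{U}}^2/\rho$ — is routine.
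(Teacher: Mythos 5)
First, note that the paper does not actually prove Lemma~\ref{lem:reg-sample}: it is imported verbatim from~\cite{sarlos2006improved} (see also~\cite{drineas2006sampling,woodruff2014sketching}), so there is no in-paper argument to compare against. Your reconstruction follows the standard route, and its core is correct: the reduction to the orthonormal basis $\bv{U}_A$, the two structural conditions (spectral approximation of $\bv{U}_A^T\bv{U}_A$ and approximate matrix product against the residual $\bvb^\perp$), the normal-equations identity, and the Pythagorean step are all exactly right, and your verification of (C1) via the per-sample bound $\norm{e_{j}^T\bv{U}_A}^2/p_{j}\le \rho/\beta$ plus matrix Chernoff does deliver failure probability $\delta/2$ within the stated $r$.

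The genuine gap is in the probability amplification for (C2). Your variance computation plus Markov yields (C2) only with \emph{constant} success probability, and increasing $r$ by a $\log(1/\delta)$ factor does not fix this: Markov then gives failure probability $O(1/\log(1/\delta))$, not $\delta$, and higher moments of $(\bv{S}\bv{U})^T\bv{S}\bvb^\perp$ are not controlled by the hypothesis on $\bvp$ (a row with tiny leverage score but large residual entry makes $\Ex\norm{\bv{v}_1}^4$ blow up even though $\Ex\norm{\bv{v}_1}^2$ is bounded). The amplification you invoke --- run $\rowsample$ $O(\log(1/\delta))$ times and keep ``the solution of smallest residual'' --- is not a proof of the lemma as stated, for two reasons. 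First, the lemma is a claim about a \emph{single} call to $\rowsample$ with $r=\Theta(d\log d\log(1/\delta)/(\eps\beta))$ rows; best-of-$T$ is a different algorithm. Second, the selection rule is unimplementable as described: the sampled residual $\norm{\tilde{\bvA}\bvx_i-\tilde{\bvb}}$ does not certify the true residual $\norm{\bvA\bvx_i-\bvb}$ (that is precisely what is being proved), and evaluating the true residual requires the full data; one needs an additional independent residual-estimation sketch (e.g., a Johnson--Lindenstrauss sketch of $[\bvA\ \bvb]$) to pick the good run. This is not pedantry about the citation: the cited sampling results themselves are stated with constant success probability, and the paper's use of the lemma in Theorem~\ref{thm:regression} invokes it with $\delta/2^k$ before a union bound over all $2^k$ scenarios, so the high-probability form (and how the scheme realizes it within the sketch) is exactly the point that needs to be nailed down.
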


The value $\norm{\eiT{\bv{U}_{A}}}^2$, i.e the square norm of the
$i$-th row of $\bv{U}_A$, is also called the \emph{leverage score} of
the $i$-th row of $\bv{A}$. One should view the leverage scores as the
``relative importance'' of a row for the $\ell_2$-regression problem
(see~\cite{mahoney2011randomized} for more details). Moreover, using
the fact that columns of $\bv{U}_A$ are orthonormal, we have
$\sum_{j}\norm{e_j^T{\bv{U}_{A}}}^2 = \rho$, where $\rho$ is the rank
of $\bvA$.

We now define our provisioning scheme for the \regQuery query, where
the compression algorithm performs the first phase of sampling
(samples rows from each hypothetical) and the extraction algorithm
performs the second (samples from the recorded tuples).

\cTextbox{Compression algorithm for the \regQuery query.} {Suppose we are given an instance $I =
(\bv{A},\bv{b})$ and $k$ hypotheticals with $h_i(I) = (\bvA_i,
\bv{b}_i)$ ($i \in [k]$).  Let $t =
\Theta({\eps^{-1}kd \log{d}\cdot(k+\log{(1/\delta)})})$, and define for each
$i\in[k]$ a probability distribution $\probPi$ as follows.  If the
$j$-th row of $\bvA$ is the $l$-th row of $\bvA_i$ (they correspond to
the same tuple), let $p_{i,j} =
{\norm{e_l^T{\bv{U}_{A_i}}}^2}/{\rho}$, where $\rho$ is the rank of $\bvA_i$. If $\row{\bvA}{j}$ does not
belong to $\bvA_i$, let $p_{i,j} = 0$. Using the fact that for every
$i \in [k]$, $\bv{U}_{A_i}$ is an orthonormal matrix, $\sum_{j=1}^{n}
p_{i,j} = 1$. \textbf{Record} $t$ independently chosen random
permutations of $[k]$, and for each hypothetical $h_i$, create a
sub-sketch as follows.
\begin{enumerate}
\item Sample $t$ tuples of $h_i(I)$ with replacement, according to
  the probability distribution $\bvp_i$.
\item For each of the sampled tuples, assuming it is the $j$-th tuple of
  $I$, \textbf{record} the tuple along with its sampling rate in each
  hypothetical, i.e., $\set{p_{i',j}}_{i' \in [k]}$.
\end{enumerate}
}

\cTextbox{Extraction algorithm for the \regQuery query.} {Given a scenario $S = \set{ i_1, \dots, i_s}$, we
will recover from the sketch a matrix $\tilde{\bv{A}}_{t \times d}$
and a vector $\tilde{\bvb}_{t \times 1}$. For $l = 1$ to $t$:
\begin{cEnumerate}
\item Pick the $l$-th random permutation recorded in the sketch.  Let $\gamma$ be the first value in this permutation that appears in $S$.
\item If $(\bva, b)$ is the $l$-th tuple sampled by the hypothetical
  $h_\gamma$, which is the $j$-th tuple of $I$, let $q_j = \sum_{i\in
  S}p_{i,j}/{\card{S}}$, using the recorded sampling rates. \label{step:qj}
\item Let $(\tilde{\bv{A}}_{(l)}, \tilde{\bvb}_{(l)}) = ( \bva ,b)
  /\sqrt{t q_j}$.  Return $\tilde{\bv{x}} = \arg{\min}_{\bv{x}}
  \norm{\tilde{\bv{A}}\bv{x} - \tilde{\bv{b}}}$ (using any standard
  method for solving the $\ell_2$-regression problem).
\end{cEnumerate}
}

We call a sketch constructed above a \regressionSketch.  In order to 
show the correctness of this scheme, we need the
following lemma regarding the monotonicity of leverage
scores.

\begin{lemma}[Monotonicity of  Leverage Scores]\label{lem:ls-monotone}
 Let $\bv{A} \in \IR^{n \times d}$ and $\bv{B} \in \IR^{m \times d}$
 be any matrix. Define matrix $\bv{C} \in \IR^{(n+m)\times d}$ by
 \emph{appending} rows of $\bv{B}$ to $\bv{A}$, i.e., $\bv{C}^T =
      [\bv{A}^T, \bv{B}^T]$.  For any $i \in [n]$, if $L_i$ is the
      leverage score of $\row{\bvA}{i}$ and $\hat{L}_i$ is the
      leverage score of $\row{\bv{C}}{i}$, then $L_i \geq \hat{L}_i$.
\end{lemma}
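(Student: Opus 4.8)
The plan is to express the leverage score of row $i$ as a variational (projection) quantity and then argue that appending rows to the matrix can only enlarge the column space, which can only decrease a projection length. Concretely, recall that for a matrix $\bv{M}$, the leverage score of its $i$-th row equals $\bve_i^T \bv{U}_M \bv{U}_M^T \bve_i = \norm{\bv{U}_M^T \bve_i}^2$, which is the squared norm of the orthogonal projection of $\bve_i$ onto the column space of $\bv{M}$ (i.e.\ the row space viewed appropriately; more precisely $\bv{U}_M \bv{U}_M^T$ is the orthogonal projector onto $\mathrm{col}(\bv{M})$, and $L_i = \norm{\bv{U}_M\bv{U}_M^T \bve_i}^2$). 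Equivalently, there is the well-known characterization
\[
L_i \;=\; \max_{\bv{y}\,:\,\bv{M}^T\bv{y}=\bv{M}^T\bve_i}\; \ldots
\]
— rather than chase that, the cleanest route is: $L_i = \norm{\bv{M}^T\bve_i}^2_{(\bv{M}^T\bv{M})^{\dagger}}$, i.e.\ $L_i = (\bve_i^T\bv{M})(\bv{M}^T\bv{M})^{\dagger}(\bv{M}^T\bve_i)$, since $\bv{U}_M\bv{U}_M^T = \bv{M}(\bv{M}^T\bv{M})^{\dagger}\bv{M}^T$.

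The key step is then the following comparison. Write $\bv{C}^T = [\bv{A}^T,\bv{B}^T]$, so $\bv{C}^T\bv{C} = \bv{A}^T\bv{A} + \bv{B}^T\bv{B}$. For $i \in [n]$ the $i$-th row of $\bv{C}$ is $\row{\bvA}{i}$, so I must show
\[
\row{\bvA}{i}\,(\bv{A}^T\bv{A})^{\dagger}\,\row{\bvA}{i}^T \;\;\geq\;\; \row{\bvA}{i}\,(\bv{A}^T\bv{A}+\bv{B}^T\bv{B})^{\dagger}\,\row{\bvA}{i}^T .
\]
The most transparent argument is geometric: $L_i$ (resp.\ $\hat L_i$) is the squared length of the projection of the standard basis vector $\bve_i \in \IR^{n+m}$ (pad $\bve_i \in \IR^n$ with zeros) onto $\mathrm{col}(\bv{C})$, **restricted to the first $n$ coordinates** — but since $\bve_i$ already lives in the first $n$ coordinates this subtlety is mild. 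Actually the cleanest statement: $\hat L_i = \max\{ \langle \bv{C}\bv{z}, \bve_i\rangle^2 / \norm{\bv{C}\bv{z}}^2 : \bv{z}\neq 0\}$ would not be quite right either; instead I will use $\hat L_i = \norm{P_{\mathrm{col}(\bv{C})}\,\bve_i}^2$ where $\bve_i \in \IR^{n+m}$, and $L_i = \norm{P_{\mathrm{col}(\bv{A})}\,\bve_i}^2$ where $\bve_i \in \IR^n$. Because appending the block $\bv{B}$ only adds coordinates/ directions, there is a natural isometric embedding under which $\mathrm{col}(\bv{A}) \subseteq$ (the restriction of) $\mathrm{col}(\bv{C})$; a projection onto a larger subspace has norm at least that onto the smaller one, giving $\hat L_i \geq L_i$ — the \emph{wrong} direction. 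So I must be careful: the correct intuition is instead via the pseudoinverse Löwner-order fact that $\bv{A}^T\bv{A} \preceq \bv{C}^T\bv{C}$ implies, on the common range, $(\bv{C}^T\bv{C})^{\dagger} \preceq (\bv{A}^T\bv{A})^{\dagger}$, whence the quadratic form with $\row{\bvA}{i}^T$ (which lies in $\mathrm{col}(\bv{A}^T) = \mathrm{col}(\bv{A}^T\bv{A})$) decreases. This is the main obstacle: the anti-monotonicity of the pseudoinverse under the Löwner order holds only when the ranges coincide (or with the appropriate range condition), so I need to check that $\row{\bvA}{i}^T \in \mathrm{col}(\bv{A}^T\bv{A}) \subseteq \mathrm{col}(\bv{C}^T\bv{C})$ and invoke the conditional version of $M \preceq N \Rightarrow N^{\dagger}\preceq M^{\dagger}$.

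So the steps, in order, are: (1) rewrite $L_i = \row{\bvA}{i}(\bv{A}^T\bv{A})^{\dagger}\row{\bvA}{i}^T$ and $\hat L_i = \row{\bvA}{i}(\bv{C}^T\bv{C})^{\dagger}\row{\bvA}{i}^T$ using $\bv{U}_M\bv{U}_M^T = \bv{M}(\bv{M}^T\bv{M})^{\dagger}\bv{M}^T$; (2) note $\bv{C}^T\bv{C} = \bv{A}^T\bv{A}+\bv{B}^T\bv{B} \succeq \bv{A}^T\bv{A} \succeq 0$ and that $\mathrm{col}(\bv{A}^T\bv{A}) \subseteq \mathrm{col}(\bv{C}^T\bv{C})$; (3) apply the lemma that for PSD matrices $M \preceq N$ with $\mathrm{col}(M)\subseteq\mathrm{col}(N)$ one has $\bv{v}^T N^{\dagger}\bv{v} \le \bv{v}^T M^{\dagger}\bv{v}$ for every $\bv{v}\in\mathrm{col}(M)$ (this is the conditional anti-monotonicity of the pseudoinverse; it can be proved by simultaneously reducing to the range of $M$ via the decomposition $\IR^d = \mathrm{col}(M)\oplus\mathrm{col}(M)^\perp$ and then using ordinary matrix inversion monotonicity on the invertible compressions); (4) take $\bv{v} = \row{\bvA}{i}^T \in \mathrm{col}(\bv{A}^T) = \mathrm{col}(\bv{A}^T\bv{A})$, yielding $\hat L_i \le L_i$. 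I expect step (3) — pinning down and proving the range-restricted anti-monotonicity of the Moore–Penrose pseudoinverse under the Löwner order — to be the one requiring genuine care; the rest is bookkeeping with standard identities from \cite{horn2012matrix}.
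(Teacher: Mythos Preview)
Your final plan (steps (1)--(4)) is correct, but it takes a genuinely different and heavier route than the paper. The paper invokes the variational characterization (from~\cite{cohen2014uniform})
\[
L_i \;=\; \min_{\bv{x}:\,\bv{A}^T\bv{x}=\row{\bvA}{i}} \norm{\bv{x}}^2,
\]
and then observes that if $\bv{x}^*\in\IR^n$ attains this minimum for $\bv{A}$, the zero-padded vector $\bv{z}^T=[(\bv{x}^*)^T,\bv{0}]\in\IR^{n+m}$ is feasible for $\bv{C}$ (since $\bv{C}^T\bv{z}=\bv{A}^T\bv{x}^*=\row{\bvA}{i}=\row{\bv{C}}{i}$), whence $\hat L_i\le\norm{\bv{z}}^2=\norm{\bv{x}^*}^2=L_i$. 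That is the whole argument---no L\"owner-order or pseudoinverse subtleties. Your approach via $L_i=\row{\bvA}{i}(\bv{A}^T\bv{A})^{\dagger}\row{\bvA}{i}^T$ and range-restricted anti-monotonicity of the Moore--Penrose inverse is valid; the cleanest justification of your step~(3) is regularization: set $M_\eps=M+\eps I\preceq N+\eps I=N_\eps$, use $N_\eps^{-1}\preceq M_\eps^{-1}$, and let $\eps\downarrow 0$ with $\bv{v}\in\mathrm{col}(M)\subseteq\mathrm{col}(N)$ so that both quadratic forms converge to the pseudoinverse versions. (Your sketched block-decomposition proof of (3) also works but needs a Schur-complement step you did not spell out.) Incidentally, the paper's $\min$-characterization explains why your first ``projection onto a larger subspace'' attempt pointed the wrong way: the leverage score is a \emph{minimum} over a feasible set, and appending rows only enlarges that set (via zero-padding), hence can only lower the minimum.
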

Before providing the proof, we note that Lemma~\ref{lem:ls-monotone} essentially claims that adding more rows
to a matrix $\bvA$ can only reduce the importance of any point
originally in $\bvA$.  Note that this is true even when the matrix
$\bv{C}$ is formed by arbitrarily combining rows of $\bv{B}$ and
$\bvA$ (rather than appending at the end).  
\begin{proof}[Proof of Lemma~\ref{lem:ls-monotone}]
We use the following characterization of the leverage scores (see~\cite{cohen2014uniform} for a proof).
 For any matrix $\bvM \in \IR^{n \times
  d}$ and $i \in [n]$, the leverage score of $\row{\bvM}{i}$ is equal 
  \[
  \min_{\bvM^{T} \bx = \bvM_{(i)}} \norm{\bx}^2
  \]
Now, fix an $i \in [n]$, and let $\bvx^* = \arg\min_{\bvA^{T} \bx =
  \bvA_{(i)}} \norm{\bx}^2$. By the above characterization of leverage
scores, $L_{i} = \norm{\bvx^*}^2$.  Denote by $\bv{z}$, the vector in
$\IR^{n+m}$ where $\bv{z}^T = [ {\bvx^*}^T,~\bv{0}]$. Then, 
\[
\bv{C}^T \cdot \bv{z} = \begin{bmatrix} \bvA^T,~\bv{B}^T \end{bmatrix} \cdot \bv{z} = \bvA^T{\bvx^*} = \bvA_{(i)} = \bv{C}_{(i)}
\]
where the last two equalities are by the definitions of $\bvx^*$ and
$\bv{C}$, respectively. Therefore, 
\[
\hat{L}_i = \min_{\bv{C}^{T} \bx = \bv{C}_{(i)}} \norm{\bx}^2 \leq \norm{\bv{z}}^2 = \norm{\bvx^*}^2 = L_{i}
\]
where the inequality holds because $\bv{z}$ satisfies $\bv{C}^{T} \bv{z} = \row{\bv{C}}{i}$.
\end{proof}

\begin{proof}[Proof of Theorem~\ref{thm:regression}]
Fix a scenario $S$ and let $\app{I}{S} = (\hat{\bvA}, \hat{\bvb})$.
It is straightforward to verify that, for any step $l \in [t]$, $q_j$
(in line (\ref{step:qj}) of the extraction algorithm) is the probability
that the $j$-th tuple of $I$ is chosen, if the $j$-th tuple belongs to
$\app{I}{S}$. Hence, assuming $\bvp'$ is the probability distribution
defined by the $q_j$s on rows of the $\app{I}{S}$, the extraction
algorithm implements $\rowsample(\hat{\bvA}, \hat{\bvb}, \bvp',
t)$. 

We will show that $
q_j \ge {\norm{e_l^T {\bv{U}_{\hat{A}}}}^2}/ {k \hat{\rho}} 
$,
where the $l$-th row of $\hat{\bvA}$ is the $j$-th row of $\bvA$, and
$\hat{\rho}$ is the rank of $\hat{A}$. 
Then, by Lemma~\ref{lem:reg-sample} with $\beta$ set to $1/k$, with probability
at least $1 - \frac{\delta}{2^{k}}$, $\norm{\hat{\bvA}\tilde{\bvx} -
  \hat{\bvb}}$ is at most $(1+\eps) \min_{\bvx} \norm{\hat{\bvA}\bv{x}
  - \hat{\bvb}}$; hence, the returned vector $\tilde{\bvx}$ is a
$(1+\eps)$-approximation.  Applying a union bound over all $2^k$
scenarios, with probability at least $(1 - \delta)$, our scheme
$\eps$-provisions the \regQuery query.

We now prove that $q_j \ge {\norm{e_l^T {\bv{U}_{\hat{A}}}}^2}/ {k \hat{\rho}}$. Denote by
$L_{i,j}$ (resp. $L_{S,j}$) the leverage score of the $j$-th tuple of
$I$ in the matrix $\bvA_i$ (resp. $\hat{\bvA}$).  Further, denote by
$\rho_i$ the rank of $\bvA_i$.  Consequently, $p_{i,j} =
L_{i,j}/\rho_i$, and our goal is to show $q_j \ge L_{S,j}/(k \hat{\rho})$.
Pick any $i^* \in S$ where $h_{i^*}(I)$ contains the $j$-th tuple of
$I$, then:
\begin{cEqnarray}
  q_{j} ={\sum_{i \in S}p_{i,j} \over s}
  \geq \frac{p_{i^*,j}}{s}  \geq \frac{L_{i^*,j}}{k \rho_i}
  \geq  \frac{L_{S,j}}{k \hat{\rho}}
\end{cEqnarray}
For the last inequality, since $\bvA_i$ is a sub-matrix of
$\hat{\bvA}$, $\rho_i \leq \hat{\rho}$ and the leverage score decreases
due to the monotonicity (Lemma~\ref{lem:ls-monotone}).

To conclude, the probabilities will be stored with precision $1/n$
(hence stored using $O(\log{n})$ bits each), and the size of the sketch
is straightforward to verify.
\end{proof}

\section{Complex Queries}\label{sec:mix}
We study the provisioning of
queries that combine logical components (relational algebra and Datalog),
with grouping and with the numerical queries that we studied in
Section~\ref{sec:num-qry}.

We start by defining a class of such queries and their semantics
formally. For the purposes of this paper, a \emph{complex query} is a
triple \compose where $Q_L$ is a relational algebra or Datalog query
that outputs some relation with attributes $\bar{A}\bar{B}$ for some
$\bar{B}$, $G_{\bar{A}}$ is a \emph{group-by} operation applied on the
attributes $\bar{A}$, and $Q_{N}$ is a numerical query that takes as
input a relation with attributes $\bar{B}$.  For any input $I$ let
$P=\Pi_{\bar{A}}(Q_L(I))$ be the $\bar{A}$-relation consisting of all
the distinct values of the grouping attributes.  We call the size
of $P$ the \emph{number of groups} of the complex query.
For each tuple $\bar{u}\in P$, we define $\Gamma_{\bar{u}}=\{\bar{v}\:|\:
\bar{u}\bar{v}\in Q_L(I)\}$.  Then, the output of the complex query
\compose is a set of tuples $\{\langle
\bar{u},Q_N(\Gamma_{\bar{u}})\rangle \:|\: \bar{u}\in P\}$.

\subsection{Positive, Non-Recursive Complex Queries}

In the following, we give compact provisioning results for the case where the
logical component is a \emph{positive relational algebra} (i.e., SPJU)
query.  It will be convenient to assume a different, but equivalent,
formalism for these logical queries, namely that of \emph{unions of
conjunctive queries} (UCQs)\footnote{Although the translation of an SPJU
query to a UCQ may incur an exponential size blowup~\cite{AHV}, in
this paper, query (and schema) size are assumed to be
constant. In fact, in practice, SQL queries often present
  with unions already at top level.}. We review quickly the definition
of UCQs.  A \emph{conjunctive query} (CQ) over a relational schema
$\Sigma$ is of the form $\mathit{ans}(\overline{x})\dtlg R_1(\overline{x}_1),\ldots,R_b(\overline{x}_b)$,
where atoms $R_1, \ldots, R_b \in \Sigma$,
and the \emph{size} of a CQ is defined to be the number of atoms in its body
(i.e., $b$).
A union of conjunctive query (UCQ) is a finite union of some CQs
whose heads have the same schema.

In the following theorem, we show that for any complex query, where
the logical component is a positive relational algebra query, 
compact provisioning of the
numerical component implies compact provisioning of the complex query
itself, provided the number of groups is not too large.

\begin{theorem}\label{thm:mix}
For any complex query \compose where $Q_L$ is a \UCQ, if the numerical
component $Q_N$ can be compactly provisioned (resp. compactly
$\eps$-provisioned), and if the number of groups
is bounded by $\poly(k,\log{n})$,
then the query \compose can also be compactly provisioned
(resp. compactly $\eps$-provisioned with the same parameter $\eps$).
\end{theorem}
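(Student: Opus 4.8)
The plan is to reduce provisioning of the complex query $\langle Q_L; G_{\bar{A}}; Q_N\rangle$ on $I$ with hypotheticals $H=\{h_1,\ldots,h_k\}$ to provisioning $Q_N$ alone, once per group, over the derived instance $Q_L(I)$ equipped with a polynomially larger family of \emph{derived hypotheticals}. Write $Q_L = C_1\cup\cdots\cup C_m$ as a UCQ and let $b$ be the maximum number of atoms in any $C_j$; since the query is of constant size, $b=O(1)$ and $|Q_L(I)|\le n^{O(b)}=\poly(n)$, so $\log|Q_L(I)|=O(\log n)$.

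The conceptual core is a ``small-witness'' decomposition that combines monotonicity of UCQs with the bounded query size: for every scenario $S\subseteq[k]$,
\[
Q_L(\app{I}{S}) \;=\; \bigcup_{\substack{T\subseteq S\\ |T|\le b}} Q_L(\app{I}{T}).
\]
The inclusion $\supseteq$ is immediate from $\app{I}{T}\subseteq\app{I}{S}$ and monotonicity. For $\subseteq$, any tuple of $Q_L(\app{I}{S})$ is produced by a homomorphism from the body of some $C_j$ into $\app{I}{S}$ whose image consists of at most $b$ input tuples; choosing for each image tuple one hypothetical in $S$ that retains it yields a set $T\subseteq S$ with $|T|\le b$ for which the same homomorphism already maps into $\app{I}{T}$. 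Now define, for each $T\subseteq[k]$ with $|T|\le b$, the derived hypothetical $g_T:=Q_L(\app{I}{T})\subseteq Q_L(I)$; there are only $O(k^b)=\poly(k)$ of them and each is computable in polynomial time since $Q_L$ has constant size. The displayed identity says exactly that $\app{Q_L(I)}{S'}=Q_L(\app{I}{S})$, where $S':=\{T : T\subseteq S,\ |T|\le b\}$ is the scenario over the derived hypotheticals induced by $S$.

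It remains to insert grouping. Let $P=\Pi_{\bar{A}}(Q_L(I))$, of size $p\le\poly(k,\log n)$ by assumption, and for $\bar{u}\in P$ restrict the derived instance and the derived hypotheticals to the $\bar{u}$-slice: $\Gamma_{\bar{u}}=\{\bar{v}:\bar{u}\bar{v}\in Q_L(I)\}$ and $g_T^{\bar{u}}=\{\bar{v}:\bar{u}\bar{v}\in g_T\}$, so that $\Gamma_{\bar{u}}^S:=\{\bar{v}:\bar{u}\bar{v}\in Q_L(\app{I}{S})\}=\app{\Gamma_{\bar{u}}}{S'}$. The answer of the complex query on $\app{I}{S}$ is then $\{\langle\bar{u},Q_N(\app{\Gamma_{\bar{u}}}{S'})\rangle : \bar{u}\in P,\ \app{\Gamma_{\bar{u}}}{S'}\ne\emptyset\}$. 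Accordingly the sketch consists of: (i) the list $P$ ($O(p\log n)$ bits); (ii) for every $\bar{u}\in P$ and every $T$ with $|T|\le b$, one bit recording whether $g_T^{\bar{u}}\ne\emptyset$, equivalently $\bar{u}\in\Pi_{\bar{A}}(Q_L(\app{I}{T}))$, so the extractor can decide which groups survive $S$ ($O(p\,k^b)$ bits); and (iii) for every $\bar{u}\in P$, a compact (resp. compact $\eps$-)provisioning sketch for $Q_N$ on the instance $\Gamma_{\bar{u}}$ with the $\poly(k)$ hypotheticals $\{g_T^{\bar{u}}\}$, obtained by invoking the assumed provisioning scheme for $Q_N$. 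Each such $Q_N$-sketch has size $\poly(\poly(k),\log\poly(n),1/\eps)=\poly(k,\log n,1/\eps)$, and summing the $p=\poly(k,\log n)$ of them with the $O(p\log n+p\,k^b)$ extra bits keeps the total at $\poly(k,\log n,1/\eps)$. On a scenario $S$ the extractor forms $S'$, reads off the surviving groups from the recorded bits, and for each surviving $\bar{u}$ runs the $Q_N$-extractor on $S'$; correctness is immediate from the displayed identity and the guarantee of the $Q_N$-scheme (for a randomized $(\eps,\delta)$-scheme one amplifies each $Q_N$-sketch's failure probability to $\delta/(p2^k)$ and union-bounds, at the cost of only an $O(k+\log p)$ factor).

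The only step requiring real care is the small-witness decomposition and the resulting $O(k^b)$ bound on the number of derived hypotheticals: this is precisely where constant query size is essential, since otherwise the number of relevant witness sets (and hence derived hypotheticals) could be super-polynomial and the sketch would fail to be compact. Everything after that — the slicing by group and the bookkeeping for which groups appear in a given scenario — is routine, using only $p\le\poly(k,\log n)$ and $|Q_L(I)|\le\poly(n)$.
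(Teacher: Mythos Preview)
Your proposal is correct and follows essentially the same approach as the paper: both hinge on the ``small-witness'' decomposition $Q_L(\app{I}{S})=\bigcup_{T\subseteq S,\,|T|\le b}Q_L(\app{I}{T})$, which yields $O(k^b)$ derived hypotheticals over $Q_L(I)$, and then invoke the assumed $Q_N$-scheme once per group. You are in fact slightly more careful than the paper on two points it leaves implicit---recording, for each group and each small $T$, whether the group survives (so the extractor can output the correct set of group labels under a scenario), and amplifying the per-group failure probability to allow a union bound over all groups and scenarios.
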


\begin{proof}
Suppose $Q_N$ can be compactly provisioned (the following proof also
works when $Q_N$ can be compactly $\eps$-provisioned).  Let $b$ be the
maximum size of the conjunctive queries in $Q_L$.  Given an input
instance $I$ and a set $H$ of $k$ hypotheticals, we define a new
instance $\hat{I} = Q_L(I)$ and a set $\hat{H}$ of $O(k^b)$ new
hypotheticals as follows. For each subset $S \subseteq [k]$ of size at
most $b$ (i.e., $\card{S} \leq b$), define a hypothetical
$\hat{h}_S(\hat{I}) = Q_L(\app{I}{S})$ (though $S$ is not a number, we
still use it as an index to refer to the hypothetical $\hat{h}_S$).

By our definition of the semantics of complex queries, the group-by
operation partitions $\hat{I}$ and each $\hat{h}_S$ into
$p=|\Pi_{\bar{A}}(\hat{I})|$ sets. We treat each group individually, and
create a sketch for each of them. To simplify the notation, we still
use $\hat{I}$ and $\hat{H}$ to denote respectively the portion of the
new instance, and the portion of each new hypothetical that correspond
to, without loss of generality, the first group.  In the following, we
show that a compact provisioning scheme for $Q_N$ with input $\hat{I}$
and $\hat{H}$ can be adapted to compactly provision \compose for the
first group. Since the number of groups $p$ is assumed to be
$\poly(k,\log{n})$, the overall sketch size is still
$\poly(k,\log{n})$, hence achieving compact
provisioning for the complex query.

Create a sketch for $Q_N$ with input $\hat{I}$ and $\hat{H}$. For any
scenario $S \in [k]$ (over $H$), we can answer the numerical query
$Q_N$ using the scenario $\hat{S}$ (over $\hat{H}$) where $\hat{S} =
\set{S' \mid S' \subseteq S ~\&~ \card{S'} \le b}$.  To see
this, we only need to show that the input to $Q_N$ remains the same,
i.e., $Q_L(\app{I}{S})$ is equal to $\app{\hat{I}}{{\hat{S}}}$.  Each
tuple $t$ in $Q_L(\app{I}{S})$ can be derived using (at most) $b$
hypotheticals. Since any subset of $S$ with at most $b$ hypotheticals
belongs to $\hat{S}$, the tuple $t$ belongs to
$\app{\hat{I}}{\hat{S}}$. On the other hand, each tuple $t'$ in
$\app{\hat{I}}{\hat{S}}$ belongs to some $\hat{h}_{S'}$ where $S' \in
S$, and hence, by definition of $\hat{h}_{S'}$, the tuple $t$ is also
in $Q_L(\app{I}{S})$. Hence, $Q_L(\app{I}{S}) =
\app{\hat{I}}{{\hat{S}}}$.

Consequently, any compact provisioning scheme for $Q_N$ can be adapted
to a compact provisioning scheme for the query \compose.
\end{proof}

Theorem~\ref{thm:mix} further motivates our results in
Section~\ref{sec:num-qry} for numerical queries as they can be
extended to these quite practical queries. Additionally, as an
immediate corollary of the proof of Theorem~\ref{thm:mix}, we obtain that any
boolean \UCQ (i.e., any UCQ that outputs a boolean answer rather than
a set of tuples) can be compactly provisioned.

\begin{corollary}\label{cor:cq-upper}
Any boolean UCQ can be compactly provisioned using sketches of size
$O(k^{b})$ bits, where $b$ is the maximum size of each CQ.
\end{corollary}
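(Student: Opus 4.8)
The plan is to read off the corollary from the construction already carried out in the proof of Theorem~\ref{thm:mix}, specialized to an empty set of grouping attributes and a trivial numerical component. Write $Q_L$ for the logical component of the boolean \UCQ and let $b$ be the maximum number of atoms among its constituent conjunctive queries. The first observation I would record is that a boolean \UCQ returns \textbf{true} on an instance $J$ exactly when $Q_L(J)\neq\emptyset$, so on a scenario $S$ the answer we must produce is the truth value of the statement $Q_L(\app{I}{S})\neq\emptyset$. The second, and only substantive, ingredient is the locality identity from the proof of Theorem~\ref{thm:mix}: any tuple in the output of a CQ with $\le b$ atoms is witnessed by $\le b$ input tuples and hence by $\le b$ of the turned-on hypotheticals, whence
\[
Q_L(\app{I}{S}) \;=\; \bigcup_{\substack{S'\subseteq S\\ \card{S'}\le b}} Q_L(\app{I}{S'}),
\]
so that $Q_L(\app{I}{S})\neq\emptyset$ iff $Q_L(\app{I}{S'})\neq\emptyset$ for at least one $S'\subseteq S$ with $\card{S'}\le b$.

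Given this, I would define the compression algorithm to iterate over all subsets $S'\subseteq[k]$ with $\card{S'}\le b$ — of which there are $\sum_{i=0}^{b}\binom{k}{i}=O(k^{b})$ — evaluate $Q_L$ on $\app{I}{S'}$ (possible since the compression algorithm has access to $I$ and to the sub-instances $h_i(I)$), and store a single bit $\beta_{S'}$ equal to $1$ iff $Q_L(\app{I}{S'})\neq\emptyset$. This table of flags is the sketch, and it occupies $O(k^{b})$ bits. The extraction algorithm, given a scenario $S$, outputs \textbf{true} iff $\beta_{S'}=1$ for some $S'\subseteq S$ with $\card{S'}\le b$; correctness is immediate from the displayed identity, and the running time is $\poly(k^b)$, matching the efficiency requirements set in Section~\ref{sec:prob}.

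I do not expect a genuine obstacle here: the work is entirely contained in the locality identity, which is precisely what the proof of Theorem~\ref{thm:mix} already establishes. The only point worth flagging is the contrast with Theorem~\ref{thm:mix} itself — for a boolean \UCQ there is a single group and the numerical component is just the emptiness test, so neither a numerical sub-sketch nor any bound on the number of groups is needed, and the whole sketch collapses to the $O(k^{b})$-bit table of emptiness flags described above.
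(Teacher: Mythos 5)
Your proposal is correct and follows exactly the route the paper intends: the corollary is stated as an immediate consequence of the proof of Theorem~\ref{thm:mix}, and your specialization — one emptiness bit $\beta_{S'}$ per subset $S'$ of size at most $b$, justified by the locality identity $Q_L(\app{I}{S})=\bigcup_{S'\subseteq S,\,\card{S'}\le b}Q_L(\app{I}{S'})$ that the theorem's proof establishes — is precisely that consequence spelled out. (The paper's subsequent remark gives an alternative derivation via boolean provenance, but that is presented as a different technique, not the proof of the corollary.)
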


\begin{remark}
Deutch \etal~\cite{deutch2013caravan}
introduced query provisioning from a practical perspective and
proposed \emph{boolean
  provenance}~\cite{ImielinskiL84,GreenKT07,Green11,2011Suciu} as a
way of building sketches. This technique can also be used for
compactly provisioning boolean UCQs.
\end{remark}
\begin{proof}[Proof Sketch]
Given a query $Q$, an instance $I$ and a set $H$
of hypotheticals we compute a small sketch in the form of a boolean
provenance expression.

Suppose that each tuple of an instance $I$ is annotated with a
distinct provenance token. The provenance annotation of the answer to
a UCQ is a monotone DNF formula $\Delta$ whose variables are these
tokens. Crucially, each term of $\Delta$ has fewer than $b$ tokens
where $b$ is the size of the largest CQ in $Q$.  Associate a boolean
variable $x_i$ with each hypothetical $h_i\in H,~i \in
[k]$. Substitute in $\Delta$ each token annotating a tuple $t$ with
the disjunction of the $x_i$'s such that $t\in h_i(I)$ and with false
otherwise. The result $\Delta'$ is a DNF on the variables
$x_1,\ldots,x_k$ such that each of its terms has at most $b$
variables; hence the size of $\Delta'$ is $O(k^b)$. Now $\Delta'$ can
be used as a sketch if the extraction algorithm sets to true the
variables corresponding to the hypotheticals in the scenario and to
false the other variables.
\end{proof}

\begin{remark}
A similar approach based on rewriting boolean provenance annotations,
which are now general DNFs, can be used to provision \ucqs\ under
disjoint hypotheticals. The disjointness assumption insures that
negation is applied only to single variables and the resulting DNF
has size $O((2k+1)^b)=O(k^b)$.
\end{remark}

We further point out that the exponential dependence of the sketch
size on the query size (implicit) in Theorem~\ref{thm:mix} and
Corollary~\ref{cor:cq-upper} cannot be avoided even for CQs.

\begin{theorem}\label{thm:cq-lower}
 There exists a boolean conjunctive query $Q$ of size $b$ such that provisioning $Q$ requires sketches of size $\min(\Omega(k^{b-1}),\Omega(n))$ bits.
\end{theorem}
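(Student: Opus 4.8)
The plan is to follow the template of the paper's lower bounds (e.g.\ Theorem~\ref{thm:q-lower}): exhibit, for each fixed $b\ge 2$, one boolean CQ $Q$ of size $b$ together with a family of instances and $k$ hypotheticals into which an arbitrary string $\bv v\in\{0,1\}^N$ with $N=\Theta(k^{b-1})$ is encoded, in such a way that \emph{each} bit $v_m$ is recoverable from the provisioned sketch by running extraction on one suitably chosen scenario. A standard incompressibility argument then forces the sketch to have $\ge N=\Omega(k^{b-1})$ bits, and scaling the construction down when $n$ is small gives the $\Omega(n)$ term. (The case $b=1$ is vacuous since $\Omega(k^{0})=\Omega(1)$.)

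\textbf{The construction.} Use the schema with one $(b-1)$-ary relation $R$ and $b-1$ unary relations $P_1,\dots,P_{b-1}$, and take the connected boolean query
\[
Q \;=\; \mathit{ans}()\dtlg R(x_1,\ldots,x_{b-1}),\,P_1(x_1),\ldots,P_{b-1}(x_{b-1}),
\]
which has exactly $b$ atoms. Set $M=\lfloor (k-1)/(b-1)\rfloor=\Theta(k)$ and $N=M^{\,b-1}=\Theta(k^{b-1})$, and fix a bijection between $[N]$ and $[M]^{b-1}$. For $\bv v\in\{0,1\}^N$ let the instance $I_{\bv v}$ contain all of $P_\ell=\{(x):x\in[M]\}$ for each $\ell$, and let $R$ contain exactly the tuple indexed by $m$ for each $m$ with $v_m=1$. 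Define the hypotheticals: for each $\ell\in[b-1]$ and each $x\in[M]$ a hypothetical $h^{(\ell)}_x$ that retains only the single tuple $P_\ell(x)$; one hypothetical $h_R$ that retains all tuples of $R$; and dummies retaining nothing to reach exactly $k$ hypotheticals (possible since $(b-1)M+1\le k$).

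\textbf{Key step.} For $m\leftrightarrow(j_1,\ldots,j_{b-1})$ consider the scenario $S_m=\{h^{(1)}_{j_1},\ldots,h^{(b-1)}_{j_{b-1}},h_R\}$. In $\app{I_{\bv v}}{S_m}$ the only turned-on hypothetical contributing a $P_\ell$-tuple is $h^{(\ell)}_{j_\ell}$, so $P_\ell=\{j_\ell\}$ under the union semantics, while $R$ equals the set of all present tuples. Hence any homomorphism is forced to map $x_\ell\mapsto j_\ell$, and such a homomorphism exists iff $(j_1,\ldots,j_{b-1})\in R$, i.e.\ iff $v_m=1$; thus $Q(\app{I_{\bv v}}{S_m})$ equals the bit $v_m$. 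I expect this verification to be the crux: the real design work is (i) keeping $Q$ \emph{connected} so the hardness genuinely comes from the join (a disconnected CQ of size $b$ is compressible in $O(k)$ bits), (ii) using $b-1$ \emph{distinct} unary relations so that each $P_\ell$ collapses to a single value under $S_m$ (reusing one relation would leave the coordinate ambiguous), and (iii) tuning $M$ against $b$ and $k$ so the bound comes out to exactly $\min(\Omega(k^{b-1}),\Omega(n))$ with only $k$ hypotheticals.

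\textbf{Finishing.} Since the compression algorithm maps $I_{\bv v}$ to a sketch $\Gamma_{\bv v}$ from which extraction on the scenarios $S_1,\ldots,S_N$ recovers all of $v_1,\ldots,v_N$, the map $\bv v\mapsto\Gamma_{\bv v}$ is injective on $\{0,1\}^N$, so some sketch has size $\ge N=\Omega(k^{b-1})$. If $n$ is below the $\Theta(k^{b-1})$ needed to host the full $R$, encode only $\Theta(n)$ of the tuples of $R$ (padding the rest), yielding the $\Omega(n)$ term; together this gives $\min(\Omega(k^{b-1}),\Omega(n))$. For randomized provisioning schemes (and for schemes allowed $o(n)$ access to the input during extraction, in the spirit of Remark~\ref{rem:examine}), replace the injectivity argument by Yao's minimax principle together with Fano's inequality exactly as in the proof of Theorem~\ref{thm:full-cover}, taking $\bv v$ uniform on $\{0,1\}^N$ and the queried scenario $S_m$ with $m$ uniform on $[N]$.
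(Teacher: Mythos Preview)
Your proposal is correct and follows essentially the same encoding-and-extraction idea as the paper: the paper's query is $\mathit{ans}()\dtlg A(x_1),\ldots,A(x_{b-1}),B(x_1,\ldots,x_{b-1})$ with a \emph{single} unary relation $A$, encoding $N=\binom{k-1}{b-1}$ bits (indexed by $(b{-}1)$-element subsets of $[k{-}1]$) into the $B$-relation and using hypotheticals $h_i(I)=\{A(i)\}$ for $i\in[k{-}1]$ together with $h_k(I)=B$, whereas you split into $b{-}1$ distinct unary relations and index by tuples in $[M]^{b-1}$. The two variants are equivalent up to the constants hidden in $\Theta(k^{b-1})$; your worry in point~(ii) about coordinate ambiguity when reusing one unary relation is not an obstacle, since indexing by \emph{subsets} rather than ordered tuples resolves it cleanly.
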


\begin{proof}[Proof of Theorem~\ref{thm:cq-lower}] 

Consider the following boolean conjunctive query $\qexp$ defined over
a schema with a unary relation $A$ and a $(b-1)$-ary relation $B$.
\begin{alignat*}{2}
	\qexp \equiv ans() \dtlg A(x_1), A(x_2), \ldots, A(x_{b-1}), \\
	B(x_1,x_2,\ldots,x_{b-1})
\end{alignat*}
We show how to encode a bit-string of length $N:= {k-1 \choose b-1}$ into a database $I$ with $n = \Theta(N)$ tuples
and a set of $k$ hypotheticals such that given provisioned sketch of $\qexp$, one can recover any bit of this string with 
constant probability. Standard information-theoretic arguments then imply that the sketch size must have $\Omega(N) = \Omega(k^{b-1}) = \Omega(n)$ bits.

Let $(S_1,\ldots,S_N)$ be a list of all
subsets of $[k-1]$ of size $b-1$.  For any vector $\bvv \in \set{0,1}^N$, define the instance $I_{\bvv} = \set{A(x)}_{x \in [k-1]} \cup
\set{B(S_y)}_{v_y = 1}$ (this is slightly abusing the notation:
$B(S_y) = B(x_1, \ldots, x_{b-1})$ where $\set{x_1, \ldots, x_{b-1}} =
S_y$), and a set $\set{h_i}_{i\in[k]}$ of $k$ hypotheticals, where for
any $i \in [k-1]$, $h_i(I) = \set{A(i)}$ and $h_k(I) =
\set{B(S_y)}_{v_y = 1}$. To compute the $i$-th entry of $\bvv$, we can extract the answer to the scenario $S_{i} \cup \set{k}$ from the sketch and output $1$ iff the answer of the query is true.

To see the correctness, $v_{i} = 1$ iff $B(S_{i}) \in h_k(I)$ iff
$\qexp(\app{I}{S_{i} \cup \set{k}})$ is true.
\end{proof}

Note that one can extend this lower bound, by using an approach similar to 
Theorem~\ref{thm:full-cover}, to  
provisioning schemes that are allowed a limited access to the original database after being given the scenario (see Section~\ref{sec:full-cover} for more details). 
We omit the details of this proof.

\subsection{Adding Negation, Recursion, or HAVING} 

It is natural to ask $(a)$ if Theorem~\ref{thm:mix} still holds when adding \emph{negation} or \emph{recursion} to the query $Q_L$
(i.e. \emph{\UCQ with negation} and \emph{recursive Datalog},
respectively), and $(b)$ whether or not it is possible to provision
queries in which logical operations are done \emph{after} numerical
ones. A typical example of a query in part $(b)$ is a selection on
aggregate values specified by a HAVING clause.
Unfortunately, the answer to both questions is
negative. 

We first show that the answer to the question $(a)$ is negative, i.e., there exists a
\emph{boolean conjunctive query with negation} $(\cq)$ and a recursive
Datalog query which require sketches of exponential size (in $k$) for
any provisioning scheme. Formally,

\begin{theorem}\label{thm:ucq-dl-lower}
  Exact provisioning of $(i)$ boolean CQ with negation or $(ii)$ recursive Datalog
  (even without negation) queries requires sketches of size
  $\min(2^{\Omega(k)},\Omega(n))$.
\end{theorem}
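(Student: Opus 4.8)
The plan is to reduce from the \fullcover problem (Theorem~\ref{thm:full-cover}) in both cases, following the same template already used for the \countQuery and \sumQuery lower bounds: encode the set system $\mathcal{S}=\{S_1,\dots,S_k\}$ into a database instance and $k$ hypotheticals, so that the boolean answer of the query under scenario $S=\{i_1,\dots,i_s\}$ tells us exactly whether $S_{i_1}\cup\cdots\cup S_{i_s}=[n]$. Since a provisioning scheme for the query then solves \fullcover, the $\min(2^{\Omega(k)},\Omega(n))$ bound transfers.

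\textbf{Part $(i)$: CQ with negation.} The natural instance uses a unary relation $A$ with $I=\{A(x)\}_{x\in[n]}$, plus one extra ``universe'' relation. Specifically, take a schema with unary $A$, unary $U$, and set $I=\{A(x)\}_{x\in[n]}\cup\{U(x)\}_{x\in[n]}$. Define hypotheticals $h_i(I)=\{A(x)\}_{x\in S_i}\cup\{U(x)\}_{x\in[n]}$, so each hypothetical retains all of $U$ but only the $A$-tuples indexed by $S_i$. Under scenario $S$, $\app{I}{S}$ contains $U(x)$ for all $x$ and $A(x)$ exactly for $x\in\bigcup_{i\in S}S_i$. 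Then the boolean CQ with negation
\[
\qnsub \equiv ans() \dtlg U(x),\, \neg A(x)
\]
is true iff there exists some $x\in[n]$ with $x\notin\bigcup_{i\in S}S_i$, i.e.\ iff the union is \emph{not} $[n]$. Thus the query distinguishes the ``cover'' from ``non-cover'' case, and any provisioning scheme for it yields a compressing scheme for \fullcover; the lower bound follows. (One should double-check the safety/semantics conventions for negation so that $\qnsub$ is a legal CQ$^\neg$ --- writing it with an explicit guard $U(x)$ makes it range-restricted.)

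\textbf{Part $(ii)$: recursive Datalog without negation.} Here the idea is to use recursion to test reachability/coverage rather than complementation. One clean approach: encode the universe as a path and let each element ``present'' in the scenario act as an edge that can be bypassed, so that the recursive query detects a ``gap''. Concretely, take a schema with a binary relation $E$ and constants/relations marking a source and a sink, and build an instance on vertex set $\{0,1,\dots,n\}$ where element $x\in[n]$ corresponds to the edge $(x-1,x)$; include edge $(x-1,x)$ in $h_i(I)$ iff $x\in S_i$, and also include in every hypothetical the ``long jump'' structure so that connectivity from $0$ to $n$ in $\app{I}{S}$ holds iff every edge $(x-1,x)$ is present, i.e.\ iff $\bigcup_{i\in S}S_i=[n]$. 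The transitive-closure Datalog program
\begin{align*}
\mathit{reach}(x,x) &\dtlg \\
\mathit{reach}(x,y) &\dtlg \mathit{reach}(x,z),\, E(z,y) \\
ans() &\dtlg \mathit{reach}(0,n)
\end{align*}
is then true iff the union covers $[n]$. (A subtlety: the union operation on hypotheticals only \emph{adds} tuples, so to make ``covering'' correspond to reachability one must be slightly careful that extra scenarios do not accidentally create a bypass; using the single-edge encoding above, every scenario's edge set is a subset of $\{(x-1,x):x\in[n]\}$, so reachability $0\to n$ holds iff all $n$ edges are present, which is exactly coverage. If one wants $\Omega(n)$ from a universe of size $\binom{k}{k/2}$, one should instead reindex so the ``edges'' are the elements of the hard distribution.) Again a provisioning scheme for this Datalog query solves \fullcover, giving the claimed bound.

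The routine parts --- verifying the reductions preserve the promise structure of the hard distribution $\dist$ and that the instance size is $\Theta(n)$ --- are straightforward. The main thing to get right is the \emph{monotonicity mismatch}: \fullcover asks whether a \emph{union} equals $[n]$, and both negation and reachability are naturally ``anti-monotone'' or ``threshold'' in the wrong direction, so the encoding must be arranged (as above, $A$-tuples for present elements in part $(i)$, present edges for present elements in part $(ii)$) so that ``more hypotheticals on'' $\leftrightarrow$ ``closer to covered'' $\leftrightarrow$ ``query flips to true (resp.\ stays consistent)''. Once the encoding is set up so that the query's boolean output is literally the indicator of $\bigcup_{i\in S}S_i=[n]$, the rest is immediate from Theorem~\ref{thm:full-cover}.
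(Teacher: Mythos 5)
Your proposal is correct and follows essentially the same route as the paper: both parts reduce from \fullcover, with part $(i)$ using a range-restricted CQ$^{\neg}$ that asks for a universe element missing from the union (the paper realizes the guard relation as an extra $(k+1)$-th hypothetical rather than placing it in every hypothetical, which is immaterial), and part $(ii)$ using exactly the same path/st-connectivity encoding where element $x$ is the edge $(x-1,x)$. The stray mention of a ``long jump'' structure in part $(ii)$ is unnecessary --- as you yourself note, the plain path suffices --- but the argument as finally stated matches the paper's.
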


Recall that a $\cq$ is a rule of the form
\begin{eqnarray*}
\mathit{ans}() \dtlg L_1,\ldots L_b
\end{eqnarray*}
where the $L_i$'s are positive or negative literals (atoms or negated
atoms) that is subject to \emph{range-restriction}: every variable
occurring in the rule appears in at least one positive literal.  We
define the size of such a query to be the number of literals.

Define the following boolean $\cq$ over a schema with two unary relation symbols, named $A$ and $B$:
\begin{eqnarray*}
 \qnsub() \dtlg A(x),\neg B(x)
\end{eqnarray*}
This query returns true on $I$ iff there exists some $x$ where $A(x)
\in I$ and $B(x) \notin I$. Intuitively, if we view $A$, $B$ as
subsets of the active domain of $I$, it is querying whether or not
``$B$ is a subset of $A$''.  We use a reduction from the $\fullcover$
problem to prove the lower bound for $\qnsub$.

\begin{proof}[Proof of Theorem~\ref{thm:ucq-dl-lower}, Part (I)]
  Suppose we are given a collection $\set{S_1,\ldots,S_k}$ of subsets of $[n]$ and we want to solve the $\fullcover$ problem using a provisioning scheme for $\qnsub$.
  Create the following instance $I$ for the schema $\Sigma = \set{A, B}$, where for any $x \in [n]$, $A(x), B(x) \in I$. Define the
  set of of hypotheticals $H = \set{h_1,h_2,\ldots,h_{k+1}}$, where for any $i \in [k]$, $h_i(I)= \set{B(x)~|~x \in S_{i}}$ and $h_{k+1}(I) = \set{A(x)~|~x \in [n]}$.

  For any set $\hat{S} = \set{i_1,\ldots,i_s} \subseteq [k]$, under the scenario
  $S = \hat{S} \cup \set{k+1}$, $\qnsub(\app{I}{S})$ is true iff there exists $x \in [n]$ s.t. $B(x) \notin \app{I}{S}$ which is equivalent to $[n]\not\subseteq S_{i_1} \union \ldots \union S_{i_s}$.
  Therefore any provisioning scheme of $\qnsub$ solves the $  \fullcover$ problem, and it follows from Theorem~\ref{thm:full-cover} that the sketch of such scheme must have size $\min(2^{\Omega(k)}, \Omega(n))$ bits.
\end{proof}

While adding negation extends UCQs in one direction, adding recursion
to UCQ also results in another direction. This leads to
recursive Datalog (without negation). Consider the \stconQuery query, which
captures whether there is a path from the vertex $s$ to the vertex
$t$:
\begin{cEqnarray}
\mathit{ans}() & \dtlg  & T(\mathtt{t}) \\
     T(y)    & \dtlg  & E(x,y), T(x)\\
     T(\mathtt{s})
\end{cEqnarray}
To prove a lower bound for the \stconQuery query, we again use a
reduction from the $\fullcover$ problem, where in our construction,
the only path from $s$ to $t$ has $n$ edges, and hence only when all
the $n$ edges are presented in a scenario $s$ will be connected to
$t$.

 \begin{proof}[Proof of Theorem~\ref{thm:ucq-dl-lower}, Part (II)]
  Suppose we are given a collection $\set{S_1,\ldots,S_k}$ of subsets of $[n]$ and we want to
  solve the $\fullcover$ problem using a provisioning scheme for \stconQuery. Consider a graph $G(V,E)$ with vertex set
  $V = \set{(s = )v_0, v_1, v_2, \dots, v_n( = t)}$, and edges
  $E(v_{j-1}, v_j)$, for all $j \in [n]$. The edge set $E$ of the graph $G$ is the input $I$ to the provisioning scheme. Define the hypotheticals $H
  = \set{h_1, h_2, \dots, h_k}$ where $h_i(I) = \set{E(v_{j-1},
    v_j)}_{j \in S_i}$, for all $i \in [k]$.

  For any scenario $S =
  \set{i_1,\ldots,i_s} \subseteq [k]$, $T(t)$ is true (i.e., $s$ is
  connected to $t$) in $\app{I}{S}$ iff all $n$ edges are in
  $\app{I}{S}$, which is equivalent to $S_{i_1} \union \ldots \union
  S_{i_s} = [n]$.  Therefore, any provisioning scheme for the
  \stconQuery query solves the $\fullcover$ problem, and it follows
  from Theorem~\ref{thm:full-cover} that the sketch of such scheme must
  have size $\min(2^{\Omega(k)}, \Omega(n))$ bits.
\end{proof}

Showing a negative answer to the question $(b)$ in the beginning of this section is very easy. As we already showed in Section~\ref{sec:num-qry} (see, e.g., Theorem~\ref{thm:csa-lower}), there are numerical queries that do not
admit compact provisioning for \emph{exact} answer. One can simply verify that each of those queries can act as a counter example for question $(b)$ by considering HAVING clauses that test the equality of the answer to the numerical part against an exact answer (e.g. testing whether the answer to \countQuery is $n$ or not).


\section{Comparison With a Distributed Computation Model}\label{sec:dis}

Query provisioning bears some resemblance to the
 following distributed computation model: $k$ sites want to jointly
 compute a function, where each site holds only a portion of the
 input.  The function is computed by a \emph{coordinator}, who
 receives/sends data from/to each site. The goal is to design
 protocols with small amount of communication between the sites and
 the coordinator
 (see~\cite{cormode2010optimal,woodruff2012tight},
and references therein).
In what follows, we highlight some key similarities and
 differences between this distributed computation
model and our query provisioning framework.  

In principle, any protocol where data are only sent from the sites to
the coordinator (i.e., \emph{one-way} communication), can be adapted
into a provisioning scheme with a sketch of size proportional to the
size of the transcript of the protocol.  To see this, view each
hypothetical as a site and record the transcripts as the sketch.
Given a scenario $S$, the extraction algorithm acts as the coordinator
and computes the function from the transcripts of the hypotheticals in
$S$.  The protocol for counting the number of distinct elements in the
distributed model introduced in~\cite{cormode2011algorithms} is an
example (which in fact also uses the streaming algorithm
from~\cite{bar2002counting} as we do in Lemma~\ref{lem:stream-count}).

However, protocols in the distributed model usually involve back and
forth (\emph{two-way}) communication between the sites and the
coordinator (e.g., the algorithms
of~\cite{yi2013optimal,CormodeMuthukrishnanZhuang06} for estimating
quantiles in the distributed model).  In general, the power of the
distributed computation model with two-way communication
is \emph{incomparable} to the query provisioning framework.
Specifically, for $k$ sites/hypotheticals (even when the input is
partitioned, i.e., no overlaps), there are problems where a protocol with $\poly(k)$-bit
transcripts exists but provisioning requires $2^{\Omega(k)}$-bit
sketches. Conversely, there are problems where provisioning can be
done using $\poly(k)$-bit sketches but any distributed protocol
requires $2^{\Omega(k)}$-bit transcripts. We make these observations precise
below.

Another source of difference between our techniques and the ones used in the distributed computation model is 
the typical assumption in the latter that the input is \emph{partitioned} among the sites in the distributed computation model (i.e. no overlaps; see, for instance, 
\cite{yi2013optimal,cormode2010optimal,woodruff2012tight}). This assumption is in contrast to the hypotheticals with unrestricted overlap that we handle in our query provisioning framework. A notable exception to the no-overlaps 
assumption is the study of quantiles (along with various other aggregates) by~\cite{CormodeMuthukrishnanZhuang06}. 
However, as stated by the authors, they benefit from the coordinator sharing a summary of the
whole data distribution to individual sites, which requires a back and forth communication between the sites and coordinator. As such the results of~\cite{CormodeMuthukrishnanZhuang06} can not be directly translated into a 
provisioning scheme. We also point out that the approximation guarantee we obtain for the \quantileQuery query is stronger 
than the result of~\cite{CormodeMuthukrishnanZhuang06} (i.e. a relative vs additive approximation). 

Finally, we use two examples to establish an exponential separation between the
minimum sketch size in the provisioning model and the minimum
transcript length in the distributed computation model, proving a formal separation between the two models.

Before describing the examples, we should note that,
in the following, we assume that input tuples
are made distinct by adding another column which contains
\emph{unique identifiers}, but the problems and the operations will
only be defined over the part of the tuples without the identifiers. 
For instance, `two sets of tuples are disjoint' means `\emph{after removing the
  identifiers of the tuples,} the two sets are disjoint'.

\paragraph{A problem with poly-size sketches and exponential-size transcripts.}
The following \Setdis problem is well-known to require transcripts of
$\Omega(N)$ bits for any protocol~\cite{Razborov92,bar2002information}: Alice is
given a set $S$ and Bob is given a set $T$ both from the universe
$[N]$, and they want to determine, with success probability at least
$2/3$, whether $S$ and $T$ are disjoint. Similarly, if each of the $k$
sites is given a set and they want to determine whether all their sets
are disjoint, the size of the transcript is also lower bounded by
$\Omega(N)$ bits. If we let $N = 2^k$, the distributed model requires
$\Omega(2^k)$ bits of communication for solving this problem.

However, to provision the \Setdis query (problem), we only need to record
the pairs of hypotheticals whose sets are not disjoint (hence $O(k^2)$
bits). The observation is that if a collection of sets are not disjoint,
there must exist two sets that are also not disjoint, which can be
detected by recording the non-disjoint hypothetical pairs.

\paragraph{A problem with exponential-size sketches and poly-size transcripts.}
Consider a relational schema with two unary relations $A$ and
$B$. Given an instance $I$, we let $a = \sum_{A(x)\in I} 2^x$; then, the
problem is to determine whether $B(a) \in I$ or not. Intuitively, $A$
`encodes' the binary representation of a value $a$, and the problem is
to determine whether $a$ belongs to `the set of values in $B$'.

Let $n = 2^k$, and $I = \set{A(x)_{x \in [k]}} \cup \set{B(y)}_{y \in
  [n]}$. Using a similar construction as
Lemma~\ref{lem:strict-full-cover}, one can show that provisioning this
problem requires a sketch of size $2^{\Omega(k)}$, even when the input
instance is guaranteed to be a subset of $I$ (i.e., the largest value
of $A$ is $k$).

However, in the distributed model, there is a protocol using a
transcript of size $\poly(k)$ bits: every site sends its tuple in $A$
to the coordinator; the coordinator computes $a = \sum_{A(x)} 2^x$ and
sends $a$ to every site; a site response $1$ iff it contains the tuple
$B(a)$, and $0$ otherwise.

\section{Conclusions and Future Work} \label{sec:conc}

In this paper, we initiated a formal framework to study compact
provisioning schemes for relational algebra queries, statistics/analytics
including quantiles and linear regression, and complex queries. We considered
provisioning for exact as well as approximate answers, and established
upper and lower bounds on the sizes of the provisioning sketches.

The queries in our study include quantiles and linear regression
queries from the list of in-database analytics highlighted
in~\cite{HellersteinRSWFGNWFLK12}. This is only a first step and the
study of provisioning for other core analytics problems, such as variance
computation, $k$-means clustering,
logistic regression, and support vector machines is of interest. 

Another direction for future research is the study of queries
in which \emph{numerical} computations follow each other (e.g., when the
linear regression training data is itself the result of aggregations).
Yet another direction for future research is an extension of our model to
allow other kinds of hypotheticals/scenarios as discussed
in~\cite{deutch2013caravan} that are also of practical interest.  For
example, an alternative natural interpretation of hypotheticals 
is that they represent
tuples to be \emph{deleted} rather than
retained. Hence the application of a scenario $S \subseteq [k]$ to $I$
becomes $\app{I}{S}=I\setminus (\bigcup_{i\in S} h_i(I))$. Using our lower bound techniques,
one can easily show that even simple queries like count or sum cannot be approximated to within any multiplicative 
factor under this definition. Nevertheless, it will be interesting to identify query classes that admit compact provisioning in the delete model or alternative natural models.

\clearpage
\bibliographystyle{plain}
\bibliography{ref}

\clearpage
\appendix
\section{Tools from Information Theory}\label{app:info}
We use basic concepts from information theory in our lower bound proof in Section~\ref{sec:full-cover}. For a broader introduction to the field and proofs of the claims in this section, we refer the reader
 to the excellent text by Cover and Thomas~\cite{ITbook}. 

In the following, we denote random variables using capital bold-face letters. 
We denote the \emph{Shannon Entropy} of a random variable $\bA$ by $H(\bA)$ and the \emph{mutual information} of two random variables $\bA$ and $\bB$ by 
$I(\bA;\bB) = H(\bA) - H(\bA \mid \bB) = H(\bB) - H(\bB \mid \bA)$. For a real number $0 \leq x \leq 1$, we further use:
$H_2(x) = x\log{\frac{1}{x}} + (1-x)\log{\frac{1}{1-x}}$ to denote the binary entropy function. 
Finally, we use $\supp{A}$ to denote the support of the random variable $\bA$ and $\card{\bA}:= \log{\card{\supp{A}}}$. 

We use the following basic properties of entropy and mutual information. 
\begin{claim}\label{clm:it-facts}
	Suppose $\bA$, $\bB$, and $\bC$ are random variables and $f : \supp{A} \mapsto \supp{B}$ is a function;
	\begin{enumerate}
		\item \label{part:uniform} $0 \leq H(\bA) \leq \card{A}$; $H(\bA) = \card{A}$ iff $\bA$ is uniformly distributed over its support. 
		\item \label{part:cond-reduce} \emph{Conditioning reduces entropy}, i.e, $H(\bA \mid \bB) \leq H(\bA)$. 
		\item \label{part:info-zero} $I(\bA ; \bB) \geq 0$; equality holds iff $\bA$ and $\bB$ are \emph{independent}. 
		\item \label{part:sub-additivity} \emph{Subadditivity of entropy}:  $H(\bA,\bB) \leq H(\bA) + H(\bB)$. 
		\item \label{part:fano} \emph{(Fano's inequality)} For a \emph{binary} random variable $\bB$, and a function $f$ that predicts $\bB$ based
		on $\bA$, if $\Pr(f(\bA) \neq \bB) = \delta$, then $H(\bB \mid \bA) \leq H_2(\delta)$. 
		\item \label{part:cond-ind} If $\bA$ and $\bB$ are mutually independent conditioned on $\bC$, then $I(\bA ; \bB \mid \bC) = 0$.
		\item \label{part:data-processing} \emph{(Data processing inequality)} $I(f(A) ; B) \leq I(A;B)$. 
	\end{enumerate}
\end{claim}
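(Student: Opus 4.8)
The plan is to derive every item of the claim from one foundational inequality: the nonnegativity of relative entropy (Gibbs' inequality), which states that for two distributions $p,q$ on a common finite support with $p$ absolutely continuous with respect to $q$, $\KLD{p}{q}\ge 0$, with equality iff $p=q$. This is itself a single application of Jensen's inequality to the concave function $\log$, so I would state and prove it first and then obtain the seven items as corollaries, being careful to track the equality case throughout.

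With Gibbs in hand: for~(\ref{part:uniform}), $H(\bA)\ge 0$ holds term by term since $-t\log t\ge 0$ on $[0,1]$, while $\card{A}-H(\bA)=\KLD{p_{\bA}}{\mathrm{Unif}(\supp{A})}\ge 0$ gives the upper bound and the equality condition. For~(\ref{part:info-zero}) one recognizes $I(\bA;\bB)=\KLD{p_{\bA\bB}}{p_{\bA}\otimes p_{\bB}}$, so nonnegativity and ``equality iff independent'' are immediate. Then~(\ref{part:cond-reduce}) is just $H(\bA)-H(\bA\mid\bB)=I(\bA;\bB)\ge 0$, and~(\ref{part:sub-additivity}) combines the chain rule $H(\bA,\bB)=H(\bA)+H(\bB\mid\bA)$ with~(\ref{part:cond-reduce}). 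Item~(\ref{part:cond-ind}) follows by writing conditional mutual information as the $\bC$-average of the relative entropies $\KLD{p_{\bA\bB\mid\bC=c}}{p_{\bA\mid\bC=c}\otimes p_{\bB\mid\bC=c}}$, each summand vanishing under the conditional-independence hypothesis. For the data-processing inequality~(\ref{part:data-processing}), I would expand $I(\bA,f(\bA);\bB)$ by the chain rule two ways: it equals $I(\bA;\bB)+I(f(\bA);\bB\mid\bA)=I(\bA;\bB)$ because $f(\bA)$ is constant given $\bA$, and it equals $I(f(\bA);\bB)+I(\bA;\bB\mid f(\bA))\ge I(f(\bA);\bB)$ by the conditional form of~(\ref{part:info-zero}); comparing the two gives the claim.

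The only item needing more than a line is Fano's inequality~(\ref{part:fano}), which I would prove by introducing the error indicator $\bE=\mathbf{1}[f(\bA)\neq\bB]$ and expanding $H(\bE,\bB\mid\bA)$ in two ways. On one hand it equals $H(\bB\mid\bA)+H(\bE\mid\bA,\bB)=H(\bB\mid\bA)$ since $\bE$ is determined by $(\bA,\bB)$; on the other it equals $H(\bE\mid\bA)+H(\bB\mid\bE,\bA)$, where $H(\bE\mid\bA)\le H(\bE)=H_2(\delta)$ and $H(\bB\mid\bE,\bA)=0$ (given $\bE=0$ we have $\bB=f(\bA)$, a function of $\bA$; given $\bE=1$ and $\bA$ there is only $\card{\supp{B}}-1=1$ remaining value for the binary $\bB$). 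Equating the two expansions gives $H(\bB\mid\bA)\le H_2(\delta)$. I do not expect a genuine obstacle: all seven statements are classical (each appears in Cover and Thomas~\cite{ITbook}), and the only care needed is the bookkeeping for Fano and the equality cases of~(\ref{part:uniform}) and~(\ref{part:info-zero}), which are inherited directly from the equality case of Gibbs' inequality.
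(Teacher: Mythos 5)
Your proposal is correct and complete. The paper itself offers no proof of this claim---it simply defers to Cover and Thomas---and your unified derivation of all seven items from Gibbs' inequality (nonnegativity of relative entropy, with its equality case) is precisely the standard textbook treatment, including the identification $\card{A}-H(\bA)=\KLD{p_{\bA}}{\mathrm{Unif}}$ for part~(\ref{part:uniform}), the two-way chain-rule expansion of $I(\bA,f(\bA);\bB)$ for part~(\ref{part:data-processing}), and the two-way expansion of the joint conditional entropy of the error indicator and $\bB$ for Fano's inequality.
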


For two distributions $\distu$ and $\distv$ over the same probability space, the \emph{Kullback-Leibler} divergence between $\distu$ and $\distv$ is defined as 
$\KLD{\distu}{\distv} = \Ex_{a \sim \distu}\left[\log{\frac{\Pr_\distu(a)}{\Pr_\distv(a)}}\right]$. For our proofs, 
we need the following basic relation between KL-divergence and mutual information. 
\begin{claim}\label{clm:it-KL}
	For any two random variables $\bA$ and $\bB$, suppose $\distu$ denotes the distribution of $\bA$ and $\distv_B$ denotes the distribution of $\bA \mid \bB = B$ for any $B$ in the support of $\bB$;
	then, $\Ex_{B}\bracket{\KLD{\distv_B}{\distu}} = I(\bA;\bB)$. 
\end{claim}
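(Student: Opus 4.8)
The plan is to prove the identity by unfolding the definitions of the Kullback--Leibler divergence and of the expectation over $\bB$, and then recognizing the resulting double sum as the standard ``joint-versus-product-of-marginals'' form of mutual information. Since all random variables in our lower bound live over finite supports, no measure-theoretic care is needed beyond a one-line remark on absolute continuity.

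First I would write, for each $B$ in the support of $\bB$,
\[
\KLD{\distv_B}{\distu} \;=\; \sum_{a \in \supp{A}} \Pr(\bA = a \mid \bB = B)\,\log\frac{\Pr(\bA = a \mid \bB = B)}{\Pr(\bA = a)},
\]
using that $\distv_B$ is the law of $\bA \mid \bB = B$ and $\distu$ is the law of $\bA$. Then I would take the expectation over $\bB$, i.e.\ multiply by $\Pr(\bB = B)$ and sum over $B$, and use $\Pr(\bB = B)\,\Pr(\bA = a \mid \bB = B) = \Pr(\bA = a, \bB = B)$ together with $\Pr(\bA=a\mid\bB=B) = \Pr(\bA=a,\bB=B)/\Pr(\bB=B)$ inside the logarithm to get
\[
\Ex_{B}\bracket{\KLD{\distv_B}{\distu}} \;=\; \sum_{a,B} \Pr(\bA = a, \bB = B)\,\log\frac{\Pr(\bA = a, \bB = B)}{\Pr(\bA = a)\,\Pr(\bB = B)}.
\]

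Finally I would observe that the right-hand side is, by definition, the KL-divergence of the joint law of $(\bA,\bB)$ from the product of its marginals, which is one of the equivalent forms of $I(\bA;\bB)$. Alternatively, to match exactly the definition $I(\bA;\bB) = H(\bA) - H(\bA\mid\bB)$ used in this paper, I would split the logarithm: the term $\sum_{a,B}\Pr(\bA=a,\bB=B)\log\Pr(\bA=a\mid\bB=B)$ equals $-H(\bA\mid\bB)$, and $\sum_{a,B}\Pr(\bA=a,\bB=B)\log\Pr(\bA=a) = \sum_a \Pr(\bA=a)\log\Pr(\bA=a) = -H(\bA)$, so the difference is $H(\bA) - H(\bA\mid\bB) = I(\bA;\bB)$.

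There is essentially no obstacle here; the routine computation is the whole proof. The only points worth a sentence are absolute continuity --- whenever $\Pr(\bA=a\mid\bB=B) > 0$ we also have $\Pr(\bA=a) > 0$, so every summand is well defined, with the usual convention $0\log 0 = 0$ --- and the interchange of the finite sums over $a$ and $B$, which is immediate in our finite-support setting.
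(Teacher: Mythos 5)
Your proof is correct and complete; it is the standard computation, and the paper itself gives no proof of this claim but simply cites Cover and Thomas, where exactly this argument (unfolding the KL-divergence, combining $\Pr(\bB=B)\Pr(\bA=a\mid\bB=B)$ into the joint, and splitting the logarithm to obtain $H(\bA)-H(\bA\mid\bB)$) appears. Your remarks on absolute continuity and the finite-support setting are the right ones to make and nothing further is needed.
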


We denote the \emph{total variation distance} between two distributions $\distu$ and $\distv$ over the same probability space $\Omega$ by $\card{\distu - \distv} = \frac{1}{2} \cdot \sum_{x \in \Omega}\card{\Pr_\distu(x) - \Pr_{\distv}(x)}$. 
The Pinsker's inequality upper bounds the total variation distance between two distributions based on their KL-divergence as follows.

\begin{claim}[Pinsker's inequality] \label{clm:it-KL-TV} 
		For any two distributions $\distu$ and $\distv$, 
		\[
		\card{\distu - \distv} \leq \sqrt{\frac{1}{2} \cdot \KLD{\distu}{\distv}}
		\]
\end{claim}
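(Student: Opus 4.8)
The plan is to reduce Pinsker's inequality to a single‑variable calculus inequality via a data‑processing (coarsening) step. Given distributions $\distu$ and $\distv$ on a space $\Omega$, let $A = \{x \in \Omega : \distu(x) > \distv(x)\}$; from the definition of total variation distance one checks the identity $\card{\distu - \distv} = \distu(A) - \distv(A)$ by splitting $\sum_x(\distu(x) - \distv(x)) = 0$ into its positive and negative parts. Writing $p = \distu(A)$ and $q = \distv(A)$, we have $0 \le q \le p \le 1$ and $\card{\distu - \distv} = p - q$, so it suffices to prove $(p-q)^2 \le \tfrac12 \KLD{\distu}{\distv}$.

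First I would establish the coarsening bound $\KLD{\distu}{\distv} \ge p\log\tfrac pq + (1-p)\log\tfrac{1-p}{1-q}$, i.e. that collapsing $\Omega$ onto the partition $\{A, \bar A\}$ only decreases the divergence. This is the log‑sum inequality $\sum_i a_i\log\frac{a_i}{b_i} \ge \bigl(\sum_i a_i\bigr)\log\frac{\sum_i a_i}{\sum_i b_i}$ — itself Jensen's inequality for the convex function $t \mapsto t\log t$ — applied once to the terms indexed by $A$ and once to those indexed by $\bar A$, and then summed. After this reduction the claim follows from the scalar inequality: for all $0 \le q \le p \le 1$,
\[
  p\ln\tfrac pq + (1-p)\ln\tfrac{1-p}{1-q} \;\ge\; 2(p-q)^2 .
\]
Fixing $p$ and letting $g(q)$ denote the left side minus the right side, we have $g(p) = 0$ and $g'(q) = -\tfrac pq + \tfrac{1-p}{1-q} + 4(p-q) = (p-q)\bigl(4 - \tfrac1{q(1-q)}\bigr) \le 0$ on $(0,p]$, since $q(1-q) \le \tfrac14$; hence $g$ is non‑increasing there and $g(q) \ge g(p) = 0$. (The boundary cases $q \in \{0,1\}$ are trivial, the divergence being $+\infty$ or both sides $0$.) Since the paper's logarithms are base two and $\ln 2 < 1$, the base‑two divergence dominates the natural‑log one, so $\tfrac12\KLD{\distu}{\distv} \ge \tfrac12\bigl(p\ln\tfrac pq + (1-p)\ln\tfrac{1-p}{1-q}\bigr) \ge (p-q)^2$; taking square roots finishes the proof.

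The only genuine work is the scalar inequality, and even that is painless once one spots the common factor $(p-q)$ in $g'$ together with the elementary bound $q(1-q) \le \tfrac14$; everything else is bookkeeping — the two‑point reduction and the harmless switch from natural to base‑two logarithms, which only strengthens what is proved.
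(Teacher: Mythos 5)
Your proof is correct and complete: the two-point (coarsening) reduction via the log-sum inequality, followed by the scalar inequality $p\ln\tfrac pq+(1-p)\ln\tfrac{1-p}{1-q}\ge 2(p-q)^2$ established by differentiating in $q$, is the standard textbook proof of Pinsker's inequality, and each step checks out (including the identity $\card{\distu-\distv}=\distu(A)-\distv(A)$ and the boundary cases $q\in\{0,1\}$). The paper itself gives no proof of this claim --- it is stated in the appendix as a known fact with a pointer to Cover and Thomas --- so your self-contained argument is strictly more than what the paper provides. The one subtlety worth flagging, which you handled correctly, is the base of the logarithm: the paper's $\KLD{\distu}{\distv}$ is in base two, which only increases the right-hand side relative to the natural-log version your calculus argument actually establishes.
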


Finally, 
\begin{claim} \label{clm:it-TV}
	Suppose $\distu$ and $\distv$ are two probability distributions for a random variable $\bA$ and $J$ is any fixed set in the domain of $\bA$; then $\Pr_{\distu}(\bA \in J) \leq \Pr_{\distv}(\bA \in J) + \card{\distu-\distv}$. 
\end{claim}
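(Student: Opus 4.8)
The plan is to prove this via the standard variational characterization of total variation distance, specialized to the set $J$. First I would recall the definition $\card{\distu - \distv} = \frac{1}{2}\sum_{x}\card{\Pr_\distu(x) - \Pr_\distv(x)}$, where the sum ranges over the domain $\Omega$ of $\bA$ (the series converges absolutely since the total mass of each distribution is $1$). It then suffices to establish $\Pr_\distu(\bA \in J) - \Pr_\distv(\bA \in J) \leq \card{\distu - \distv}$, since this rearranges immediately to the claimed inequality.

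The key step is an identity for $\card{\distu - \distv}$. Let $\Omega^+ = \set{x \in \Omega : \Pr_\distu(x) \geq \Pr_\distv(x)}$ be the set of outcomes on which $\distu$ puts at least as much mass as $\distv$. Because $\distu$ and $\distv$ are both probability distributions, $\sum_{x}\big(\Pr_\distu(x) - \Pr_\distv(x)\big) = 0$, so the total positive excess of $\distu$ over $\Omega^+$ equals the total deficit of $\distu$ over $\Omega \setminus \Omega^+$ in absolute value; call this common quantity $\Delta$. Splitting the defining sum of $\card{\distu - \distv}$ into its contributions from $\Omega^+$ and its complement then gives $\card{\distu - \distv} = \tfrac{1}{2}(\Delta + \Delta) = \Delta$.

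Finally I would bound $\Pr_\distu(\bA \in J) - \Pr_\distv(\bA \in J) = \sum_{x \in J}\big(\Pr_\distu(x) - \Pr_\distv(x)\big)$ from above by first discarding the terms with $x \in J \setminus \Omega^+$ (each of which is non-positive) and then adding the missing terms for $x \in \Omega^+ \setminus J$ (each of which is non-negative), obtaining $\sum_{x \in \Omega^+}\big(\Pr_\distu(x) - \Pr_\distv(x)\big) = \Delta = \card{\distu - \distv}$. Combined with the previous paragraph this is exactly the desired bound, and rearranging yields $\Pr_\distu(\bA \in J) \leq \Pr_\distv(\bA \in J) + \card{\distu-\distv}$.

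There is no substantive obstacle here; the only points requiring a little care are the sign bookkeeping when splitting sums over $\Omega^+$ versus its complement, and — if one wants full rigor for a countably infinite domain — observing that all the series involved are absolutely convergent, so these rearrangements are legitimate.
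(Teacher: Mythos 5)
Your proof is correct. The paper itself does not prove this claim --- it is listed among the ``Tools from Information Theory'' in the appendix with proofs deferred to Cover and Thomas --- so there is no in-paper argument to compare against; your argument is the standard one, establishing $\card{\distu-\distv}=\max_{J}\bigl(\Pr_\distu(\bA\in J)-\Pr_\distv(\bA\in J)\bigr)$ via the split of the domain into $\Omega^+$ and its complement, and the sign bookkeeping in your final step is handled correctly.
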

\end{document}